\theoremstyle{thmstyleone}%
\newtheorem{theorem}{Theorem}
\newtheorem{lemma}{Lemma}
\theoremstyle{thmstyletwo}%
\theoremstyle{thmstylethree}%
\newtheorem{definition}{Definition}%
\begin{document}

\title[POLYLLA: Polygonal meshing algorithm based on terminal-edge regions]{POLYLLA: Polygonal meshing algorithm based on terminal-edge regions}


\author*[1]{\fnm{Sergio} \sur{Salinas}}\email{ssalinas@dcc.uchile.cl}
\equalcont{These authors contributed equally to this work.} 
\author*[1]{\fnm{Nancy} \sur{Hitschfeld-Kahler} \dgr{} }\email{nancy@dcc.uchile.cl}
\equalcont{These authors contributed equally to this work.}
\author[2]{\fnm{Alejandro} \sur{Ortiz-Bernardin} \dgr{}}\email{aortizb@uchile.cl}
\author[3]{\fnm{Hang} \sur{Si} \dgr{}}\email{si@wias-berlin.de}

\affil*[1]{\orgdiv{Department of Computer Sciences}, \orgname{Universidad de Chile}, \orgaddress{\\\street{Av. Beauchef 851}, \city{Santiago}, \postcode{8370456}, \state{RM}, \country{Chile}}}

\affil[2]{\orgdiv{Computational and Applied Mechanics Laboratory, Department of Mechanical Engineering}, \orgname{Universidad de Chile}, \orgaddress{\street{Av. Beauchef 851}, \city{Santiago}, \postcode{8370456}, \state{RM}, \country{Chile}}}

\affil[3]{\orgdiv{Numerical Mathematics and Scientific Computing}, \orgname{Weierstrass Institute for Applied Analysis and Stochastics}, \orgaddress{\street{Mohrenstr. 39}, \city{Berlin}, \postcode{10117}, \state{Berlin}, \country{Germany}}}


\abstract{This paper presents an algorithm to generate a new kind of polygonal mesh obtained  from triangulations. Each polygon is built from a terminal-edge region surrounded by edges that are not the longest-edge of any of the two triangles that share them. The algorithm is termed Polylla and is divided into three phases.  The first phase consists of labeling each edge of the input triangulation according to its size; the second phase builds polygons (simple or not) from terminal-edges regions using the label system; and the third phase transforms each non simple polygon into simple ones. The final mesh contains polygons with convex and non convex shape. Since Voronoi based meshes are currently the most used polygonal meshes, we compare some geometric properties of our meshes against constrained Voronoi meshes. Several experiments were run to compare the shape and size of polygons, the number of final mesh points and polygons. For the same input, Polylla meshes contain less polygons than Voronoi meshes and the algorithm is simpler and faster than the algorithm to generate constrained Voronoi meshes. Finally, we have validated  Polylla meshes by solving the Laplace equation on an L-shaped domain using the Virtual Element Method (VEM). We show that the numerical performance of the VEM using  Polylla meshes and Voronoi meshes is similar.}

\keywords{Polygonal mesh,Terminal-edge region, Virtual element method, Delaunay triangulations}



\maketitle

\section*{Article Highlights}

\begin{itemize}
\item A simple and automatic tool to generate polygonal meshes 
composed of convex or/and non-convex polygons. 
\item The polygonal meshes are composed of less polygons and points than constrained Voronoi meshes for the same input 
\item A new kind of polygonal meshes  for the virtual element method. 

\end{itemize}

\section{Introduction}\label{sec1}

Meshes composed of  triangles and quadrilaterals are common in simulations using the Finite Element Method (FEM). One of the main requirements is that polygons (elements)  need to obey specific quality criteria such as  angles not too large or small, or reasonable aspect ratio and  area,  among others. To fulfill these criteria, sometimes the insertion of a large number of points and elements is required in order to properly model a domain, increasing the time needed to make a simulation. New methods such as the Virtual Element Method (VEM)~\cite{Basisprinciples,Brezzi2015} can use any polygon as basic cell. Our main research interest is to explore how far the VEM can handle non-convex and convex polygons and still be able to compute accurate simulation results. Our goal is to build a new tool that allows the VEM community~\cite{Wriggers2019} to model and simulate more complex problems than before, both in 2D and 3D.

In this context, the VEM presents an opportunity to explore new kind of polygonal meshes and new algorithms to generate them. 
Our main research questions are: Can terminal-edge regions~\footnote{A terminal-edge region is formed by all triangles whose Longest-Edge Propagation Path (Lepp)~\cite{Rivara97} share the same terminal-edge.} be adapted to be used as good basic cells for polygonal numerical methods such as the VEM? Do this kind of meshes require less polygons to model the same problem than polygonal meshes based on the Voronoi diagram? Our hypotheses are: (i) Terminal-edge regions can be transformed into simple polygons and used  as basic cells, (ii) the domain geometry can be fitted using less elements than constrained Voronoi meshes, (iii) this kind of polygonal meshes can be used with the VEM. 

In this paper, we propose an algorithm to generate meshes that adapt to a geometric domain specified through  Planar line straight graph (PLSG) using  polygons of any shape, and respecting the  point distribution given as input. The algorithm reads as input a triangulation and uses the concept of terminal-edge region as basis to build polygons. 
Terminal-edge regions can generate non-simple polygons, so we also propose an algorithm to divide  them into simple ones. We  run several experiments to show  properties of these polygons and compare the generated meshes against constrained Voronoi meshes. Moreover, we  validate the polygonal meshes over a classical problem to show that these meshes can be used to solve problems with the VEM.   As an example, Fig. \ref{fig:logo} shows a polygonal mesh generated  with the algorithm proposed in this paper.

\begin{figure}
\centering
\includegraphics[width=0.8\linewidth]{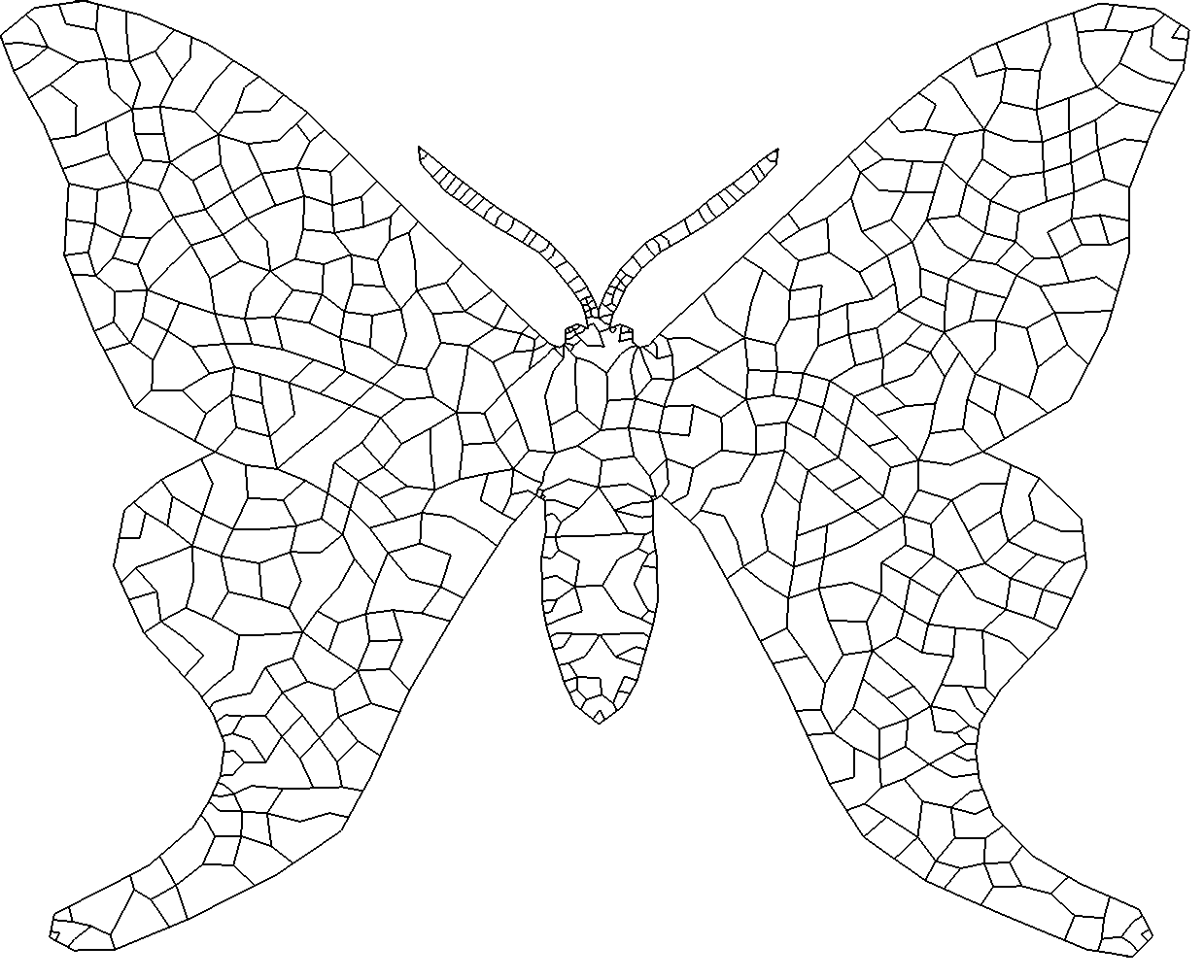} 
\caption{Polygonal mesh of a Chilean moth (Polilla Venusta) composed of 513 polygons, 1351 vertices and 2455 edges. The original PLSG contains 616 points and segments. A conforming Delaunay triangulation with  maximum edge size of 160 was generated using Pygalmesh~\cite{Schlomer_pygalmesh_Python_interface}. The resulting triangulation contains 1351 vertices (the same of the output), 2083 triangles and 3432 edges.}
\label{fig:logo}    
\end{figure}

Any triangulation can be used as input but through this paper the polygonal meshes are generated from a Delaunay triangulation. It is known that  Delaunay triangulations are the ones that maximize the smallest angle among all the triangulations of a  point set. Since the proposed algorithm does not divide any input triangle, the smallest angle of the triangulation is a lower bound for the minimum interior angle of the polygonal mesh.

%

The main contributions of this paper are:
\begin{itemize}
    \item A simple and automatic tool to generate polygonal meshes composed of convex or/and non-convex polygons, fitting exactly the input domain and respecting the initial point distribution. 
    \item A new kind of polygonal meshes composed of less polygons and points than constrained Voronoi  meshes for the same input.
    \item The algorithm benefits from robust tools such as Detri2d or Triangle to generate the initial constrained Delaunay triangulation  in similar way some constrained Voronoi meshing algorithms do.  Voronoi meshing algorithms require to compute new points (Voronoi points),  cut non-bounded  Voronoi regions to fit the domain  and  insert new points at the domain boundary  to create the constrained Voronoi mesh. So the proposed algorithm is faster than the algorithm to generate constrained Voronoi meshes.
    
\end{itemize}
\noindent
This paper is organized as follows: Section~\ref{sec:relatedwork} presents and discusses the-state-of-art;  Section~\ref{sec:basic} introduces the basic concepts  that explains the algorithm;  Section~\ref{sec:the_algorithm} describes the main steps of the proposed algorithm  and the used data structure; in Section~\ref{sec:experimental_evaluation} we analyse some geometric features of the generated meshes and gives a comparison against the constrained  Voronoi meshes; Section~\ref{sec:simulation_results} shows a preliminary assessment of the meshes in the Virtual Element Method (VEM) and Section~\ref{sec:conclusions} presents our  conclusions and ongoing work.

\section{Related work}
\label{sec:relatedwork}



Mesh generation refers to the methods used to discretize a geometric domain into smaller elements without overlap. Those methods has been widely studied due to their importance in science and engineering. Meshes are used in geographic information systems~\cite{PrinciplesGIS}, computer vision~\cite{JOHNSON1998261} and  numerical methods~\cite{HOLE198827}, among other applications. Common methods used to generate unstructured polygon meshes are the Delaunay methods \cite{cheng2013delaunay}, Voronoi diagram methods \cite{yan:hal-00647979, 2dcentroidalvoro, PolyMesher2012}, advancing front method \cite{lohner1996progress, Schberl1997NETGENAA}, quadtree based methods \cite{BERN1994384, bommes2013quad} and hybrids methods \cite{owen1999q, ito2002unstructured}, among others. In general, meshing algorithms can be classified into two groups \cite{owen1998survey, johnen2016indirect}: (i) direct algorithms:  meshes are generated from the input geometry, and (ii) indirect algorithms: meshes are generated starting from an input mesh, typically an initial triangle mesh. Indirect methods is a common approach to generate quadrilateral meshes by mixing triangles of an initial triangulation ~\cite{LeetreetoQuad, BlossonQuad, Merhof2007Aniso-5662}. The advantage of using indirect methods is that currently triangular meshes are easy to generate because several robust open source and free tools~\cite{triangle2d, Detri2, qhull,  cgal:y-t2-21b} are available. Polylla mesh generation is an indirect method as it is based  on mixing triangles from an initial triangulation. 

In standard FEM, the most used 2D meshes are triangulations~\cite{triangle2d,Detri2, chew1} and quadrilateral meshes~\cite{Canann98, Lee20031055, Owen98advancingfront}. 2D Mixed meshes composed of triangles and quadrilaterals have also been used, but are not so common as the previous ones~\cite{jaillet2021}. Other numerical methods such as  the Voronoi Cell Finite Element Method (VCFEM) \cite{ReviewFEM} and Polygonal Finite Element method (PFEM)\cite{Chi2015PolygonalFE}  use the constrained  Voronoi diagram as the polygonal mesh, where the Voronoi cells are the mesh elements~\cite{YanWLL10,UniformRandomVoronoiMeshes,SiegerAB10}. 
 
The generalization of finite element methods to include polygons as part of the mesh elements is not recent~\cite{zbMATH03504364, Sukumar2006,tabarraei2008extended}.  Polygonal elements are usually generated by using a quadtree approach and from Voronoi based algorithms. In particular, the {\sc vem} was introduced in the last decade  ~\cite{Basisprinciples,Brezzi2015} and since then, several research groups have been developing computational frameworks
in order to explore  new problems. To mention some  applications, the {\sc vem} has been formulated and applied to solve linear elastic and inelastic solid mechanics problems~\cite{da2015virtual}, in fluid mechanics~\cite{ErnestoBRinkmanFluid, ErnestoQuasiNewtwon}, in the optimization
of a fluid problem through a discrete network~\cite{benedetto2014virtual}, for compressible and incompressible finite deformations~\cite{Wriggers2017,Wriggers2017a} and brittle crack propagation~\cite{HUSSEIN201915}. 

VEM has a flexibility in dealing with complex cell shapes that can even be non convex and have an arbitrary number of vertices~\cite{Aldakheel_2019}. As an example, meshes composed of animal shape polygons were used for  crack propagation to show that  the {\sc vem}  can deal with non-convex element shapes~\cite{Wriggers2019}. Similar experiments using the {\sc vem} in  meshes composed of high irregular shapes are described in \cite{PARK2019669, CHI2017148}. 

To our knowledge, no algorithm for the automatic generation of 2D meshes formed by polygons (convex or not) has been published for the {\sc vem} in arbitrary domains. We recently developed a tool to  generate a particular kind of polygonal meshes in the context of modeling the rock packing problem  inside a square container \cite{PackingJoaquin}. The rocks were modeled as convex polygons and the space  with general polygons. We ran preliminary simulations using the {\sc vem} and also experienced that the {\sc vem} is very robust under  irregular polygonal shapes~\cite{PackingJoaquin}.

%


\section{Basic concepts}
\label{sec:basic}

The proposed polygonal meshing algorithm is based on two concepts: Longest-edge propagation path (Lepp) introduced in~\cite{Rivara97} and terminal-edge regions defined in~\cite{Ascom209, Ojeda2018ANA}. We briefly review these concepts and some related properties in this section.

\subsection{Terminal-edge regions}

In any triangulation, triangles can be grouped under the Longest-edge propagation path concept defined as follows:

\begin{definition}{\textbf{Longest-edge propagation path }~\cite{Rivara97}:}\label{d:lepp}  
For any triangle $t_0$ in any triangulation $\Omega$, the Lepp($t_0$)
is the ordered list of all the triangles $t_0$, $t_1$, $t_2$, ..., $t_{l-1}$, $t_{l}$, such that $t_{i}$ is the neighbor triangle of $t_{i-1}$  by the longest-edge of $t_{i-1}$, for $i = 1,2,\dots,l$. The longest-edge shared by $t_{l-1}$ and $t_l$ is a terminal-edge and $t_{l-1}$ and  $t_l$ are terminal-triangles. As an example, in Fig.~\ref{fig:leppex1}  the Lepp($t_0$) is shown in blue, its terminal edge is shown in red  and its terminal triangles labeled as $t_1$ and $t_2$.
\end{definition}

\begin{figure}[h]
\centering     
\subfigure[
]{\label{fig:leppex1}\includegraphics[width=0.3\textwidth]{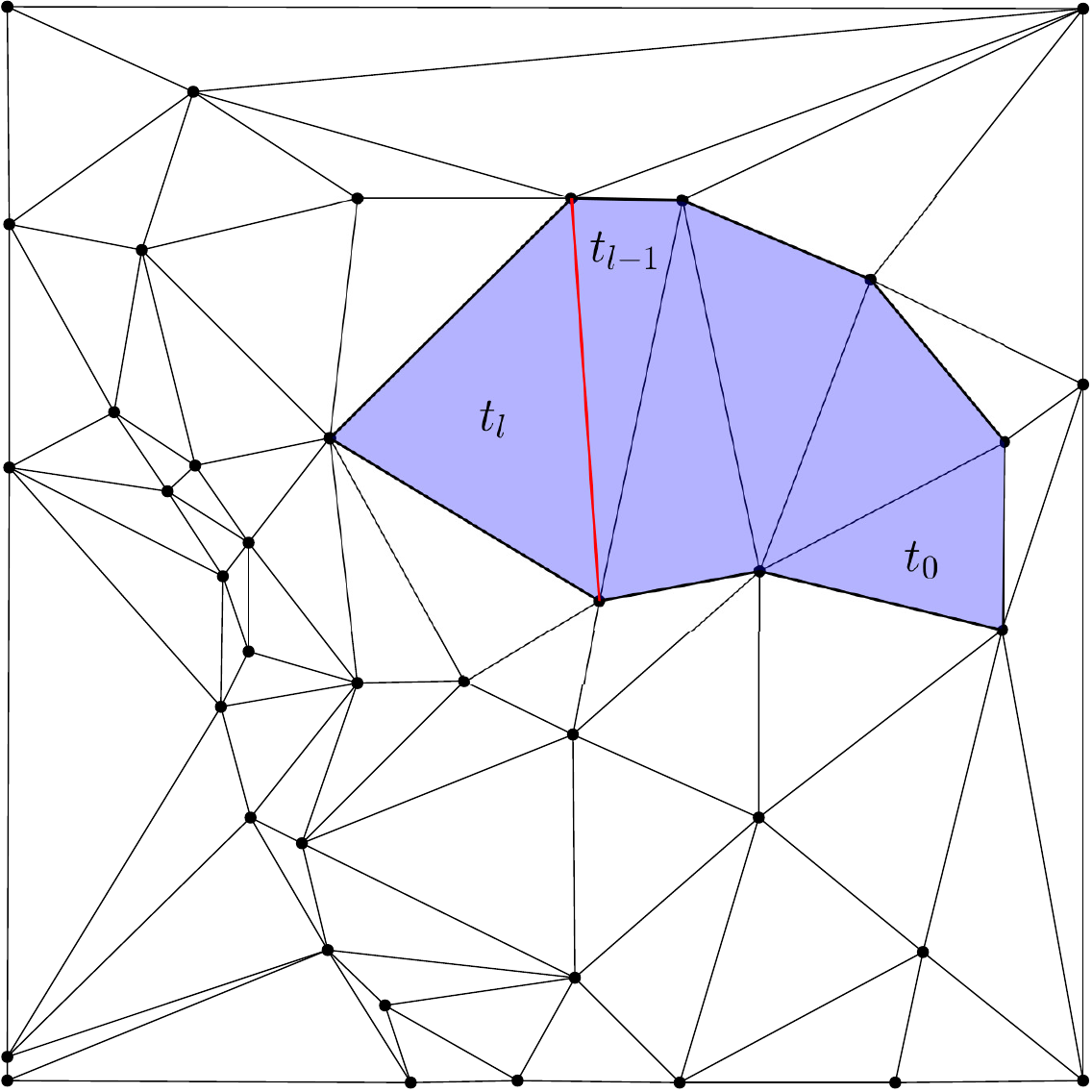}} 
\subfigure[
]{\label{fig:leppex2}\includegraphics[width=0.3\textwidth]{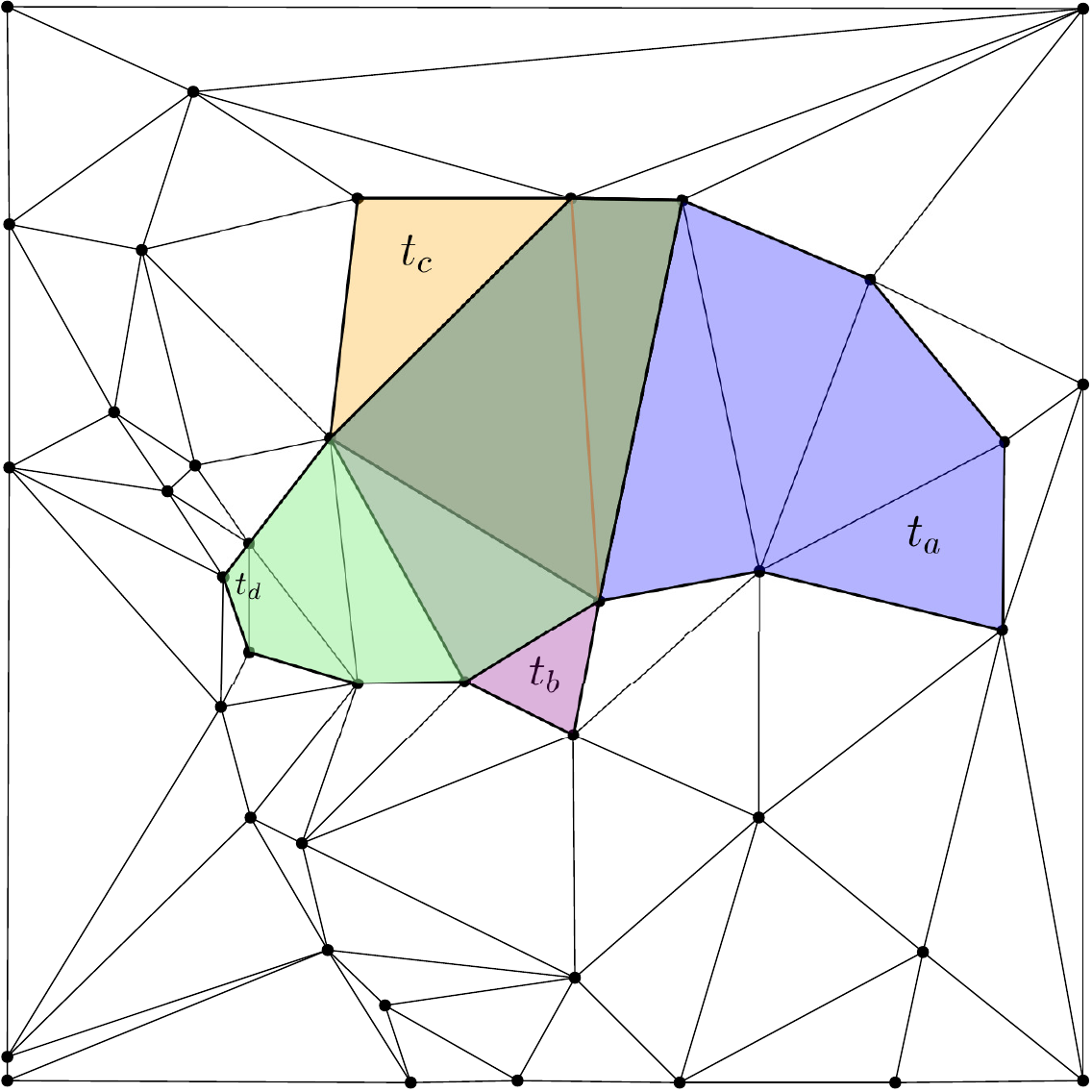}}
\subfigure[
]{\label{fig:leppex3}\includegraphics[width=0.3\textwidth]{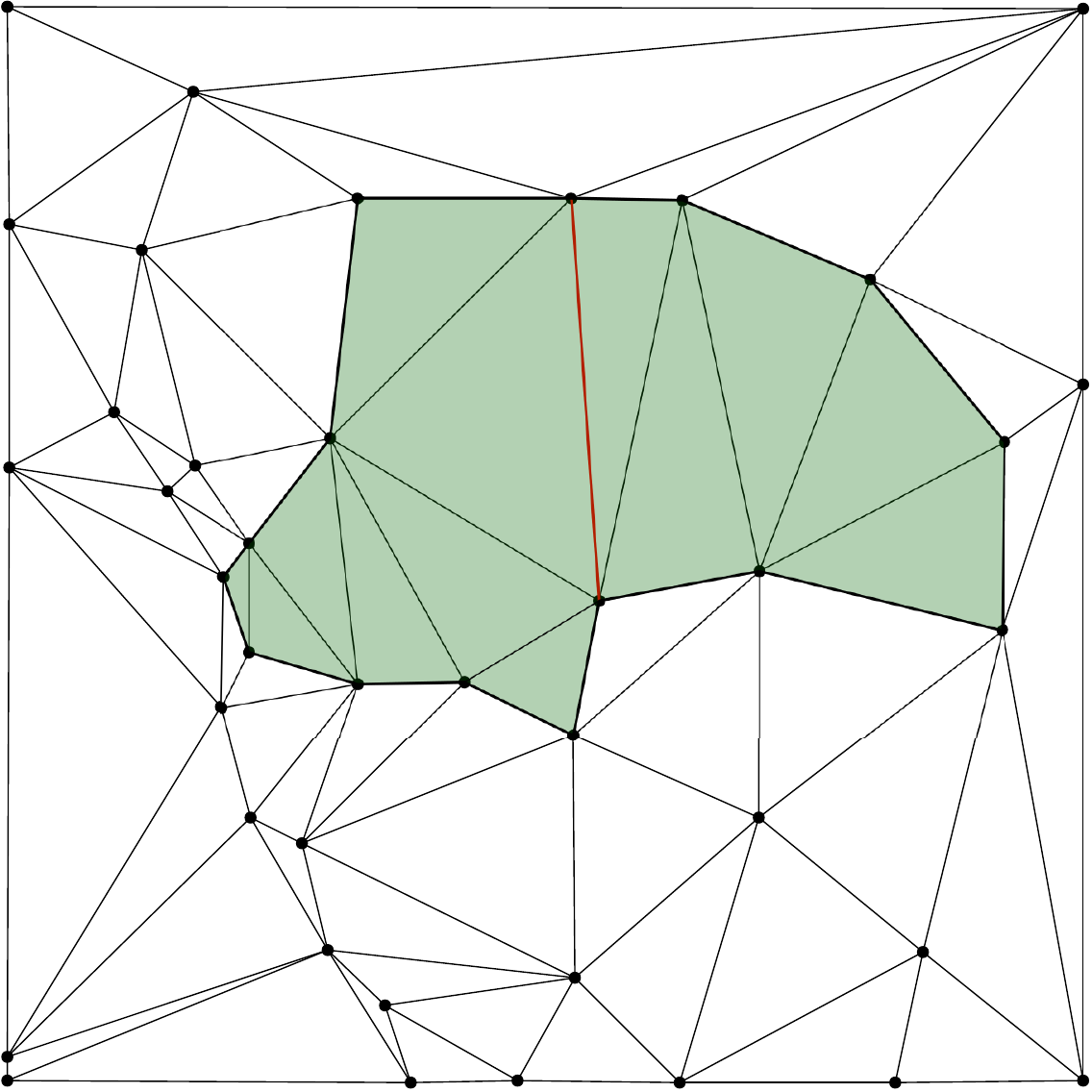}}
\caption{Terminal-edge region. \textbf{(a)} Longest-edge propagation of $t_0$ where the red edge is the terminal-edge. \textbf{(b)} Four Lepps: Lepp($t_a$), Lepp($t_b$), Lepp($t_c$) and Lepp($t_d$), with the same terminal-edge. \textbf{(c)} Terminal-edge region generated by the union of Lepp($t_a$), Lepp($t_b$), Lepp($t_c$) and Lepp($t_d$).}
\label{fig:leppexample} 
\end{figure}

\noindent
A triangle edge can be classified according to its  length inside the two triangles that share it. Therefore, given an edge $e$ and two triangles $t_1$, $t_2$ that share $e$, we can label  $e$ as: 

\begin{itemize}
    \item Terminal-edge~\cite{Rivara97}: $e$ is the longest-edge of $t_1$ and $t_2$.
    \item Frontier-edge~\cite{Ascom209}: $e$ is neither the longest-edge of $t_1$ nor $t_2$. 
    \item Internal-edge:  $e$ is the longest edge of $t_1$ but not of $t_2$ or vice-versa.
    \item Boundary edge: $e$ belongs to one triangle. In the proposed algorithm boundary edges are handled as frontier edges.
\end{itemize}

\noindent
It is worth mentioning that edges of equal length can appear. Therefore, in case there are equilateral  triangles, each edge is chosen arbitrarily as the longest-, the middle- and smallest-edge. When there are isosceles triangles, something similar is done for equal size edges.

\begin{definition}{\textbf{Terminal-edge region~\cite{Ascom209}:}}
\label{d:terminaledgeregion}
A {\em terminal-edge region} $R$ is a region formed by the union of all triangles $t_i$ such that Lepp($t_i$) has the same terminal-edge.  In case the terminal-edge region  is partially delimited by boundary edges the region will be called {\em boundary terminal-edge region}. Fig. \ref{fig:leppexample}(c) shows the terminal-edge region formed by the union of  Lepp($t_a$), Lepp($t_b$), Lepp($t_c$) and Lepp($t_d$). 
\end{definition}

\noindent
We have already used  the concept of terminal-edge region for finding polygonal voids (holes) inside a point set~\cite{HerviasHCF13,Ascom209} and proven some geometric properties. In order to facilitate understanding of the algorithm we are recalling the most important here:

\begin{itemize}
    \item Terminal-edge regions are surrounded by frontier-edges~\cite{Ascom209}.
    \item Terminal-edge regions cover the whole domain without overlapping~\cite{Ascom209,Ojeda2018ANA}.
    \item Terminal-edge-regions might include frontier-edges in their interior. We have called this kind of frontier-edge a \textbf{barrier-edge}~\cite{Ascom209,Ojeda2018ANA}.
  
\end{itemize}
\noindent



\noindent
Let us use Fig. \ref{fig:general_example} to illustrate an example of a partition generated by terminal-edge regions. Fig.~\ref{fig:initialpoinset} shows an input point set; Fig.~\ref{fig:labelsystem} shows the Delaunay triangulation of this point set where terminal-edges are drawn using red dashed lines, internal-edges using black dashed lines and frontier-edges using  solid lines. Fig.~\ref{fig:terminalregion} shows the polygons defined by  terminal-edge regions using a different color, where each polygon is delimited by  frontier edges. It can be noticed that the green polygon is a non-simple polygon because it includes a barrier-edge. \noindent
For simplicity, in the case  $e$ is a domain boundary edge, $e$ will be considered a frontier-edge too (see the edges belonging to the square in Figs. \ref{fig:labelsystem} and \ref{fig:terminalregion}). 

\begin{figure}[h]
\centering     
\subfigure[
]{\label{fig:initialpoinset}\includegraphics[width=0.3\textwidth]{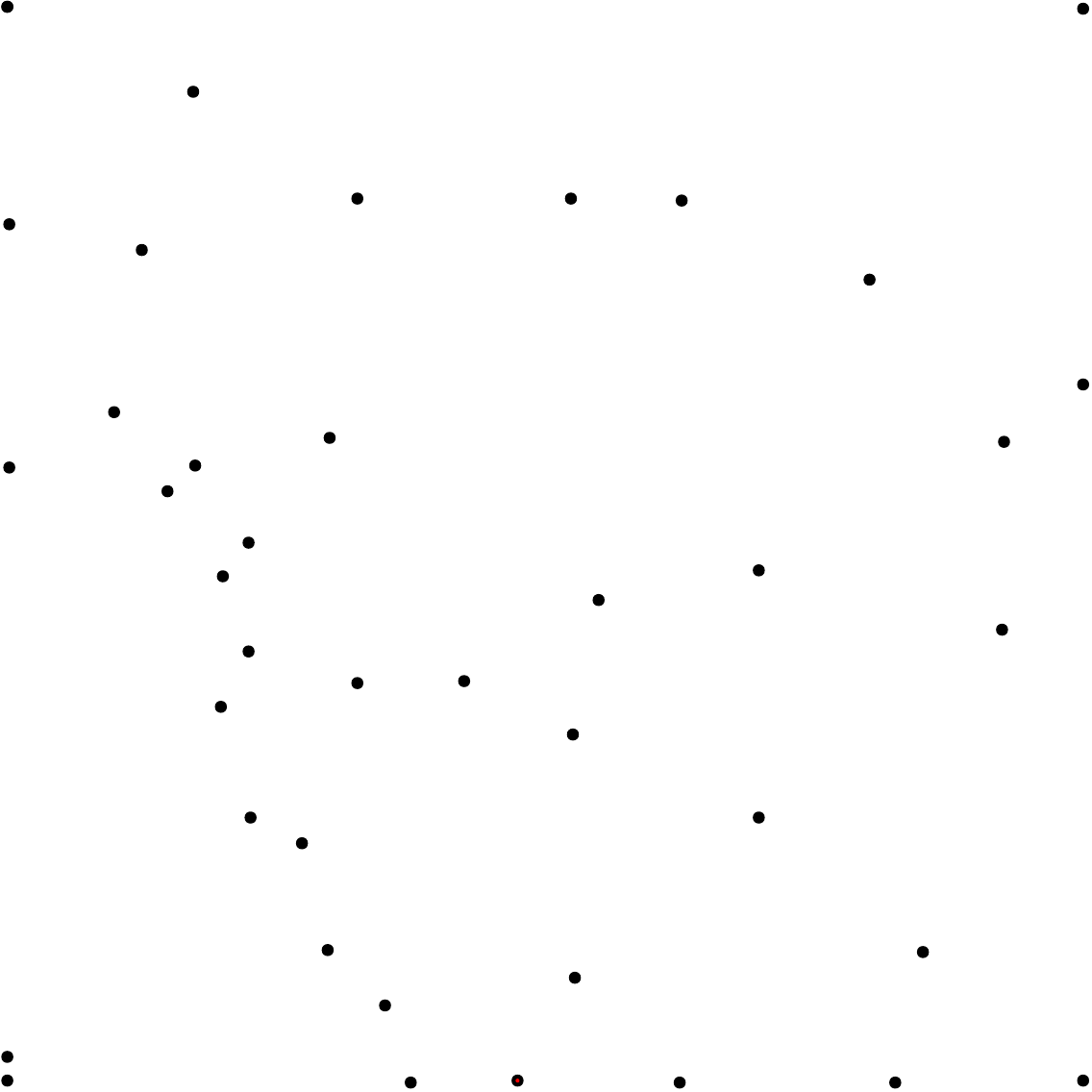}} 
\subfigure[
]{\label{fig:labelsystem}
\includegraphics[width=0.3\textwidth]{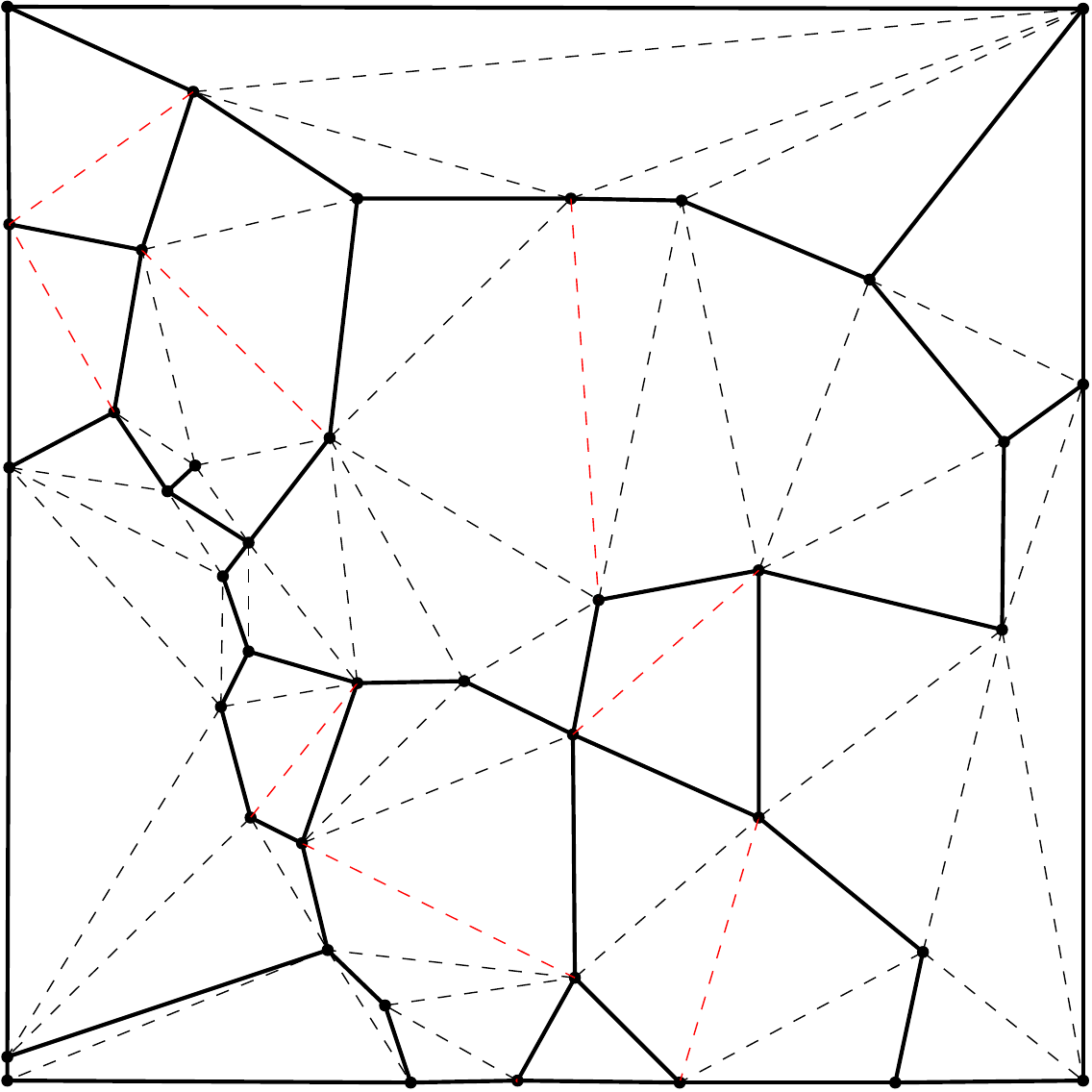}}\subfigure[
]{\label{fig:terminalregion}\includegraphics[width=0.3\textwidth]{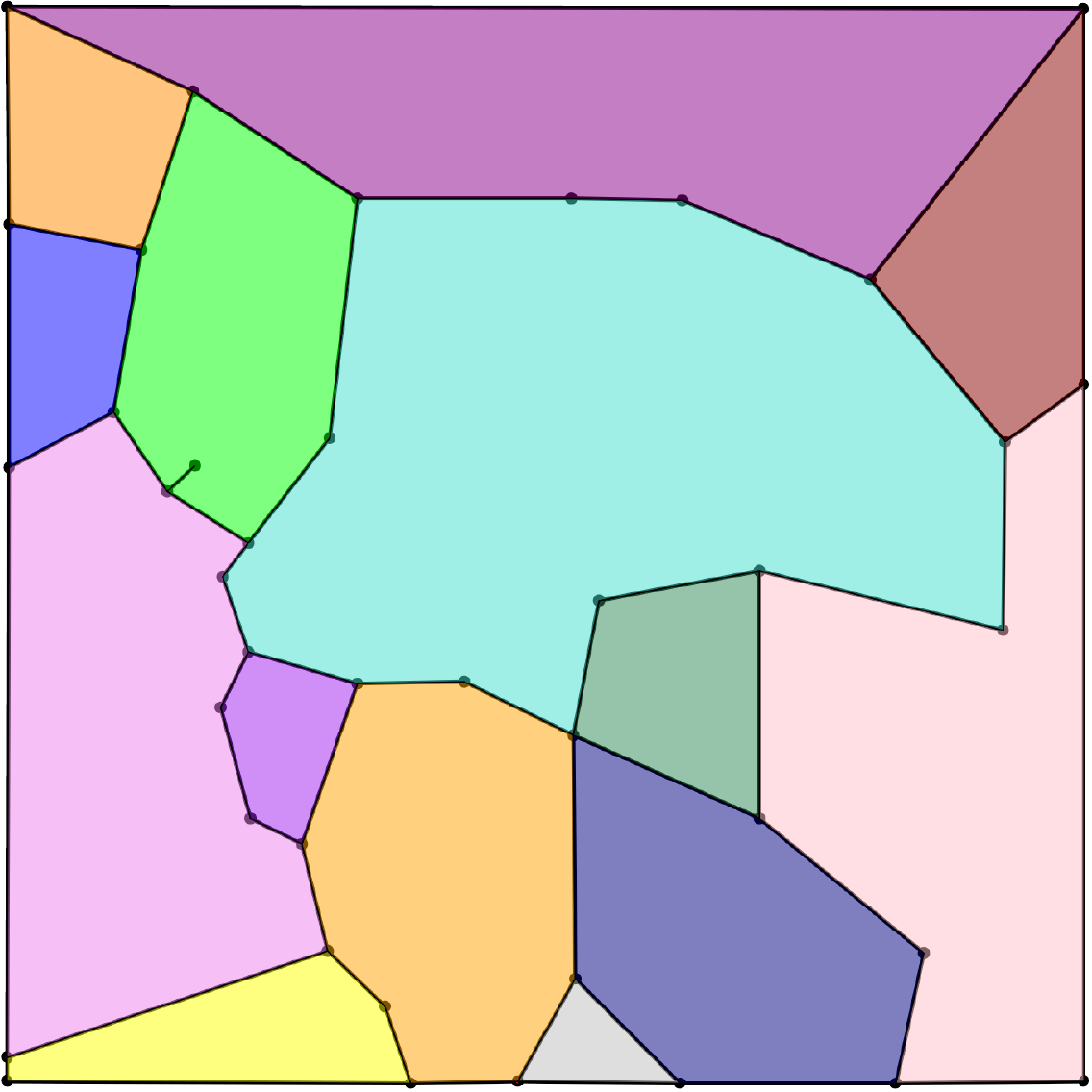}}
\caption{(a) Initial random point set. (b) Delaunay triangulation: Solid lines are frontier-edges, dashed black lines are internal-edges and red dashed lines are terminal-edges. (c) Polygon partition from terminal-edge regions.}
\label{fig:general_example} 
\end{figure}

\subsection{Terminal-edge regions as  polygons}


As we have said in the previous section, terminal-edge regions generate a polygon partition of the domain without overlapping. Depending on the point distribution,  polygons can be simple or  non-simple polygons. Non-simple polygons appear when terminal-edge regions include barrier-edges.  As observed in Fig.~\ref{fig:general_example}, the domain was tessellated into 14 polygons where only the green region is represented by a  non-simple polygon because it includes one barrier-edge. 


In order to build a conforming polygon tessellation from the partition generated by the terminal-edge regions, non-simple polygons must be divided into simple polygons. That requires the integration of barrier-edges as part of the boundary of new simple polygons.  Since this requirement is part of the meshing algorithm we are proposing in Section~\ref{sec:the_algorithm},  beforehand, we need to introduce some definitions and prove some properties that will sustain the correctness of the algorithm.

\begin{definition}{\textbf{Barrier-edge tip:}}
\label{d:barrier-edge}
A barrier-edge tip in a terminal-edge region $R$ is a barrier-edge endpoint shared by no other barrier-edge/frontier-edge.
\end{definition}

 
Fig. \ref{figs:kindofbarrieredges} includes  two polygons with barrier-edges; the barrier-edge tips are shown in  green. Fig. \ref{fig:1be} shows a case of a polygon with one barrier-edge tip and Fig. \ref{fig:2be} a case with two barrier-edge tips.
We have observed that each point of the input data is an endpoint of a frontier-edge or barrier-edge. This means that terminal-edge regions have none isolated interior points.
Theorem~\ref{D:theoreemvertices} demonstrates this property.

\begin{figure}[h]
\centering  
\subfigure[
] {\label{fig:1be}\includegraphics[width=0.15\textwidth, ]{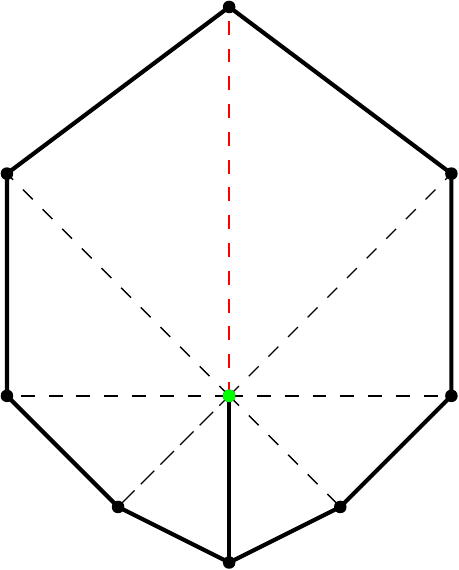}}\hspace{2cm}
\subfigure[
]{\label{fig:2be}\includegraphics[width=0.15\textwidth, angle =60]{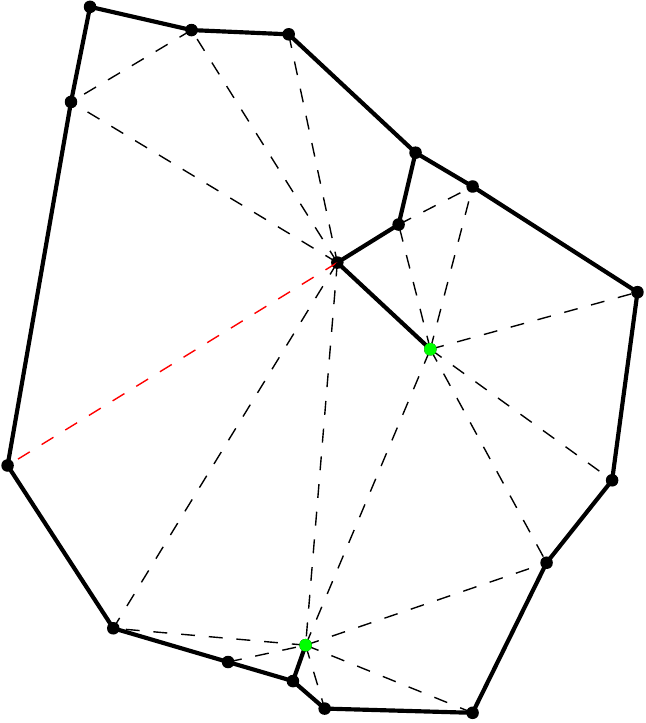}}\hspace{0.5cm}

\caption{Examples of non-simple polygons. Black lines are frontier-edges, dashed black lines are internal-edges and red edges are terminal-edges. \textbf{(a)} One barrier-edge and one tip \textbf{(b)} Four barrier-edges and two tips tip }
\label{figs:kindofbarrieredges} 
\end{figure}


\begin{theorem}
\label{D:theoreemvertices}
Let   $\Omega$ be a triangulation with the set of vertices $V$ in general position. Then, each vertex $v$ is an end-point of at least one  of  the  frontier- or barrier-edges, i.e. there is no isolated interior points (vertices incident only to internal-edges).
\end{theorem}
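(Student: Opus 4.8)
The plan is to exhibit, for an arbitrary vertex $v \in V$, one concrete incident edge that is forced to be a frontier-edge (which is automatically a barrier-edge when it happens to lie in the interior of a terminal-edge region). The natural candidate is the \emph{shortest} edge incident to $v$. So first I would fix $v$ and let $e = (v,w)$ be the shortest among all edges having $v$ as an endpoint. Because the vertices are in general position — and using the tie-breaking convention for equal-length edges introduced in Section~\ref{sec:basic} — this shortest incident edge is well defined, and every other edge emanating from $v$ is at least as long as $e$.

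Next I would inspect the one or two triangles that contain $e$. Let $t = (v,w,u)$ be any such triangle. Apart from $e$, the triangle $t$ has exactly one further edge incident to $v$, namely $(v,u)$, and since $u \neq w$ this edge is strictly longer than $e$ by the minimality of $e$. Consequently $e$ is not the longest edge of $t$. This reasoning applies verbatim to both triangles sharing $e$ when $e$ is interior, so $e$ is the longest edge of neither of its adjacent triangles; by definition this makes $e$ a frontier-edge. If instead $e$ lies on the domain boundary it belongs to a single triangle and is treated as a frontier-edge by convention, so in every case $e$ is a frontier-edge.

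Finally I would conclude the theorem: since $e$ is a frontier-edge and $v$ is one of its endpoints, $v$ is incident to a frontier-edge (a barrier-edge precisely when $e$ sits inside a terminal-edge region). As $v$ was arbitrary, no vertex is incident only to internal-edges, i.e. there are no isolated interior points. The one step I would treat most carefully — the only real obstacle — is the well-definedness of ``the shortest incident edge'' in the presence of equal edge lengths: I would dispatch it by invoking general position to ensure that the two edges of $t$ emanating from $v$ have distinct lengths (so the inequality $\lvert (v,u)\rvert > \lvert e\rvert$ is strict), or, if an exact tie occurs, by appealing to the same fixed arbitrary ordering used to break ties when defining longest-edges, which still prevents $e$ from being selected as the longest edge of $t$.
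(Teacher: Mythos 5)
Your proof is correct, but it takes a genuinely different route from the paper's. The paper argues by contradiction: it assumes $v$ is interior to a terminal-edge region $R$, observes that the (finitely many) triangles fanning around $v$ would all have to chain through longest-edges around $v$, and derives a contradiction from the fact that each triangle has a unique longest edge. You instead produce an explicit witness: the shortest edge $e$ incident to $v$ cannot be the longest edge of either triangle containing it (each such triangle has a second edge at $v$ that is strictly longer), so $e$ is a frontier-edge by definition, and $v$ is one of its endpoints. Your argument is more elementary and entirely local --- it needs only one edge and its at most two triangles, and no facts about terminal-edge regions such as the non-overlapping covering of the domain --- and it isolates precisely where general position enters (distinct lengths among the edges at $v$ make the inequality strict). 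The paper's version, by contrast, keeps the statement tied to the region structure that the rest of the algorithm uses. One caveat on your closing remark: the claim that the per-triangle tie-breaking ``still prevents $e$ from being selected as the longest edge of $t$'' is not justified --- for an equilateral triangle the arbitrary ranking could perfectly well designate $e$ as the longest edge of $t$ --- but this fallback is unnecessary under the theorem's general-position hypothesis, which both your proof and the paper's rely on, so it does not affect the validity of your main argument.
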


\begin{proof}: Let $v$ be  a vertex  associated to the terminal-edge region $R$ generated by the terminal-edge $e$ and  $T$ the set of the triangles that share $v$. By contradiction, let us assume that $v$ is an interior point of $R$ as shown  in Fig. \ref{fig:lemmaquitadomingos}. Since the triangles in $T$ are part of $R$, they must share their longest-edge around $v$. 
Given that $T$ is finite, there should exist a triangle $t_0$ (see Fig. \ref{fig:lemmaquitadomingos}) that shares their longest-edge with two triangles of $R$ in order to maintain $v$ interior point in $R$. This is not possible because a triangle has just one edge labeled as its  longest-edge. This contradicts our assumption, so $v$ has to be an endpoint of  at least one frontier- or barrier-edge in $R$.  Since triangles in $\Omega$ are distributed into terminal-edge regions without overlap~\cite{Ojeda2018ANA},  isolated points can not exist.
\end{proof}

\begin{figure}[h]
\centering
\includegraphics[width=0.25\linewidth]{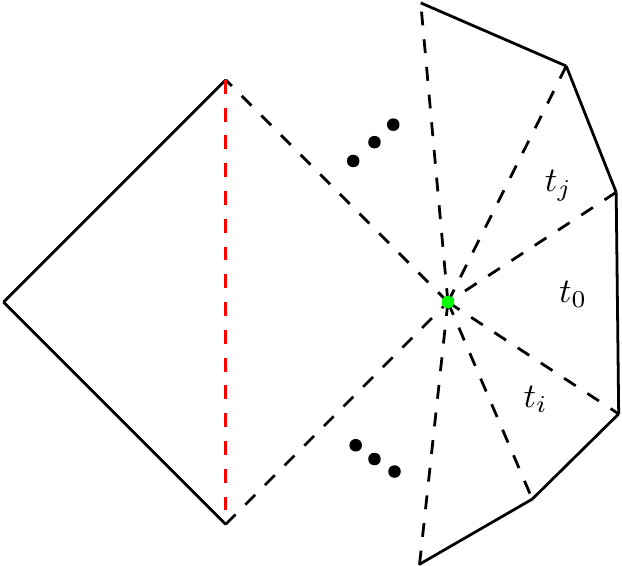} 
\caption{The  vertex in green is an interior point.}
\label{fig:lemmaquitadomingos}    
\end{figure}

It is worth mentioning Theorem \ref{D:theoreemvertices} guarantees that  the initial points used to represent the geometric domain and the ones inside the domain to fulfill point density requirements  are vertices delimiting one or more polygons. Moreover, each  internal-edge is a diagonal of one polygon. Therefore,  interior-edges that contain barrier-edge tips as endpoints  can be used to split a non-simple polygon into simple polygons.


Note that if the input points are not in general position,  a region without a terminal-edge might appear. This  degenerate region $R$ occur when the algorithm  labels all edges that share a vertex $v$ as internal-edges. This case might happen 
if all triangles in $R$ are equilateral and the algorithm labels each edge $e$ sharing $v$ as the longest-edge in only one of the triangles that share $e$. Then all edges that share $v$ are internal-edges. This rare kind of case can be solved  by  assigning one edge with  $v$ as endpoint as longest-edge in  the two triangles that share it. In this way, $R$ is now a terminal-edge region.

\section{The algorithm}
\label{sec:the_algorithm}


In this section, we describe  the main steps of the proposed algorithm, its computational cost  and the data structure used for its implementation.
The algorithm receives an initial triangulation as input, that can be generated by any known triangulator. The triangulation can be Delaunay or not, but we are using Delaunay triangulations because, several polygons of the mesh keep some angles of the triangles of the input triangulation. Since a Delaunay triangulation maximizes the minimum angle over all the possible triangulations for a point set, this angle will be a lower bound for the minimum angle of the generated polygonal mesh.

There are  several correct and robust triangulators such as Detri2~\cite{Detri2} and Triangle~\cite{triangle2d} available for free.  We are using Detri2 \cite{Detri2} to generate the constrained Delaunay triangulation needed as initial mesh. The whole process applied to the initial triangulation is divided in next three phases:

\begin{enumerate}[label=\roman*)]
    \item Label phase: Each edge is labeled as  longest-, terminal- or frontier-edges to  build terminal-edge regions. The algorithm also labels one  triangle in each terminal-edge region as seed triangle used in the next phase to build each polygon.
    \item Traversal phase: Generation of polygons from seed triangles. In this phase the frontier-edge vertices of a terminal-edge region are stored in counter-clock-wise (ccw) order to  build a polygon. Non-simple polygons generated in this phase are sent to the reparation phase.
    \item Reparation phase: Polygons with barrier-edges are partitioned into simple polygons.
\end{enumerate}

\subsection{Data structures}
\label{sec:datastructrue}
In this subsection we describe the data structures used in our implementation. We have decided to use an indexed data structure with adjacencies as  described in \cite{MeshDatastructSurvey}.  The purpose of the following representation is to have an easy and compact way to travel through all the faces of the triangulation and, in an implicit way, by using a special order in the representation, to access their edges and neighbor triangles in ccw. 

The triangulation is represented using three one-dimensional arrays for the vertices, triangles and neighbor triangles, respectively.  
Vertices are stored in pairs $(x,y)$, where each two consecutive elements of the Vertex array are the coordinates $x$ and $y$ of a point. The Triangle array is a set of indices to the Vertex array. Each 3 consecutive values is a triangle. Since the algorithm needs to know the neighbor through each triangle edge, the Neighbor array  stores the three indices of the neighbor triangles of  triangle $i$ at the locations $3i + 0$, $3i + 1$, $3i + 2$. 

To facilitate the implementation of the algorithm, vertex indices in the Triangle array are ordered  in such a manner that is easy to find out which is the neighbor triangle through  each triangle edge. Fig. \ref{figs:data_and_triangle}(a) illustrates the connections of each array element. The first two point indices $3i+0$, $3i+1$ in the Triangle array is an edge of triangle $i$ and this edge is shared with the triangle stored in the position $3i+2$ of the Neighbor array; the triangle edge defined by $3i+1$, $3i+2$ in the Triangle array is shared with the triangle stored  in $3i+0$ in the Neighbor array and the triangle edge $3i+2$, $3i+0$ in the Triangle array with the triangle stored at $3i+1$ in Neighbor array. Triangle edges are stored implicitly: the green edge in Fig. \ref{figs:data_and_triangle}(b) corresponds to the first one,  the  red  edge is  the second one and the blue  the third one.  These neighborhood relations can currently be generated as output by several tools such as  Qhull \cite{qhull}, Triangle \cite{triangle2d} and Detri2qt \cite{Detri2}.

\begin{figure}
\centering     
\subfigure[]{\label{fig:datastruc}\includegraphics[width=0.3\textwidth]{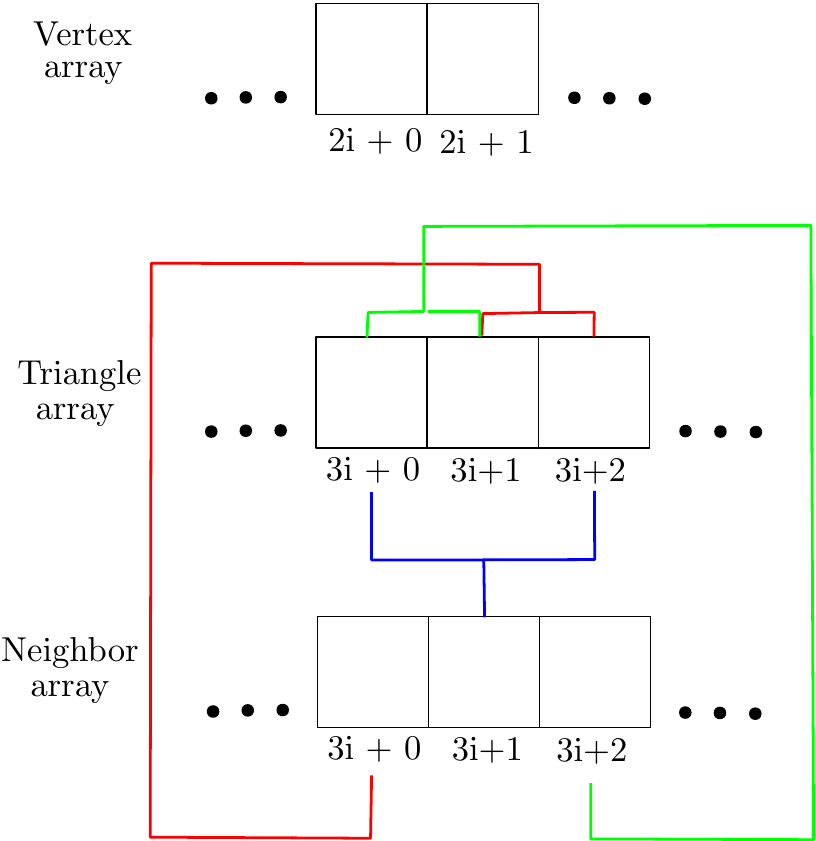}}\hspace{2cm}
\subfigure[]{\label{fig:datastructriangle}\includegraphics[width=0.3\textwidth]{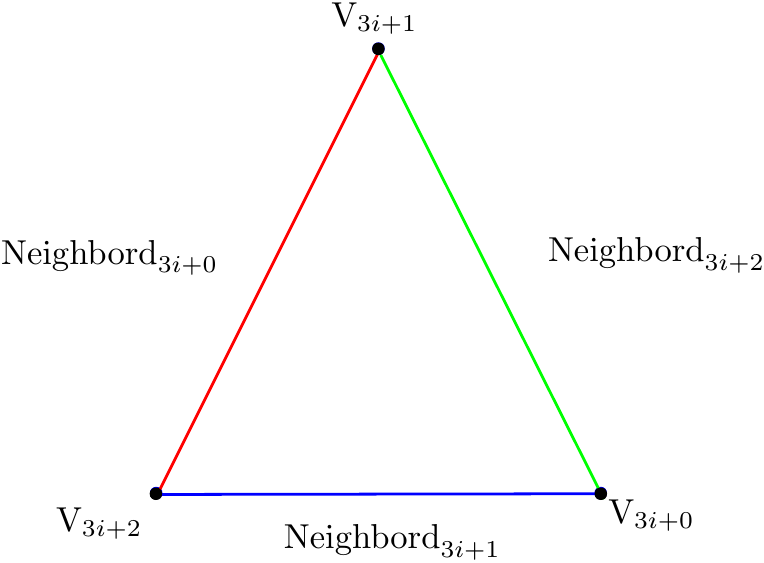}}
\caption{Data structure:  \textbf{(a)} Relation between the  vertex indices in the Triangle array and the triangle indices in the Neighbor array. \textbf{(b)} Triangle information stored  in the data structure shown in (a)
.}
\label{figs:data_and_triangle} 
\end{figure}


The final mesh composed of simple polygons is also stored in  one array, where each polygon includes first its number of vertices and then the vertex indices in ccw  as shown in Fig. \ref{fig:meshdatastruct}. 

Data structures to store additional information needed only in one of the steps of the algorithm are mentioned in the respective subsection. It is worth mentioning here the representation of non-simple polygons since this information is used in two phases. Fig. \ref{fig:poly_array_representation}  shows an example of a polygon with two barrier edge tips. This kind of polygon is stored in an index array  in the same way as a simple polygon in order ccw, and it is recognized as non-simple because there are repeated vertex indices. The same occurs with the seed list $L$, which is initialized in the Label phase with one triangle index per each terminal-edge region, and used in the Traversal phase to build polygons.  

\begin{figure}[]
    \centering
\includegraphics[width=0.6\textwidth]{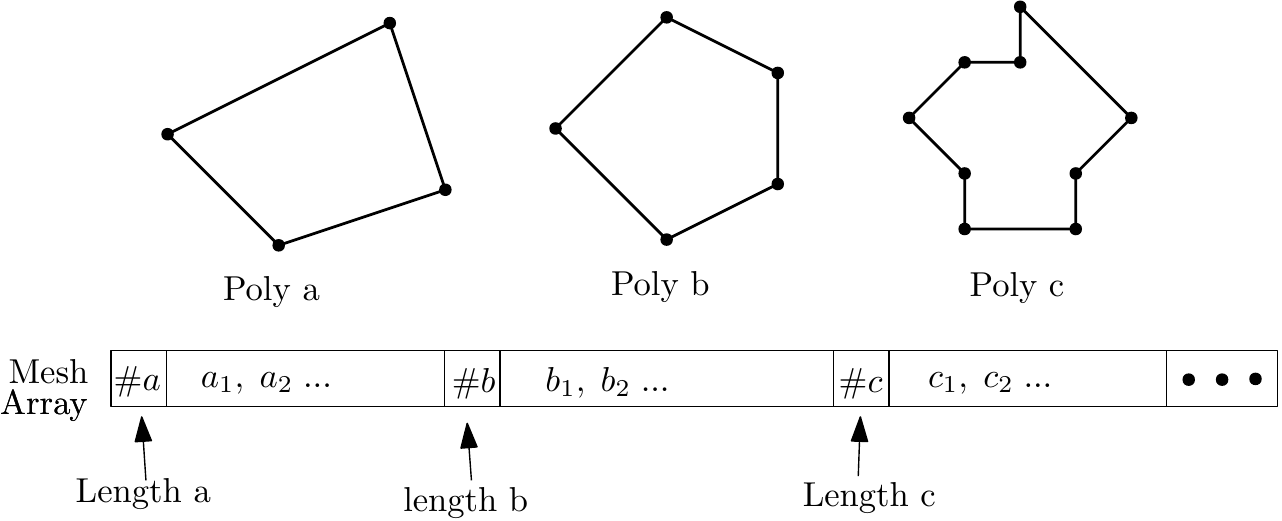}
    \caption{Mesh array with three polygons.} 
    \label{fig:meshdatastruct}
\end{figure}

\begin{figure}
\centering     
\includegraphics[width=0.4\textwidth]{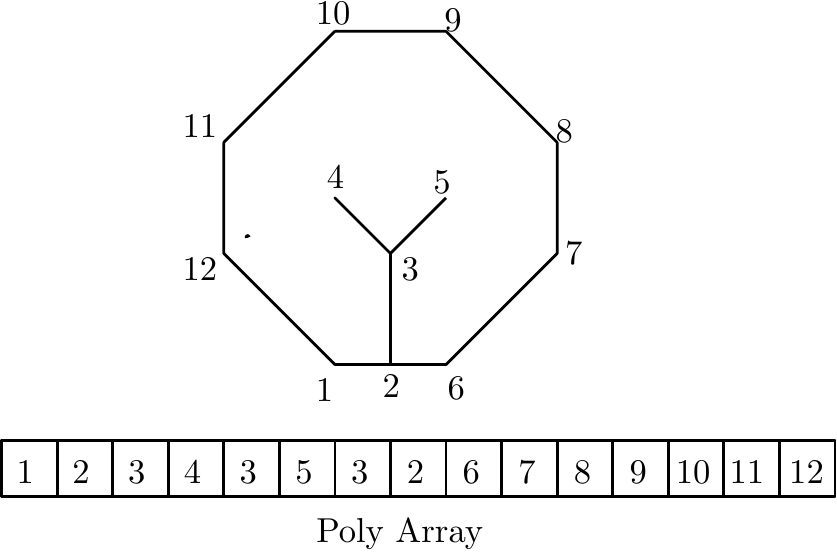} 
\caption{Representation of a  non-simple polygon as a set of vertex indices in ccw. Sequences $3 - 4 - 3$ and $3 - 5 - 3$ indicate the existence of barrier-edge tips.}
\label{fig:poly_array_representation} 
\end{figure}




\subsection{Label Phase}
\label{subsec:Labelpohase}

The goal of the Label phase  is to find  frontier-edges, terminal-edges and seed  triangles to build polygons in the next phase. The pseudo-code of this process is shown in Algorithm \ref{algo:labelphase}.
The algorithm first cycles over each triangle  of the initial triangulation (line \ref{algolabel:TriangleIteration}) to compute which edge is the longest-edge and stores this edge index (one per each triangle)  in a temporal array of size equal to the number of triangles; in the case of equilateral and isosceles triangles, the algorithm assigns  randomly a size order to the equal length edges  to avoid having a triangle that belongs to two terminal-edge regions at the same time. 


Afterward, the algorithm does a second iteration over the edges (line \ref{algolabel:Edgeiteration}). In case  an edge $e$ is not the longest-edge of any of the two triangles that share it (line \ref{algolabel:label}), $e$ is labeled as a frontier-edge. In case $e$ is a terminal-edge (line \ref{algolabel:seedtrianglelabel}), the algorithm stores the  smallest index of the triangles that share it in the Seed list $L$. In the case of a boundary terminal-edge, the index of the unique triangle is  stored. 

\begin{figure}[h]
\centering     
\subfigure[
]{\label{fig:delunolabel}\includegraphics[width=0.4\textwidth]{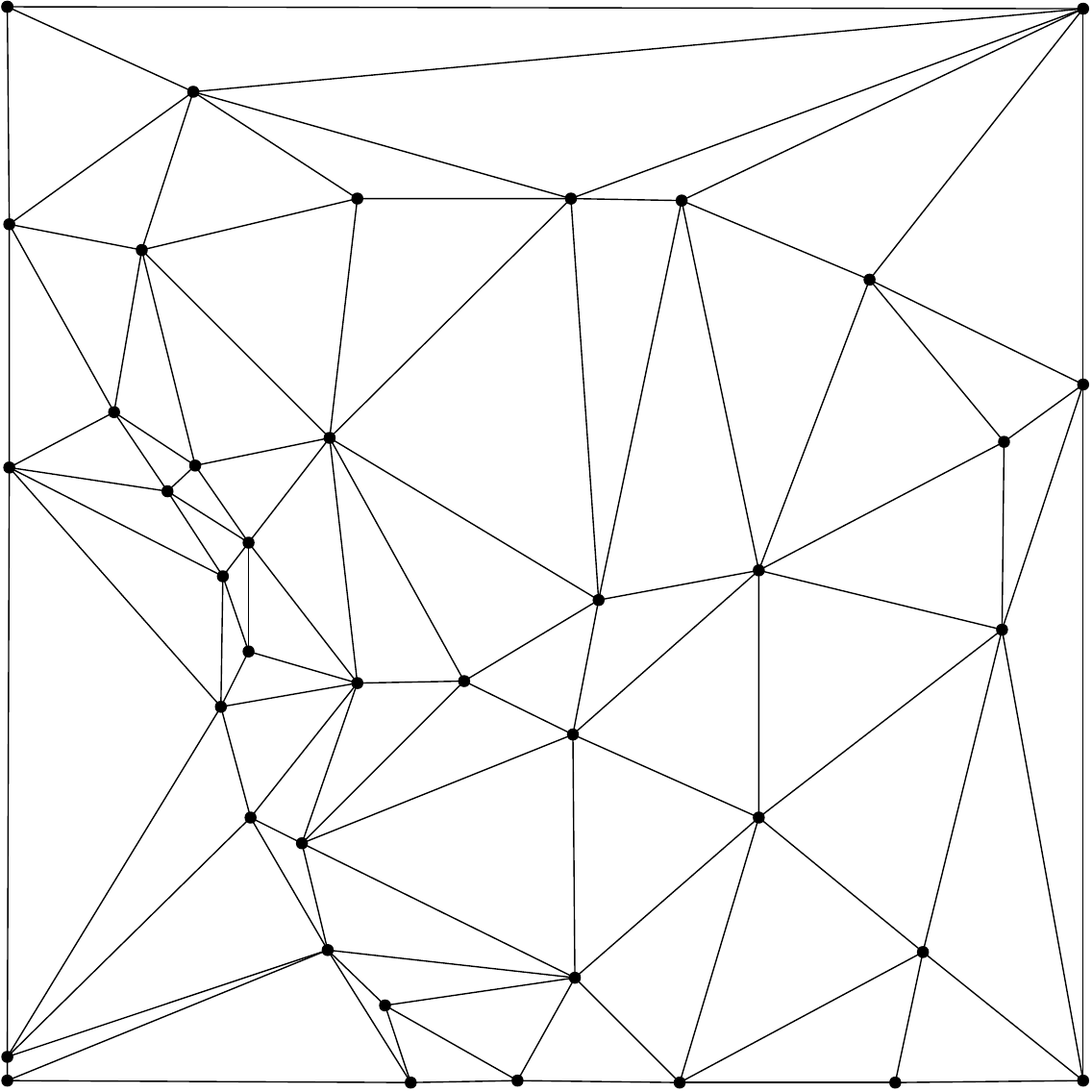}} \hspace{0.5cm} 
\subfigure[
]{\label{fig:Labelphaseafter}\includegraphics[width=0.4\textwidth]{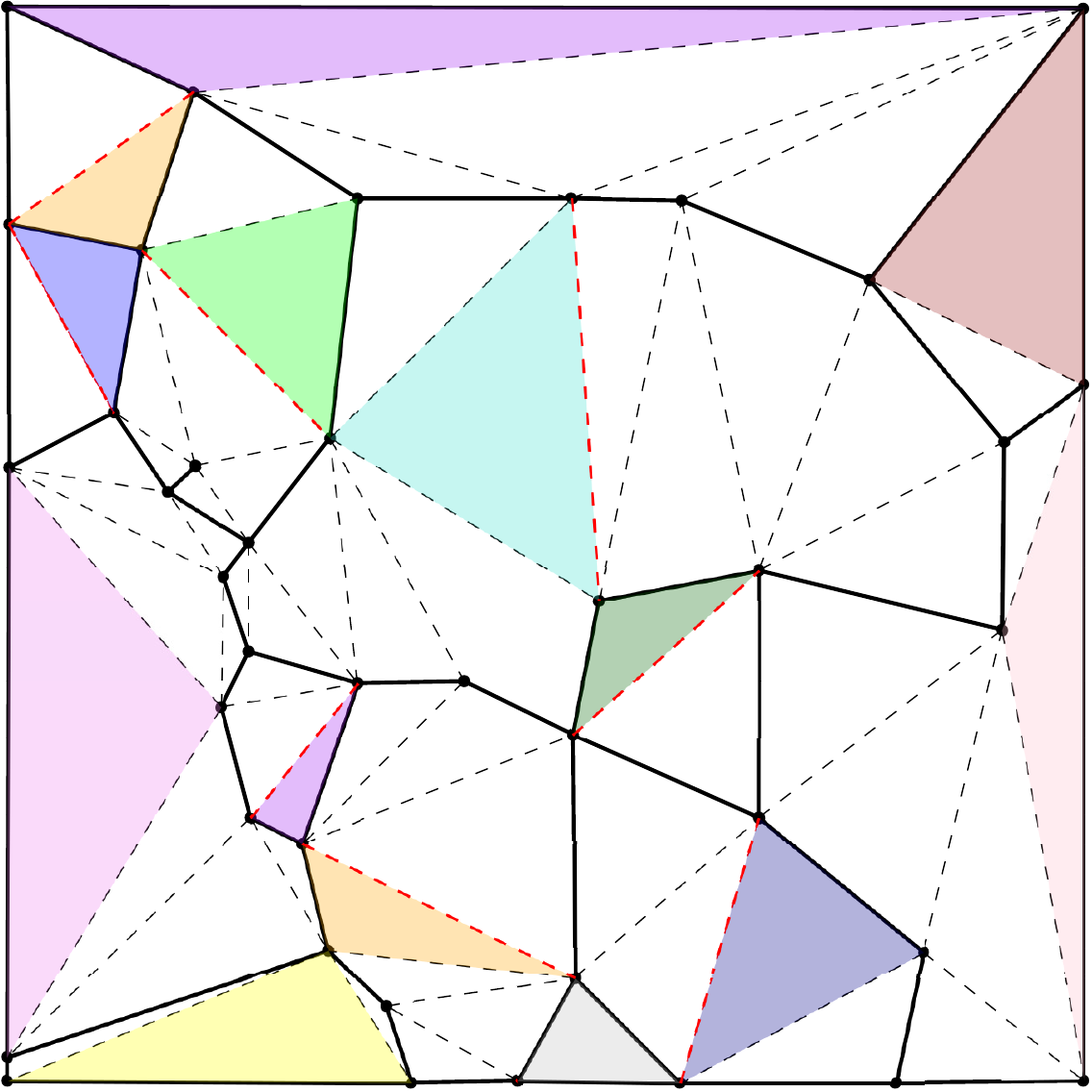}}
\caption{Label phase: (a) Input Delaunay triangulation. (b) Solid lines are the frontier-edges, dashed lines are the internal-edges and colorful triangles are seed triangles. }
\label{figs:label_phase} 
\end{figure}

The final result of this step is shown in Fig. \ref{figs:label_phase}(b). The colorful triangles are the seeds used in the next phase to generate the polygons. It can also be  observed that terminal-edge regions are  delimited by frontier-edges. 

\begin{algorithm}
    \caption{Label phase}\label{algo:labelphase}
    \begin{algorithmic}[1]
    \Require Initial Triangulation $\Omega$
    \Ensure Labeled edges and seed triangles of $\Omega$ 
    \ForAll{triangle $t_i$ in $\Omega$} \label{algolabel:TriangleIteration}
        \State Label the longest-edge of $t_i$
    \EndFor
    \ForAll{edge $e_i$ in $\Omega$} \label{algolabel:Edgeiteration}
        \State Be $t_1$ and $t_2$ triangles that share $e_i$
        \If{$e_i$ neither the longest-edge of $t_1$ nor $t_2$ or $e_i$ is border-edge} \label{algolabel:label}
            \State Label $e$ as frontier-edge
        \EndIf    
        \If{$e_i$ is terminal-edge or border terminal-edge} \label{algolabel:seedtrianglelabel}
            \State Choose one triangle sharing $e_i$ and store its index in the Seed list $L$
        \EndIf    
    \EndFor
    \end{algorithmic}
\end{algorithm}


\subsection{Traversal Phase}
\label{subsec:Travelpohase}

In this phase, polygons are built and represented as a closed polyline as shown in Fig. \ref{fig:poly_array_representation}. The main idea behind this phase is to travel through neighbor triangles, neighbors by  internal-edges inside a terminal-edge region and save their frontier-edges as edges of the new polygon $P$ in ccw. For this reason, the algorithm uses each triangle $t$ inserted in the Seed list $L$ as the starting triangle to build each polygon. The pseudo-code of this phase is shown in Algorithm \ref{algo:travelphase}. 

From line \ref{algotravel:init} until line \ref{algotravel:initend}, the algorithm gets the first triangle $t$ from the Seed list and  initializes the variables according to  the number of frontier-edges that $t$ has.  In case $t$ has 3 frontier-edge edges, $t$ is stored in the polygon $P$ as a whole polygon. In case $t$ has less than $3$ frontier-edges, their endpoints are saved as the first points of $P$ in ccw. In addition, the first vertex  of $P$ and $t$ are saved in $v_{init}$ and $t_{init}$, respectively, to check when the polygon boundary is ready. It is also necessary to store the last added vertex  in $P$ in $v_{end}$ to look for the e endpoint to be added. In case $t$ has no frontier-edge (Line \ref{algotravel:initial0edgescase}), the algorithm saves any vertex of $t$ as $v_{init}$ and $v_{end}$, and  $t$ in $t_{init}$. In the end, all vertices of $t$ will be part of polygon boundary; then it does not matter which one is taken as the initial point. After the initialization, the algorithm travels to the next internal-edge neighboring triangle $t'$ that shares $v_{end}$ with $t$. Three cases for $t'$ must be  considered:

\begin{enumerate}[label=\roman*)]
    \item Line \ref{algotravel:twoedgecase}: The triangle $t'$ has just  one frontier-edge $e$ and this edge contains $v_{end}$. Then $e$ is stored in $P$ and $v_{end}$ is updated with the other $e$ endpoint. The next triangle $t'$ is the non-visited internal-edge neighbor triangle that contains the new $v_{end}$. 
    
    \item Line \ref{algotravel:oneedgecase}: The triangle $t'$ is an ear triangle (a triangle with 2 frontier-edges). In this case, both edges are stored in ccw in $P$ and $v_{end}$ is updated with last stored endpoint. The previously visited triangle is the new $t'$ because the other two triangles are neighbor by frontier-edges (i.e., they belong to other terminal-edge regions).
    
    \item Line \ref{algotravel:noedgecase}: The triangle $t'$ has no frontier-edge. $t'$ shares an internal-edge with the last visited triangle and with the other two neighbor triangles that can be visited. Just one of these two triangles contains the endpoint $v_{end}$ as vertex, so that triangle is the new $t'$. 
\end{enumerate}


\begin{figure}[h]
    \centering
\includegraphics[width=0.5\textwidth]{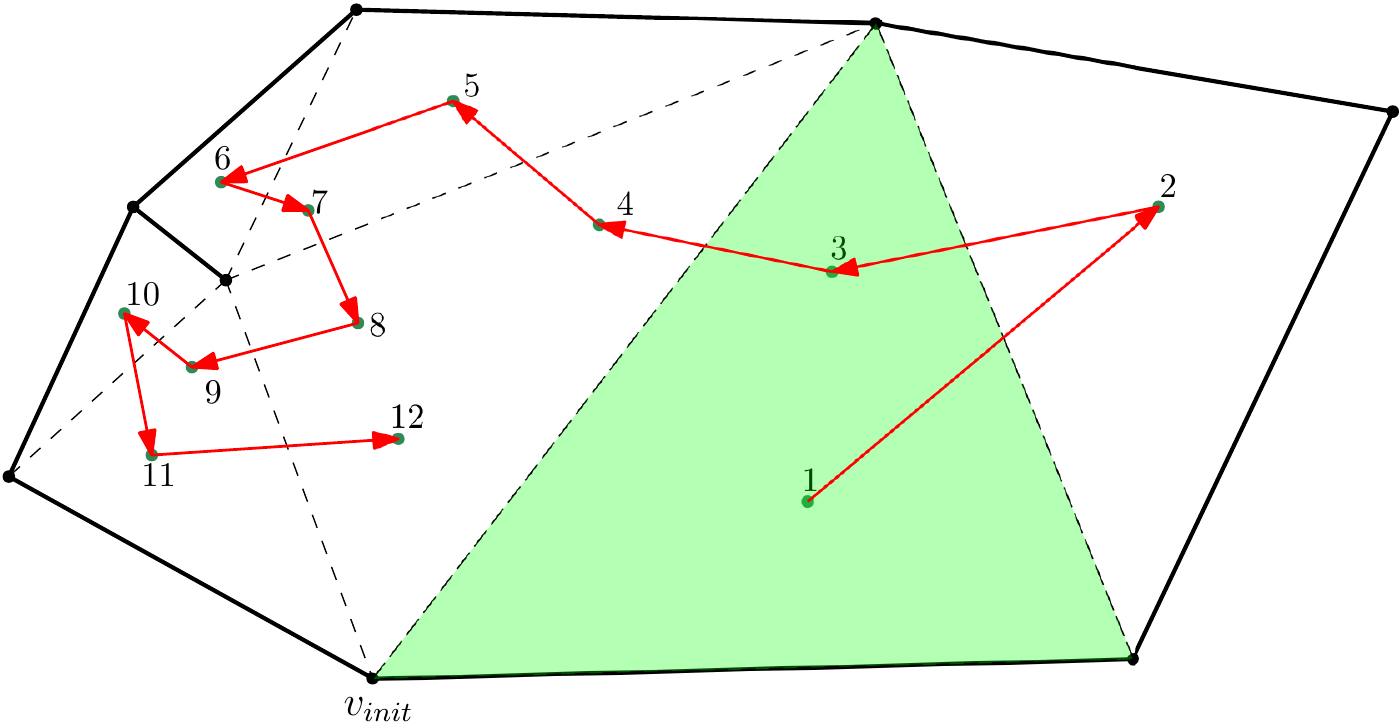}

    \caption{Traversal phase. The green triangle is the seed triangle of this terminal-edge region. The labels indicate in which order  each triangle was visited. } 
    \label{fig:travelphase}
\end{figure}

\begin{algorithm}
    \caption{Polygon construction}\label{algo:travelphase}
    \begin{algorithmic}[1]
    \Require Seed triangle $t$ of a terminal-edge region
    \Ensure Arbitrary shape polygon
    \State Polygon $\leftarrow$ $\emptyset$ \label{algotravel:init}
        \If{ $t$ has $3$ frontier-edges $e_1$, $e_2$ and $e_3$} \label{algotravel:initial3edgescase}
            \State \Return Polygon $\leftarrow$ $e_1 \cup	 e_2 \cup	e_3$
        \ElsIf{ $t$ has $2$ frontier-edges $e_1$ and $e_2$} \label{algotravel:initial2edgescase}
            \State $v_{init} \leftarrow e_1.v_{first}$,  $v_{end} \leftarrow e_2.v_{second}$
            \State Polygon $\leftarrow$ $e_1 \cup e_2$
        \ElsIf{ $t$ has $1$ frontier-edge $e_1$} \label{algotravel:initial1edgescase}
            \State $v_{init} \leftarrow e_1.v_{first}$, $v_{end} \leftarrow e_1.v_{second}$
            \State Polygon $\leftarrow$ $e_1$
        \ElsIf{$t$ has no frontier-edges} \label{algotravel:initial0edgescase}
            \State $v_{init}, v_{end} \leftarrow t.v_{0}$
        \EndIf    
    \State $t' \leftarrow$ internal-edge neighbor triangle of $t'$ that shares $v_{end}$  \label{algotravel:initend}
    \While {$v_{init} \not= v_{end}$ and $t' \not= t_{init}
    $} \label{algotravel:end}
        \If{ $t'$ has $2$ frontier-edges $e_1$ and $e_2$ with $e_1$ continuous to $v_{end}$} \label{algotravel:twoedgecase}
            \State Polygon $\leftarrow$ Polygon $\cup$ $e_1 \cup e_2$
            \State $v_{end} \leftarrow e_2.v_{second}$
            \State $t' \leftarrow$ last visited triangle
        \ElsIf{$t'$ has $1$ frontier-edge $e_1$ continuous to $v_{end}$} \label{algotravel:oneedgecase}
            \State $v_{end} \leftarrow e_1.v_{second}$
            \State Polygon $\leftarrow$ Polygon $\cup$ $e_1$
            \State $t' \leftarrow$ internal-edge neighbor triangle not visited in previous iteration that contains $v_{end}$
        \Else \label{algotravel:noedgecase}
            \State $t' \leftarrow$ internal-edge neighbor triangle in ccw that contains $v_{end}$
        \EndIf 
    \EndWhile
    \State \Return Polygon $P$
    \end{algorithmic}
\end{algorithm}

Algorithm~\ref{algo:travelphase} is applied to each seed triangle and returns a polygon $P$. Note that triangles with no frontier-edges are visited 3 times, with one frontier-edge two times and with two frontier-edges one time; this is demostrated in Lemma \ref{l:lemmatravelphase}. In case $P$ has barrier-edge tips, the algorithm described in Section~\ref{subsec:nonsimplereparation} is used to divide it into simple-polygons. A non-simple polygon can be easily detected by just checking the repetition of consecutive vertices in $P$: given three consecutive vertices of $P$, $v_i$, $v_j$ and $v_k$, if $v_i$ and $v_k$ are equal, then $v_j$ is a barrier-edge tip. In case $P$ does not have barrier-edge tips,  the polygon is saved as part of the mesh in the Mesh array.

\begin{lemma}{Each triangle is visited at most $3$ times during the polygon construction from a terminal-edge region.} \label{l:lemmatravelphase}
\end{lemma}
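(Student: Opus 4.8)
The plan is to reduce the number of visits of a triangle to the number of its \emph{crossable} edges and then to bound that number. First I would fix a terminal-edge region $R$ and classify the three edges of any triangle $t\subset R$ into two groups: the \emph{non-crossable} edges, namely the frontier-edges and barrier-edges, which Algorithm~\ref{algo:travelphase} records as part of the polygon boundary but never passes through; and the \emph{crossable} interior edges, namely the internal-edges and the terminal-edge, across which the traversal steps from one triangle to a neighbour. Writing $f$ for the number of non-crossable edges of $t$ (so $0\le f\le 3$), the triangle has exactly $3-f$ crossable edges. Inspecting the three branches of the while-loop, $t$ becomes the current triangle $t'$ only when the walk crosses one of its interior edges into it (lines~\ref{algotravel:oneedgecase} and~\ref{algotravel:noedgecase}) or backtracks to it across such an edge (line~\ref{algotravel:twoedgecase}); apart from the seed triangle, which is handled once at initialization, it is never reached across a frontier- or barrier-edge. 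Hence every visit of $t$ is charged to exactly one interior edge of $t$, and it suffices to show that each interior edge is crossed \emph{into} $t$ at most once.

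For the core step I would model the traversal as the contour tour of $R$: the non-crossable edges constitute the (possibly non-simple) polygon boundary that is enumerated in counter-clockwise order, while the interior edges play the role of diagonals that the walk must step across in order to list those boundary edges in sequence. In such a boundary tour each diagonal is traversed exactly twice, once in each direction, so it is crossed \emph{into} a given incident triangle exactly once. Consequently the number of visits of $t$ equals its number of interior edges, $3-f$. Since $f\ge 0$ this is at most $3$; it also reproduces the refinement stated in the text ($f=2\Rightarrow 1$ visit, $f=1\Rightarrow 2$ visits, $f=0\Rightarrow 3$ visits), the remaining case $f=3$ being the lone triangle that is output as a whole polygon at initialization and is never entered by the loop.

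To make the contour picture precise I would verify, branch by branch, that the three cases of Algorithm~\ref{algo:travelphase} realise the ``once in each direction'' rule. An ear ($f=2$) is entered through its unique interior edge, and the assignment $t'\leftarrow$ last visited triangle in line~\ref{algotravel:twoedgecase} leaves through that same edge in the opposite direction, giving a single visit. A triangle with $f=1$ is entered through one interior edge and, by line~\ref{algotravel:oneedgecase}, left through the other, so it is re-entered exactly once later from that second edge, giving two visits. A triangle with $f=0$ is threaded across its three interior edges in counter-clockwise order by line~\ref{algotravel:noedgecase}, giving three visits. The loop guard $v_{init}\neq v_{end}$ and $t'\neq t_{init}$ in line~\ref{algotravel:end} ensures the contour closes exactly once around $R$, so no interior edge is crossed in the same direction a second time.

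The main obstacle is the non-simple case, when $R$ contains barrier-edges. There the polygon boundary pinches at each barrier-edge tip (Definition~\ref{d:barrier-edge}), so a single vertex, and the barrier-edge itself, is recorded twice, and the interior edges no longer form one acyclic diagonal structure. I therefore expect the delicate part to be showing that around each barrier-edge tip the fan of triangles is still threaded monotonically: the selection of the counter-clockwise internal-edge neighbour containing $v_{end}$, together with the ``not visited in previous iteration'' clause, must route the walk so that every interior edge incident to the tip is crossed at most once in each direction. Granting Theorem~\ref{D:theoreemvertices}, which forbids isolated interior vertices and guarantees that the interior edges at a tip connect the two sides of the barrier, the tip neighbourhood behaves like an ordinary fan, and the per-edge bound of the preceding paragraphs carries over, yielding at most $3$ visits per triangle.
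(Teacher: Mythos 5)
Your proof is correct and follows essentially the same route as the paper's: the paper also argues that a triangle can only be entered through its internal (crossable) edges and is therefore visited once per internal edge, giving the bound of $3$. Your version simply supplies more justification for the ``once per interior edge'' claim (via the contour-tour/double-crossing argument and the discussion of barrier-edge tips), where the paper states it by construction.
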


\begin{proof}: By construction,  the algorithm can only  visit triangles through their internal edges. The worst case  is when a triangle has 3 internal edges (no frontier-edges); then this triangle is visited once per internal edge, i.e. 3 times. In case of 2-internal edges, the triangle is visited 2 times and in case of 1 internal-edge just 1 time.


\end{proof}

\subsection{Non-simple polygon reparation}
\label{subsec:nonsimplereparation}

As previously mentioned, some polygons generated in the Traversal phase are non-simple. In this section, we describe an algorithm to divide non-simple polygons into simple-ones by using internal-edges containing barrier-edge tips as one of their end-points. 
Notice that a simple strategy could just be to remove barrier-edges and keep as polygon boundary only the frontier-edges. However, this approach would require to delete barrier-edge endpoints, which are vertices defined/inserted in  the initial triangulation by the user/triangulation tool. We assume that all points are important. That is why, we propose  a method to repair non-simple  polygons without removing initial vertices.

The reparation process works similarly to the Label and Traversal phases but inside a non-simple polygon $P$. In short, the algorithm labels so many internal edges as frontier-edges until all barrier-edge tips are part of a polygon boundary and stores one seed triangle per each new polygon in a new Seed list $L_p$. After, those triangles are used as seed to generate the new polygons with the Algorithm \ref{algo:travelphase}. The internal-edges chosen to be new frontier edges are the ones that contribute to generate polygons of similar size.

The reparation algorithm is shown in Algorithm \ref{algo:reparationphase}. For each barrier-edge tip $b_i$ in a polygon $P$ (line \ref{algorepa:foreachbet}), the algorithm computes first $deg(b_i)$, which is the number of incident internal-edges to $b_i$ inside the terminal-edge region. To facilitate this calculation the algorithm uses an auxiliary array that associates to each vertex a triangle. If $deg(b_i)$ is odd then the algorithm labels the middle edge as frontier-edge; otherwise, the algorithm chooses any of the two middle edges as a new frontier-edge. In both cases, the two triangles that share the new frontier-edge are saved in the seed list $L_p$. From line~\ref{algorepa:foreachseedtriangle} to line ~\ref{algorepa:removeseeds}, each new simple polygon is built. A Bit array $A$, of size equal to the number of triangles of the input triangulation is used to avoid the generation of the same polygon more than once. The insertion of one or more seed triangles of the same new polygon in the seed list $L_p$ occurs when a terminal-edge region contains more than one barrier-edge tip. Each seed triangle stored in $L_p$ is set to 1 in $A$; the others are 0. When a polygon is built, each used triangle is set to $0$ in $A$. Fig. \ref{fig:betosplit} shows an example of a non-simple polygon with three barrier-edge tips (vertices in green color). Fig. \ref{fig:besplited} shows the internal-edges converted in new frontier-edges to delimit the new simple polygons and the six seed triangles stored in $L_p$.  Fig. \ref{fig:benewpolygons} shows the four new polygons after the split and the seed triangles used to generate them.

It is worth mentioning that the number of polygons generated after the split is at  most $(\lvert B\rvert + 1)$, with $\lvert B\rvert $ the number of barrier-edge tips in a polygon $P$.

\begin{lemma} \label{lemma:degReparation}
Let   $\Omega$ be the input triangulation and $B$ the set of barrier edge tips in the polygon $P$. In the computation of $deg(b_i)$,   $\forall b_i, b_i \in B$, each triangle $t$ inside the terminal-edge region that generated $P$ can be visited at most 3 times.
\end{lemma}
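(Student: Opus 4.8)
The plan is to prove Lemma~\ref{lemma:degReparation} by relating the computation of $deg(b_i)$ to the traversal mechanism already analyzed in Lemma~\ref{l:lemmatravelphase}. Since $deg(b_i)$ counts the internal-edges incident to a barrier-edge tip $b_i$ inside the terminal-edge region that generated $P$, the natural reading is that the algorithm discovers these incident internal-edges by walking around $b_i$ through the triangles that share it, moving from one triangle to the next across their common internal-edges. First I would make precise the local picture at a barrier-edge tip: by Definition~\ref{d:barrier-edge}, $b_i$ is an endpoint of barrier-edges but of no other frontier-/barrier-edge on one ``side,'' so the fan of triangles around $b_i$ inside $R$ is bounded by barrier-edges (or by the barrier-edge and a frontier-edge), and the edges strictly between them, which are crossed during the walk, are exactly the internal-edges being counted.

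Next I would argue the visit bound by reusing the structure of the Traversal phase. The key observation is that computing $deg(b_i)$ is a local traversal confined to the triangles of the fan around $b_i$, and this traversal obeys the same rule as Algorithm~\ref{algo:travelphase}: one can only step from a triangle to a neighbor across an \emph{internal}-edge, never across a frontier- or barrier-edge. Hence each triangle $t$ in $R$ can be entered only through its internal-edges. Because a triangle has at most three edges, and in the extreme case all three are internal-edges (a triangle with no frontier-edge, exactly the worst case identified in Lemma~\ref{l:lemmatravelphase}), $t$ can be reached from at most three distinct neighbors, so it is visited at most three times. A triangle with two internal-edges is visited at most twice, and with one internal-edge at most once, mirroring the earlier count.

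The cleanest way to close the argument is therefore to state that Lemma~\ref{l:lemmatravelphase} applies verbatim: the degree computation is itself a constrained walk through internal-edges over the triangles of $R$, so by the identical reasoning each triangle is visited at most three times. I would then add the one extra point that the lemma quantifies over \emph{all} barrier-edge tips $b_i \in B$ simultaneously in the sense of a single degree-computation pass; since the fans around distinct tips are essentially disjoint (they meet only along shared internal-edges between tips, if at all), the aggregate bound is still three visits per triangle, because any single triangle still has only three edges through which it can ever be entered.

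I expect the main obstacle to be the edge case where the fans around two different barrier-edge tips share triangles or share an internal-edge, which could naively suggest a triangle is counted once ``for'' each tip and thus more than three times. The resolution I would emphasize is that the bound is on \emph{physical visits} during the traversal, not on a per-tip accounting: a triangle is visited through an edge, it has only three edges, and each edge is traversed at most once in a given direction, so regardless of how many tips' fans overlap on $t$, the total number of visits to $t$ cannot exceed the number of its internal-edges, which is at most three. Making this distinction between ``visits'' and ``per-tip tallies'' explicit is the subtle step; everything else follows directly from Lemma~\ref{l:lemmatravelphase} and the definition of a barrier-edge tip.
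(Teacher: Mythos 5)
Your proposal reaches the right bound but by a genuinely different accounting than the paper, and the step that carries the weight is not justified. The paper's proof counts visits \emph{per vertex}: the computation of $deg(b_i)$ only touches triangles incident to $b_i$ (each exactly once, as a rotation around $b_i$), so a triangle $t$ is visited once for each of its vertices that happens to be a barrier-edge tip, and since $t$ has only three vertices the bound of $3$ is immediate. You instead count visits \emph{per internal-edge} of $t$, importing the mechanism of Lemma~\ref{l:lemmatravelphase}. That transfer is not automatic: Lemma~\ref{l:lemmatravelphase} concerns the boundary traversal of a whole terminal-edge region, whereas the degree computation is a collection of local rotations around individual tips, and a single internal-edge of $t$ (for instance the edge $b_ib_j$ joining two tips that are both vertices of $t$) is crossed once during the rotation around $b_i$ \emph{and} once during the rotation around $b_j$. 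Your resolution --- ``each edge is traversed at most once in a given direction'' --- is exactly the claim that needs proof here, and it depends on unstated assumptions about how the rotations are oriented and where they start (e.g., a rotation that hits the barrier-edge and restarts in the opposite direction from the seed triangle breaks a naive orientation argument). As stated, the per-edge invariant is asserted, not established.

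The fix is to abandon the edge-based tally and adopt the paper's vertex-based one, which sidesteps the overlap issue you correctly identified as the main obstacle: a triangle $t$ participates in the rotation around $b_i$ if and only if $b_i$ is a vertex of $t$, each rotation visits each triangle of its fan exactly once, and $t$ has at most three vertices, hence at most three visits in total over all $b_i\in B$. This also makes your concern about overlapping fans moot --- overlap only occurs when $t$ has several tip-vertices, and the count is still one visit per such vertex. Your local description of the fan around a tip is also slightly off: by Definition~\ref{d:barrier-edge} a tip is the endpoint of exactly one barrier-edge and of no other frontier-edge, so the triangles around it form a full disk inside $R$ with a single blocked edge, rather than a fan bounded by two barrier-/frontier-edges.
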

\begin{proof}: By construction  
 if a triangle $t$ is not incident to a barrier-edge tip $b_i$, Algorithm~\ref{algo:reparationphase} does not  visit $t$ during the computation of  $deg(b_i)$. If $t$ contains as vertex one, two or three barrier-edge tips, then $t$ is visited one time per each barrier-edge tip. Since $t$ has at most   $3$ barrier-edge tips then $t$ is visited in the worst case $3$ times for the computation of $deg(b_i)$.
\end{proof}

\begin{algorithm}
    \caption{Non-simple polygon reparation}\label{algo:reparationphase}
    \begin{algorithmic}[1]
    \Require Non-simple polygon P
    \Ensure Set of simple polygons $S$
    \State Initialize seed list $L_p$ and bit array $A$ \label{algorepa:inithash}
    \State $B \leftarrow$ set of barrier-edge tips 
    \State $S$ $\leftarrow$ $\emptyset$ 
    \ForAll{barrier-edge tip $b_i$ in $B$} \label{algorepa:foreachbet}
        \State Calculate $deg(b_i)$, the number of incident internal-edges  to $b_i$
        \State Label middle internal-edge $e$ incident  to $b_i$ as frontier-edge
        \State Save triangles $t_1$ and $t_2$ that share $e$ in $L_p$
        \State $A[t_1] \leftarrow $ True, $A[t_2] \leftarrow $ True  
    \EndFor
    \ForAll{seed triangle $t_i$ in $L_P$} \label{algorepa:foreachseedtriangle}
        \If{$A[t_i]$ is \texttt{True}} \label{algorepa:true}
            \State $A[t_i] \leftarrow False$
            \State Generate new polygon $P'$ starting from $t_i$  \label{algorepa:generationpoly} using Algorithm \ref{algo:travelphase}. 
        \State Set as \texttt{False} all indices of triangles in $A$ used to generate $P'$ \label{algorepa:removeseeds}
        \EndIf  
        
    \State $S$ $\leftarrow$ $S$ $\cup$ $P'$ 
    \EndFor 
    \State \Return $S$
    \end{algorithmic}
\end{algorithm}

\begin{figure}[h]
\centering  
\resizebox{.8\linewidth}{!}{
\subfigure[] {\label{fig:betosplit}\includegraphics[width=0.3\textwidth, ]{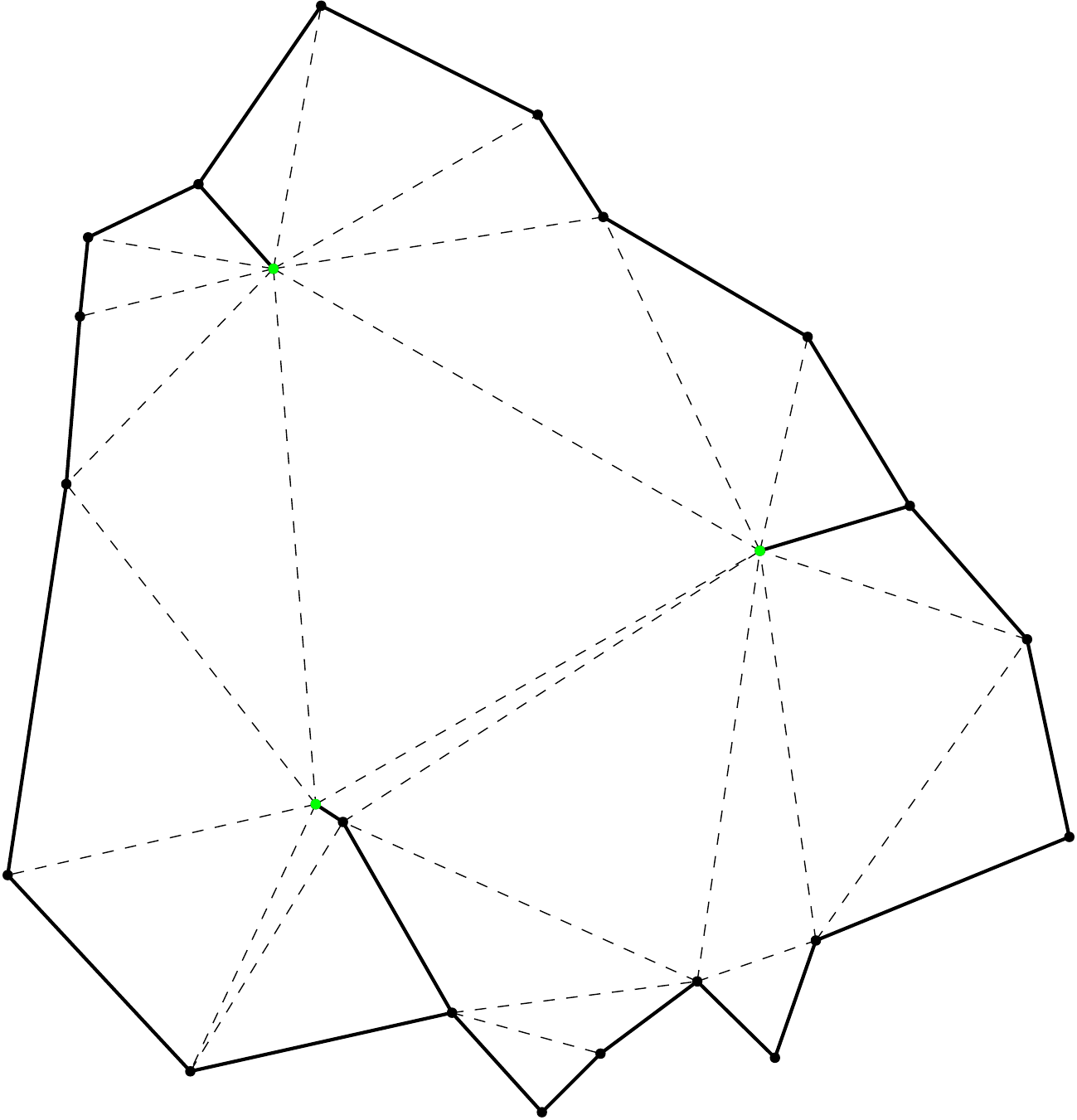}}\hspace{0.5cm}
\subfigure[]{\label{fig:besplited}\includegraphics[width=0.3\textwidth]{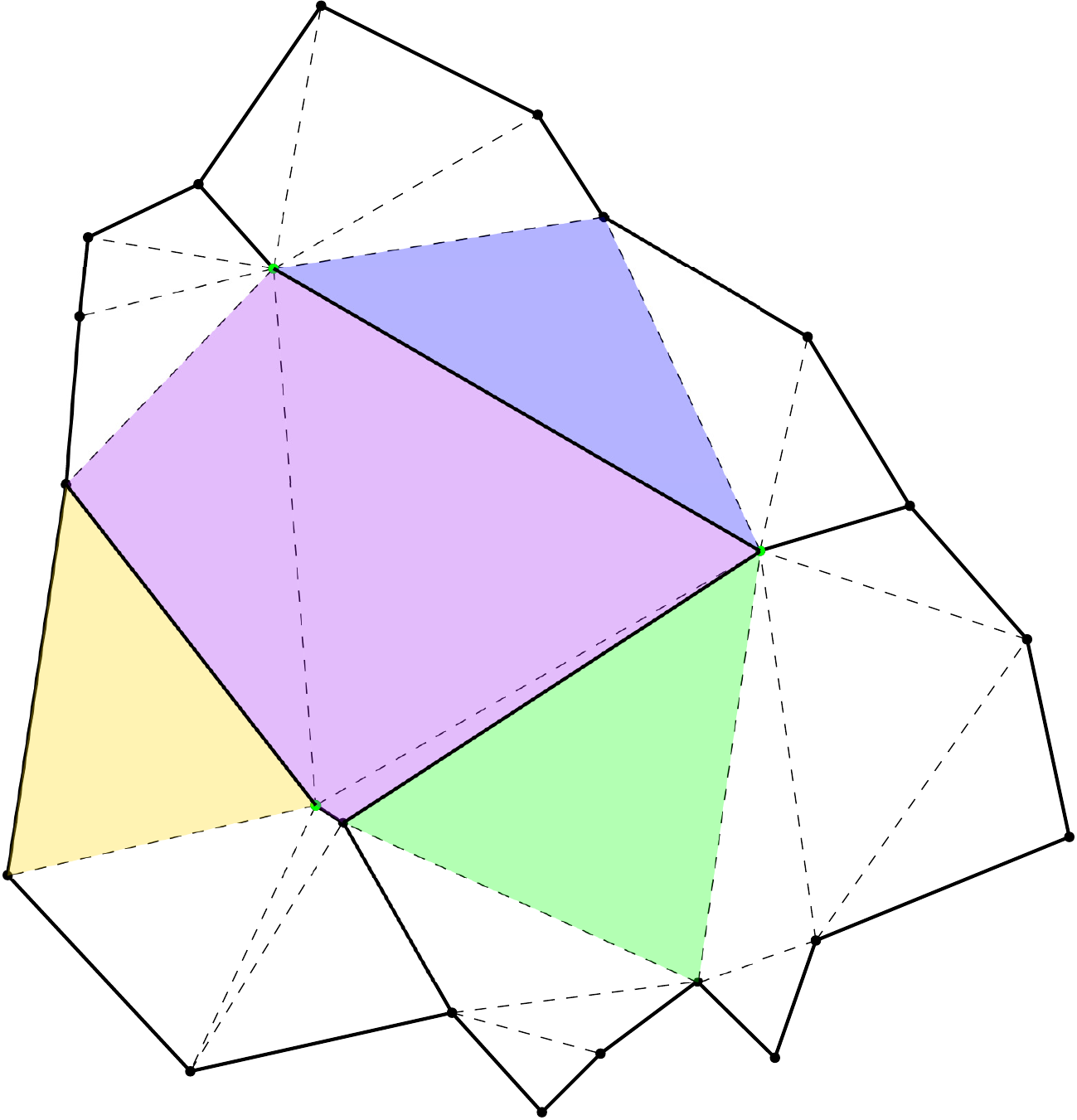}}\hspace{0.5cm}
\subfigure[]{\label{fig:benewpolygons}\includegraphics[width=0.3\textwidth]{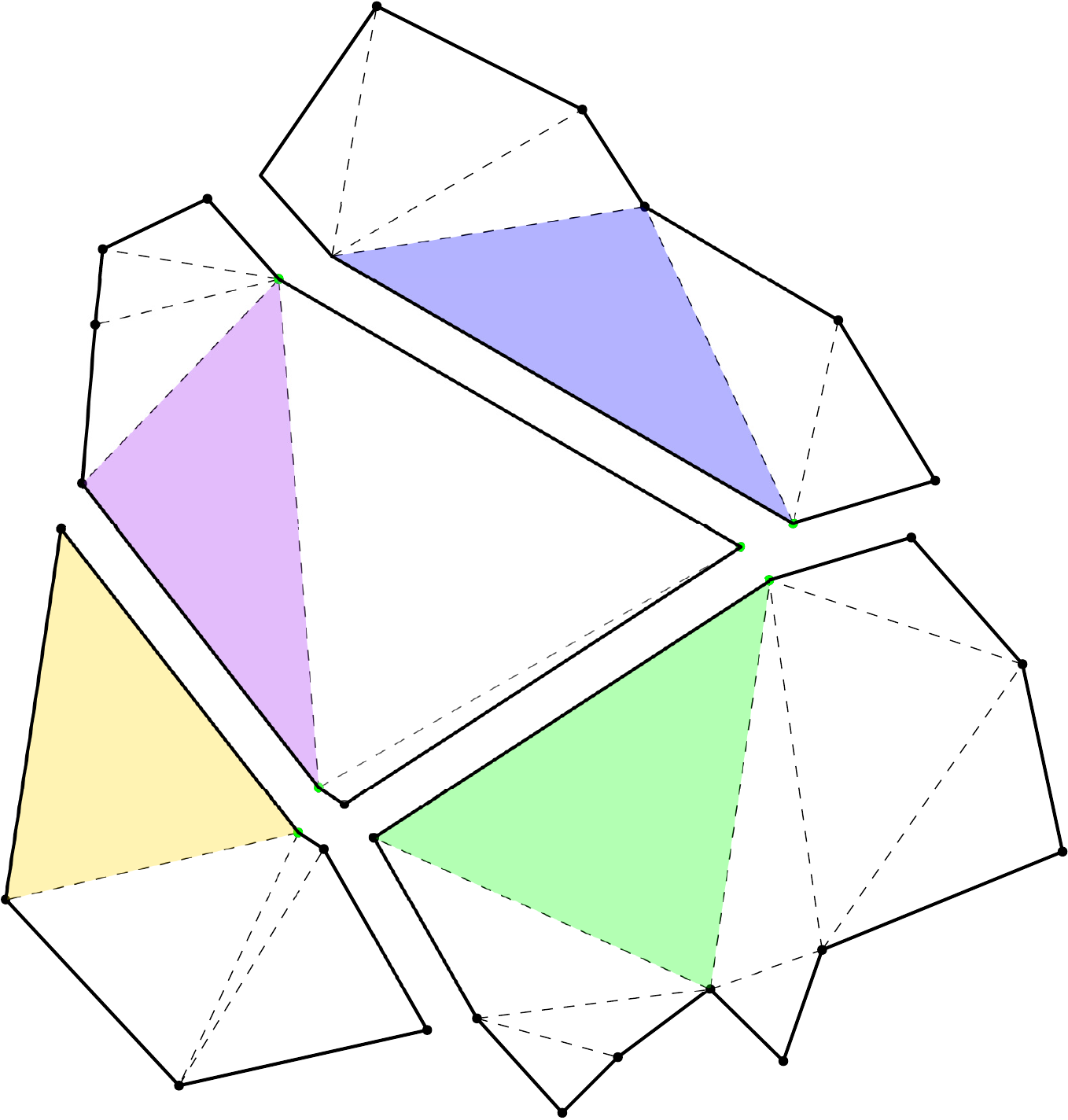}}
}
\caption{Example of a non-simple polygon split using barrier-edge tips. (a) Non-simple polygon. (b) Middle edges incident to barrier-edge tips labeled  as frontier-edges (solid lines) and seed triangles (colorful triangles) stored in the seed list $L_p$ and marked as 1 in the $A$ bit array. (c) The four new polygons without barrier-edge tips.  }
\label{figs:splitmid} 
\end{figure}

\subsection{Computational Complexity Analysis}
\label{sub:Complexity Analysis}

In this section, we analyze the computational complexity of  the whole algorithm. Let  $n$ be the initial number of points, $m$ the number of triangles of the input triangulation, $m_i$ the number of triangles of the terminal-edge region $R_i$ used to generate the simple or non-simple  polygon $P_i$

\begin{enumerate}
    \item \textbf{Label Phase.} This phase uses three iterations of cost $O(m)$, one to label longest-edges, one to identify  terminal-edges and another to store seed triangles.
    \item \textbf{Traversal Phase.} This phase calls Algorithm \ref{algo:travelphase} to build one polygon simple or not simple. The construction of polygon $P_i$ has cost $O(m_i)$. Each triangle is visited  three times by Lemma \ref{l:lemmatravelphase} in the worst case (when a triangle does not have frontier-edges). As each terminal-edge region covers the whole domain without overlapping, then this phase  costs $O(m)$.

    \item {\bf Non-simple Polygon Reparation phase}. This phase calls Algorithm~\ref{algo:reparationphase} to divide one non-simple polygon $P_i$ into simple ones. 
    By Lemma \ref{lemma:degReparation},  each triangle $t \in R_i$ is visited at most $3$ times, so the computation of  $deg(b)$, $\forall b \in R_i$ is  $O(m_i)$. The partition of $P_i$ into simple polygons has then a cost $O(m_i)$ because the new simple polygons  do not overlap and the algorithm only travels inside their triangles. Finally, the cost to repair all non-simple polygons is bounded $O(m)$.


\end{enumerate}

\noindent
\noindent
As result, the computational complexity of the whole algorithm is  $O(n)$ ($m = O(n)$).
With respect to the memory usage, the cost is $O(n)$ too. The Vertex array has size $2n$, the Triangle and Neighbor arrays have size $3n$. The number of final polygons in the mesh is less than the number of triangles in the input triangulation, i.e., bounded by $m$.

\subsection{Impact of the initial triangulation}

As mentioned above, the polygons inside a Polylla mesh depend on the initial triangulation. Any triangulation can be used as input.  Fig~\ref{figs:diff_triangulation} shows three different triangulations for the same point distribution generated using  algorithms available in the {\em 2D  Triangulations} package of the CGAL library~\cite{cgal:y-t2-21b}. Fig.~\ref{fig:DIFFrandomPolylla} is a Polylla mesh built from the triangulation generated by the incremental algorithm 
without  edge-flips (Fig~\ref{fig:DIFFrandomTriangulation}). The Polylla mesh in Fig~\ref{fig:DIFFregularPolylla} was built from a regular triangulation~\cite{BOISSONNAT20025} (a weighted triangulation) with random weights shown in Fig~\ref{fig:DIFFregularTriangulation}, and Fig.~\ref{fig:DIFFdelaunayPolylla} shows a Polylla mesh generated from the Delaunay triangulation drawn in Fig~\ref{fig:DIFFrandomdelaunay}. As expected, the shape of the resulting polygons is notoriously different. In order to see the differences, Table \ref{table:qualitycomp} shows for each mesh, the minimum and maximum interior angle of both the input triangulation and the generated polygon meshes, the number of polygons and the average number of edges per polygon. It can be observed that polygon meshes have a minimum interior angle greater than the minimum angle of the corresponding triangulation. Since the Polylla algorithm joins triangles to generate  polygons, the lower bound for the minimum interior angle of any polygon is the minimum angle of the input triangulation. In the case of maximum angles, the maximum angle of a Polylla mesh is usually larger than the maximum angle of the input triangulation because a Polylla's polygons can be nonconvex polygons. It seems that  Polylla meshes generated from  Delaunay triangulations tends to contain less polygons and  polygons defined with more edges than the ones obtained from other triangulations. Further research is necessary to probe these findings.

\begin{figure}[!h]
\centering     
\subfigure[
]{\label{fig:DIFFrandomTriangulation}\includegraphics[width=0.25\textwidth]{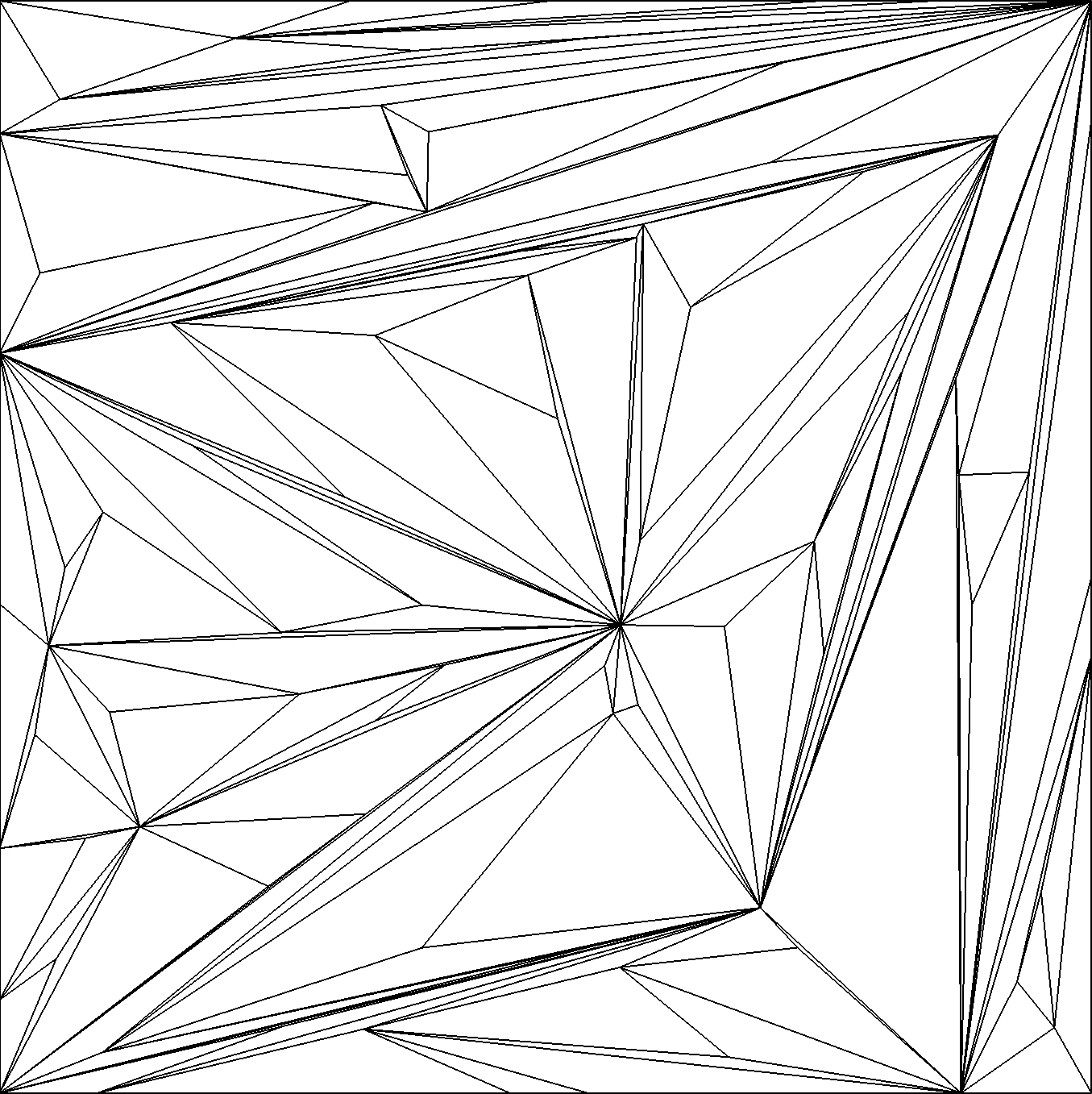}} 
\subfigure[
]{\label{fig:DIFFregularTriangulation}\includegraphics[width=0.25\textwidth]{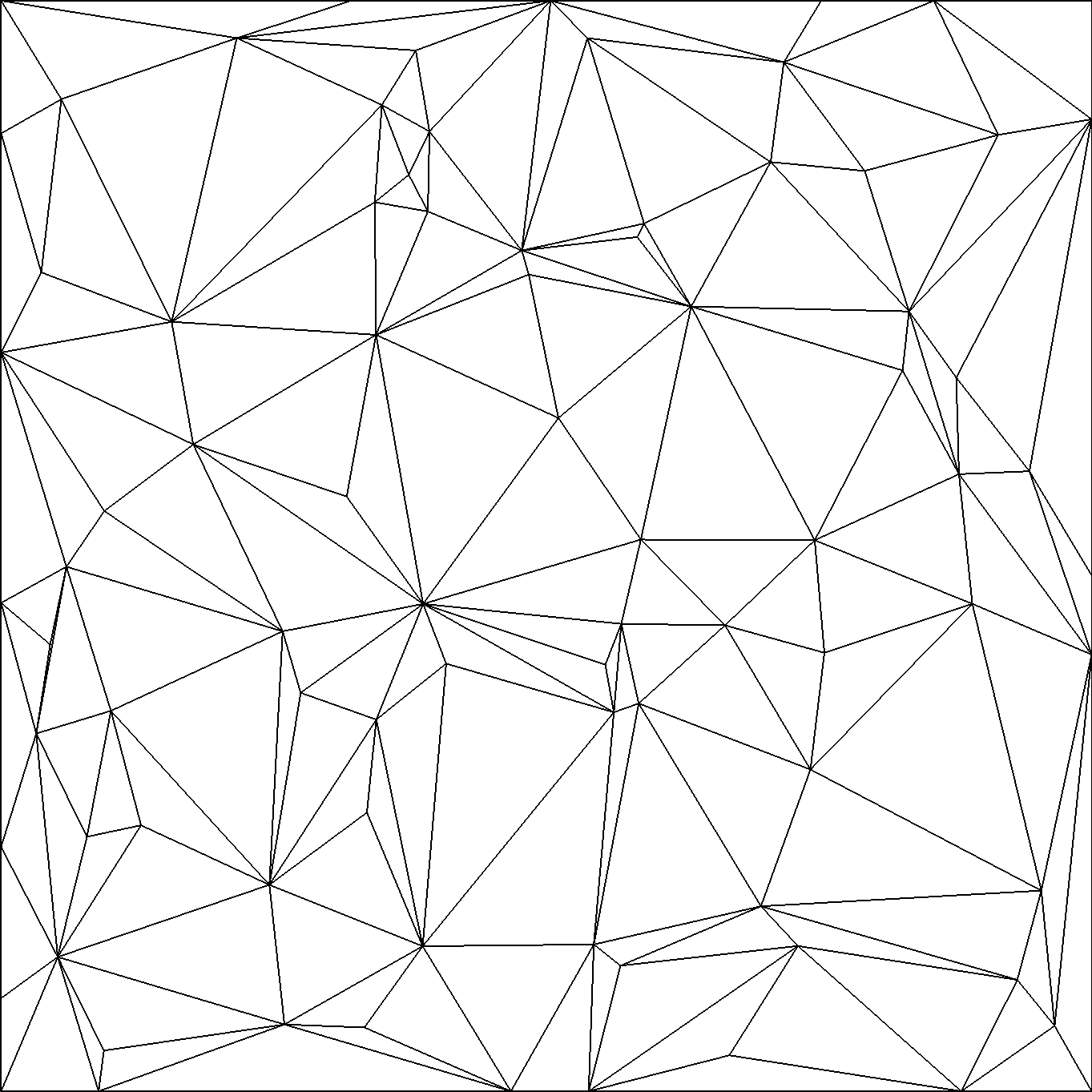}} 
\subfigure[
]{\label{fig:DIFFrandomdelaunay}\includegraphics[width=0.25\textwidth]{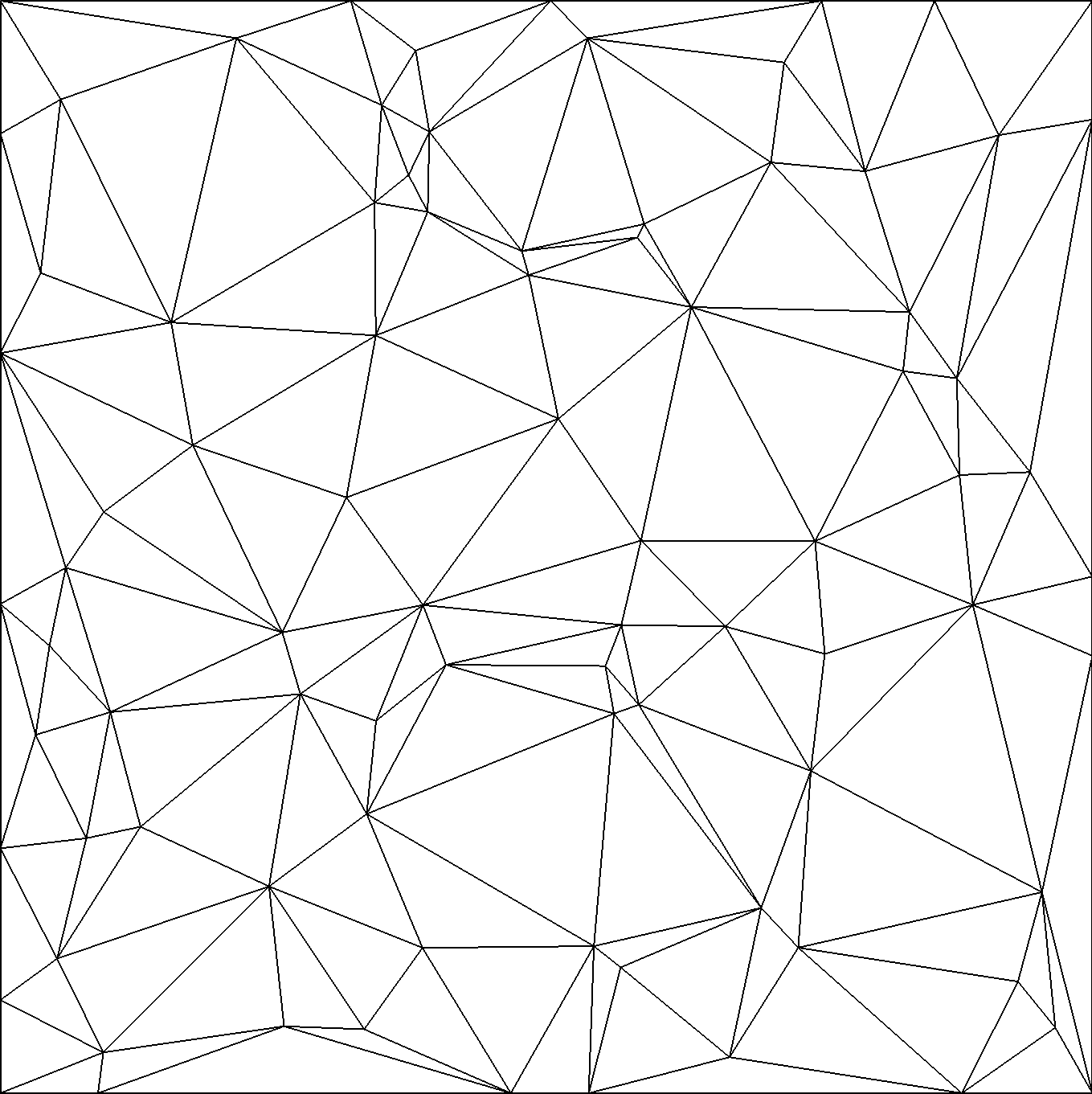}} 
\subfigure[
]{\label{fig:DIFFrandomPolylla}\includegraphics[width=0.25\textwidth]{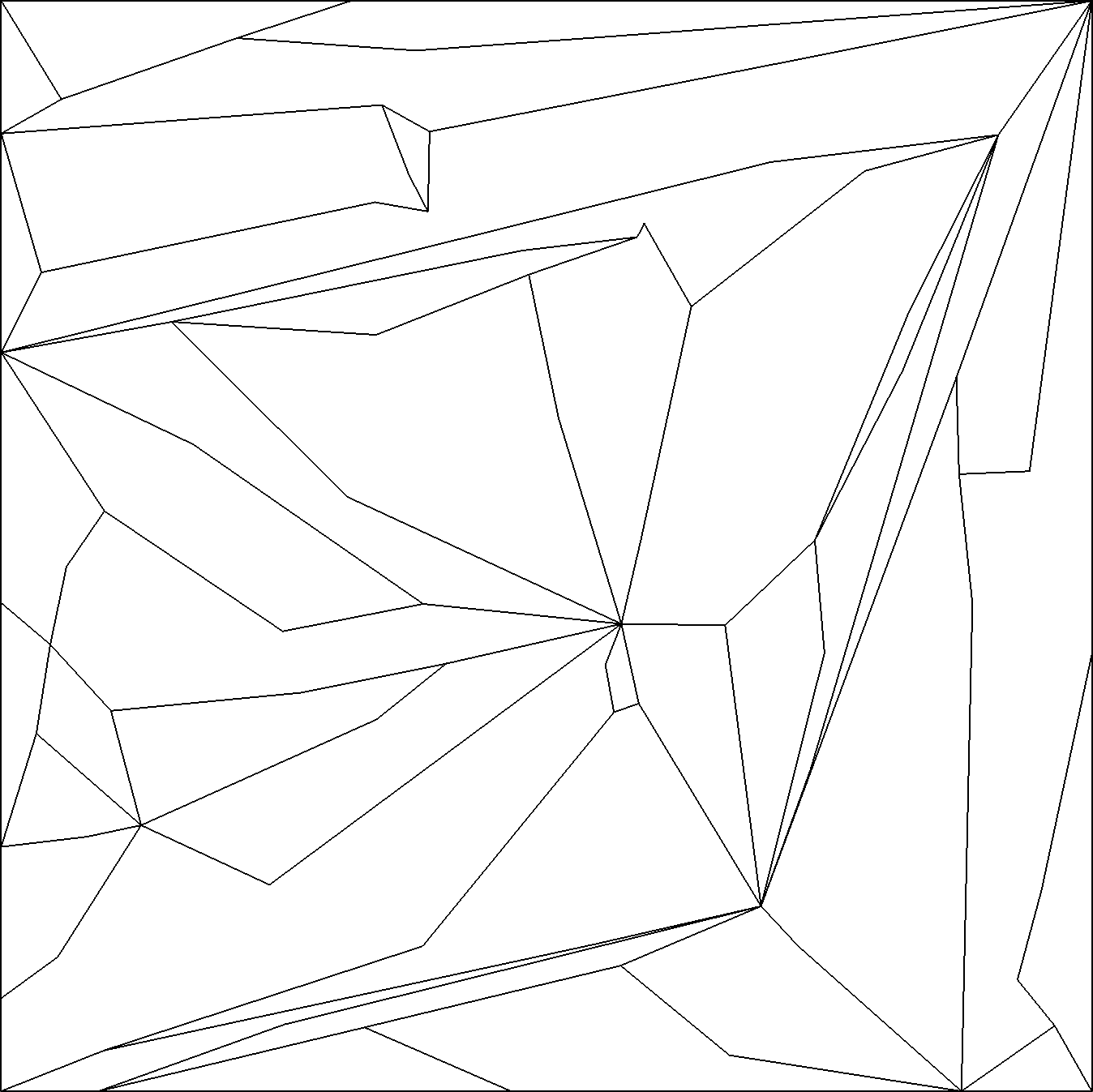}} 
\subfigure[
]{\label{fig:DIFFregularPolylla}\includegraphics[width=0.25\textwidth]{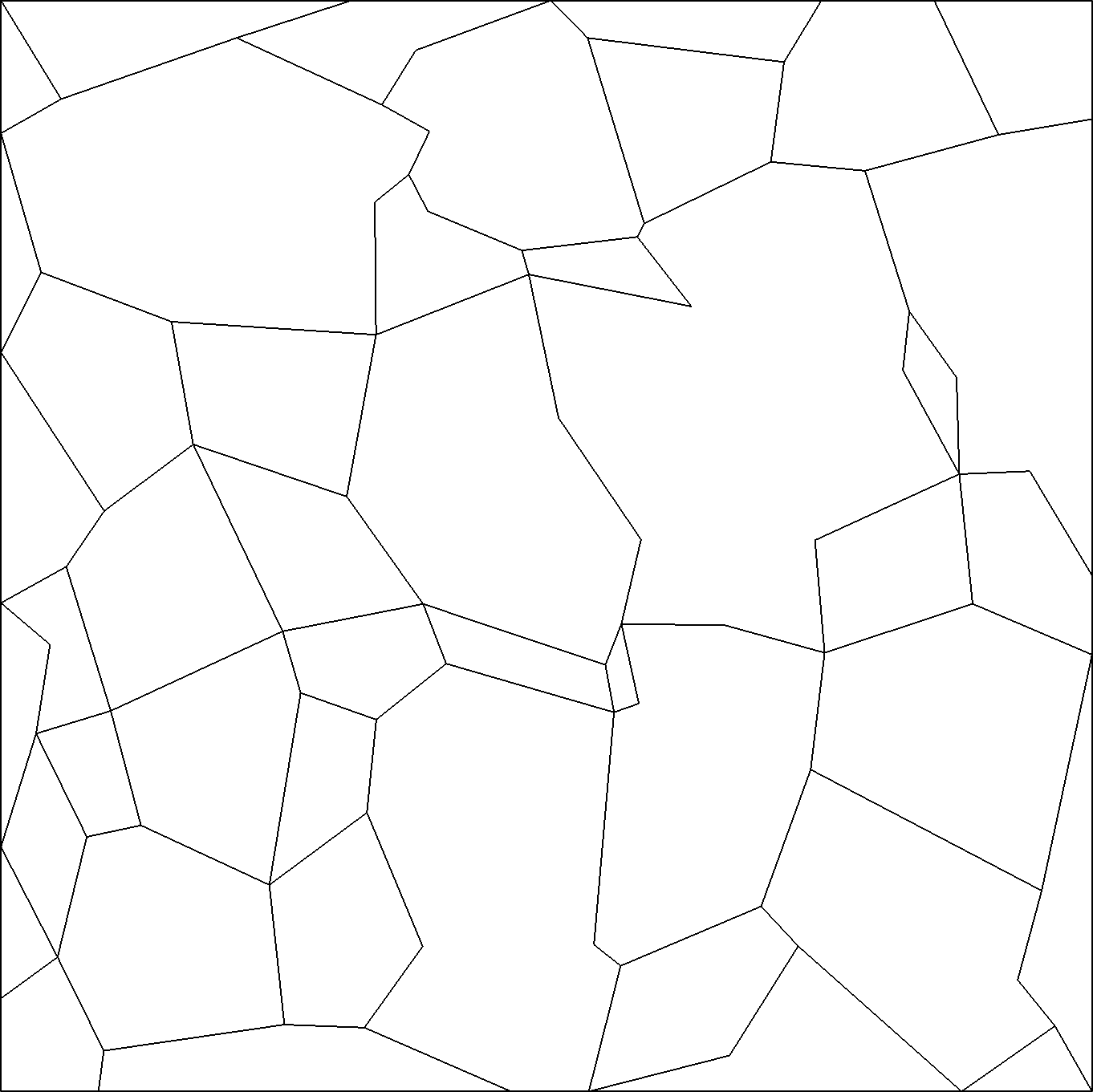}} 
\subfigure[
]{\label{fig:DIFFdelaunayPolylla}\includegraphics[width=0.26\textwidth]{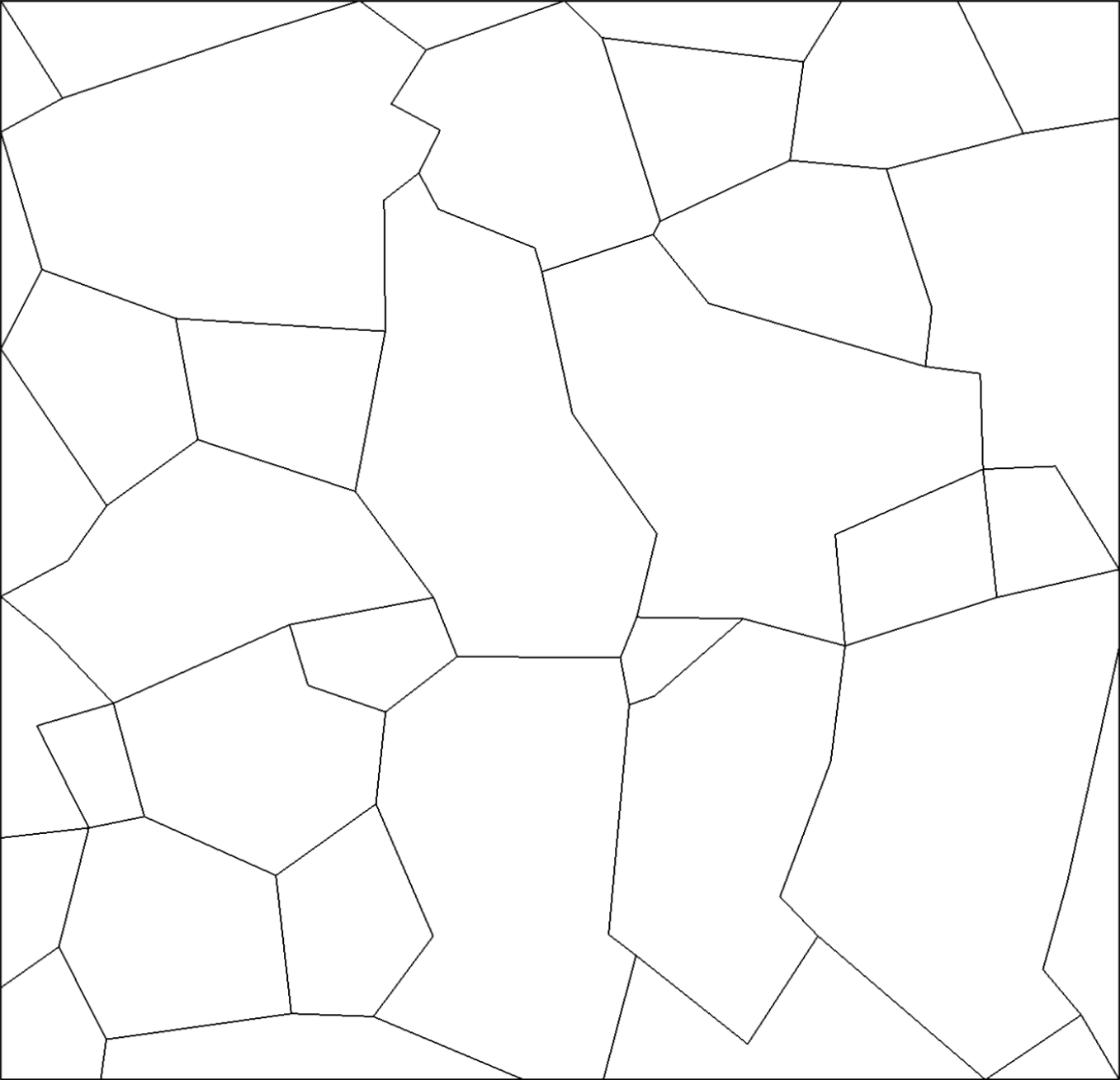}} 
    \caption{Polylla meshes generated from different kind of initial triangulations over the same input points. Triangulations were generated using the {\em 2D  Triangulations}  package of CGAL library~\cite{cgal:y-t2-21b}. (a) Triangulation generated by the incremental algorithm, (b) regular triangulation generated using random weights, (c) Delaunay triangulation, (d) Polylla mesh generated from the triangulation (a), (e) Polylla mesh generated from the triangulation (b), and (f) Polylla mesh generated from the triangulation (c).}

\label{figs:diff_triangulation} 
\end{figure}

\begin{table}[!h]
\centering
\begin{tabular}{l|rr|rrrr|}
\cmidrule{2-7} \addlinespace[-.6em]
& \multicolumn{2}{c|}{\begin{tabular}[c]{@{}c@{}}Delaunay\\ Triangulation\end{tabular}} & \multicolumn{4}{c|}{Polylla Mesh}\\ \vspace{-1.5em}\\  \cmidrule{2-7}\addlinespace[-0.6em] & \multicolumn{1}{c|}{\begin{tabular}[c]{@{}c@{}}Min\\ Angle\end{tabular}} & \multicolumn{1}{c|}{\begin{tabular}[c]{@{}c@{}}Max\\ Angle\end{tabular}} & \multicolumn{1}{c|}{\begin{tabular}[c]{@{}c@{}}Polylla's\\ Polygons\end{tabular}} & \multicolumn{1}{c|}{\begin{tabular}[c]{@{}c@{}}Min\\ Angle\end{tabular}} & \multicolumn{1}{c|}{\begin{tabular}[c]{@{}c@{}}Max\\ Angle\end{tabular}} & \multicolumn{1}{c|}{\begin{tabular}[c]{@{}c@{}}Edges per\\  Polygon\end{tabular}} \\ \hline
\multicolumn{1}{|l|}{\begin{tabular}[c]{@{}l@{}}Incremental\\ Triangulation\end{tabular}} & \multicolumn{1}{r|}{0.027}                                               & 179.87                                                                   & \multicolumn{1}{r|}{41}                                                           & \multicolumn{1}{r|}{0.36}                                                & \multicolumn{1}{r|}{303.16}                                              & 5.65                                                                              \\ \hline
\multicolumn{1}{|l|}{\begin{tabular}[c]{@{}l@{}}Regular\\ Triangulation\end{tabular}}     & \multicolumn{1}{r|}{1.36}                                                & 177.09                                                                   & \multicolumn{1}{r|}{45}                                                           & \multicolumn{1}{r|}{12.04}                                               & \multicolumn{1}{r|}{318.98}                                              & 5.33                                                                              \\ \hline
\multicolumn{1}{|l|}{\begin{tabular}[c]{@{}l@{}}Delaunay\\ Triangulation\end{tabular}}    & \multicolumn{1}{r|}{5.75}                                                & 158.61                                                                   & \multicolumn{1}{r|}{36}                                                           & \multicolumn{1}{r|}{12.04}                                               & \multicolumn{1}{r|}{279.22}                                              & 6.16                                                                              \\ \hline
\end{tabular}
\caption{Geometric information of  the Polylla meshes generated from different triangulations from  the same point set (150 points).}
\label{table:qualitycomp}
\end{table}

It is worth to mention that since the Polylla algorithm takes as input  a triangulation, it can process  any geometry domain that can be triangulated. This includes complex geometries with holes as  shown in Fig~\ref{figs:facePSLG}. In particular, Fig.~\ref{fig:PSLGface} shows a geometry  specified as a planar straight line graph (PLSG) obtained from~\cite{triangle2d}. Fig.~\ref{fig:PSLGface} is a Polylla mesh generated from a constrained Delaunay triangulation of \ref{fig:PSLGface26}. Fig.~\ref{fig:PSLGface220} shows a Polylla mesh from a refined  conforming Delaunay triangulation of~\ref{fig:PSLGface26}. Grey polygons are holes. The algorithm considers the edges of the holes  as border edges. Meshes generated from complex geometries representing   Chilean geographic regions can be seen in Appendix~\ref{Appendix:complexgeometries}.

\begin{figure}[!h]
\centering     
\subfigure[
]{\label{fig:PSLGface}\includegraphics[width=0.3\textwidth]{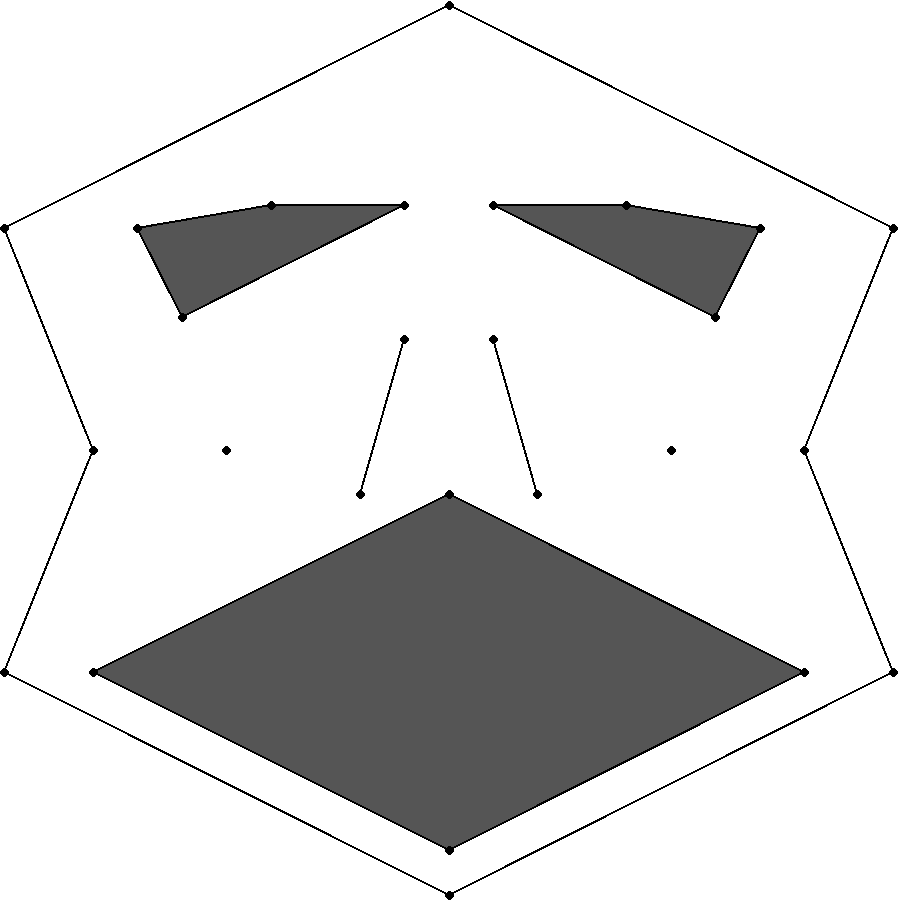}} \hspace{0.1cm}
\subfigure[
]{\label{fig:PSLGface26}\includegraphics[width=0.3\textwidth]{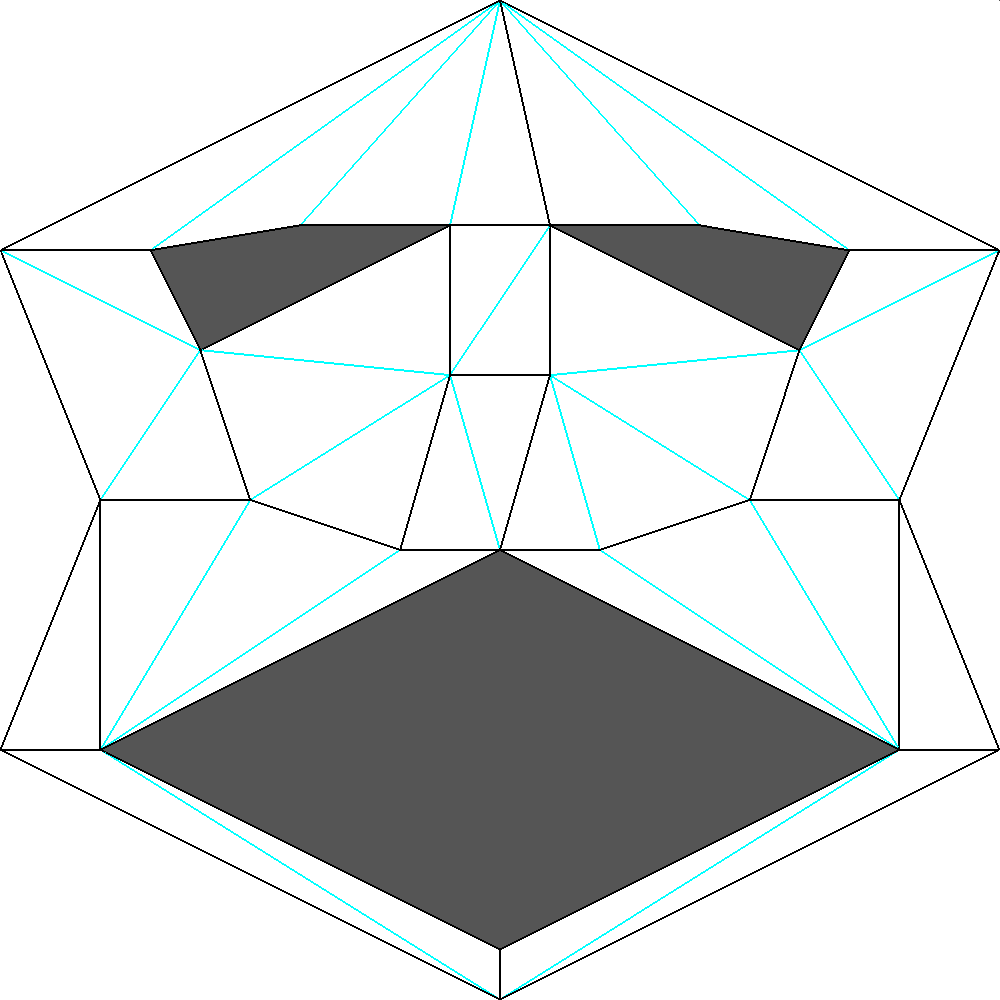}}\hspace{0.1cm}
\subfigure[
]{\label{fig:PSLGface220}\includegraphics[width=0.3\textwidth]{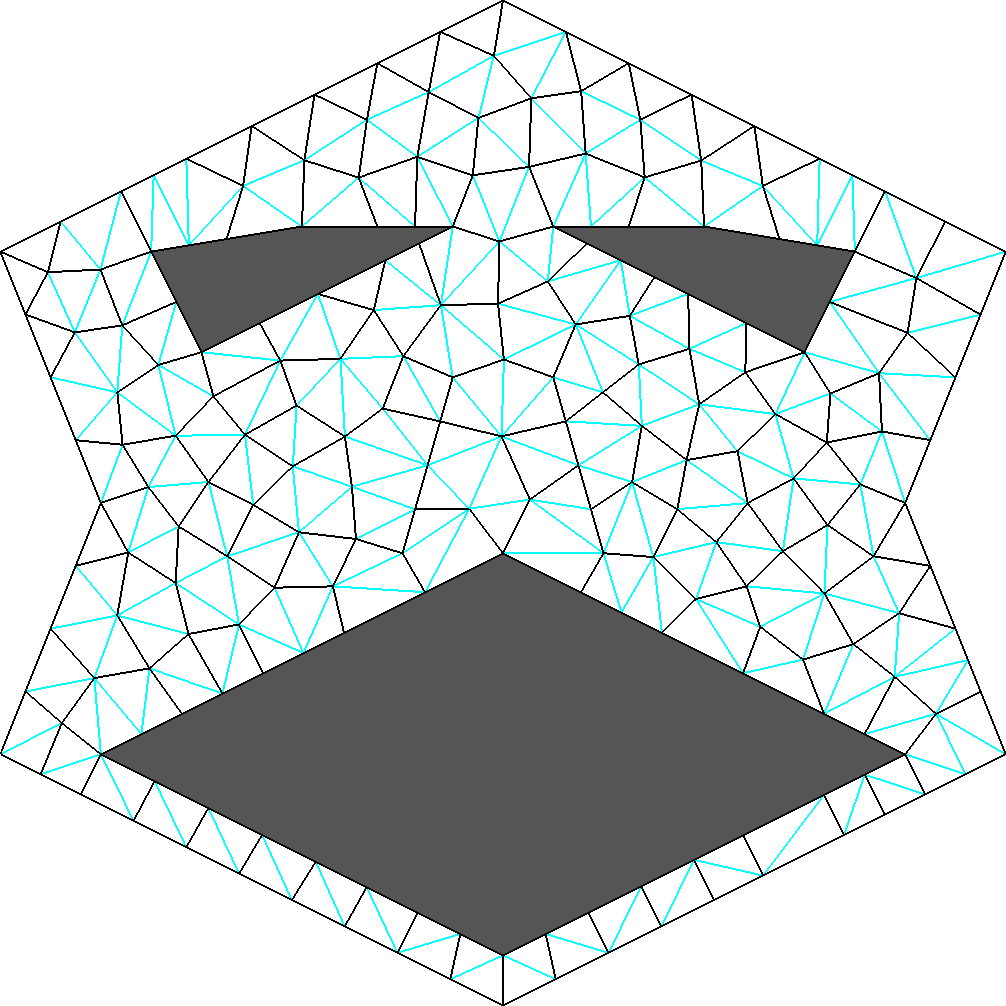}}
\caption{ Polylla mesh generated from the PLSG description of the Face  taken from~\cite{triangle2d}.  The edges of the initial triangulation are drawn using black and cyan colors.
(b) Polygonal  mesh  obtained from the constrained Delaunay triangulation of the input geometry (26 vertices). (c) Polygonal mesh  generated from  a  refined Delaunay triangulation with 220 vertices.}
\label{figs:facePSLG} 
\end{figure}

\section{Polylla meshes vs Voronoi based meshes}
\label{sec:experimental_evaluation}

Currently, if polygonal meshes different from triangulations or quad meshes are required to solve some scientific or engineering problems,  constrained Voronoi meshes are chosen. We believe that the polygonal meshes generated by Polylla  can be an alternative to Voronoi meshes when researchers and engineers would like to model domains using arbitrary polygon shapes. That is why in this section we show standard statistics of these new polygonal meshes  in contrast to constrained  Voronoi meshes. The C++ implementation of Polylla used for the experiments can be downloaded from this repository\footnote{\url{https://github.com/ssalinasfe/Polylla-Mesh}}.

The experiment was designed as follows: the domain is a square with  a set of random points in its interior varying from $10^2$ to $10^6$. A tolerance $\pm \gamma$ is defined and used in case of  a point $p$ is too close to one square side; in that case the point $p$ is inserted in the border edge.  The statistics of the polygonal meshes generated by Polylla starting from Delaunay triangulations generated  by using Detri2 \cite{Detri2} are summarized in Table \ref{table:results}.  In the case constrained Voronoi meshes, Deldir \cite{deldir} was used to generate the meshes and to obtain their mesh  information. The statistics for the same initial points (sites) are presented in Table \ref{table:resultsVoronoi}.


\begin{table}
\centering
\resizebox{\columnwidth}{!}{%
\begin{tabular}{|r|r|r|r|r|r|r|r|}
\hline
\multicolumn{1}{|c|}{\begin{tabular}[c]{@{}c@{}}Input\\ points\end{tabular}} & \multicolumn{1}{c|}{\begin{tabular}[c]{@{}c@{}}Triangles\end{tabular}} & \multicolumn{1}{c|}{\begin{tabular}[c]{@{}c@{}}Terminal-edge\\ regions\end{tabular}} & \multicolumn{1}{c|}{\begin{tabular}[c]{@{}c@{}}Polylla\\ polygons\end{tabular}} & \multicolumn{1}{c|}{\begin{tabular}[c]{@{}c@{}}Maximum barrier-edge  \\ tips in non-simple  polygons\end{tabular}} & \multicolumn{1}{c|}{\begin{tabular}[c]{@{}c@{}}Barrier-edge \\ tips\end{tabular}}  & \multicolumn{1}{c|}{\begin{tabular}[c]{@{}c@{}}Average triangles \\ per polygon\end{tabular}} & \multicolumn{1}{c|}{\begin{tabular}[c]{@{}c@{}}Average polygon\\ vertices \end{tabular}} \\ \hline
10      & 9       & 3      & 3      & 0 & 0     & 3.00 & 6.67 \\ \hline
$10^2$     & 180     & 35     & 38     & 1 & 3     & 4.74 & 6.89 \\ \hline
$10^3$    & 1911    & 281    & 305    & 2 & 24    & 6.27 & 8.30 \\ \hline
$10^4$   & 19768   & 3034   & 3228   & 4 & 194   & 6.12 & 8.13 \\ \hline
$10^5$  & 199163  & 30288  & 32271  & 4 & 1984  & 6.17 & 8.17 \\ \hline
$10^6$ & 1997497 & 302944 & 322561 & 4 & 19647 & 6.19 & 8.19 \\ \hline
\end{tabular}
}
\caption{Geometric information of Polylla mesh }
\label{table:results}
\end{table}

\begin{table}
\centering
\begin{tabular}{|r|r|r|r|r|}
\hline
\multicolumn{1}{|c|}{\begin{tabular}[c]{@{}c@{}}Input \\ Points\end{tabular}} & \multicolumn{1}{c|}{\begin{tabular}[c]{@{}c@{}}Voronoi \\ vertices\end{tabular}} & \multicolumn{1}{c|}{\begin{tabular}[c]{@{}c@{}}Voronoi\\ Regions\end{tabular}} & \multicolumn{1}{c|}{\begin{tabular}[c]{@{}c@{}}Voronoi\\ Edges\end{tabular}} & \multicolumn{1}{c|}{\begin{tabular}[c]{@{}c@{}}Edges\\ per Region\end{tabular}} \\ \hline
10                                                                            & 22                                                                               & 10                                                                              & 31                                                                           & 4.90                                                                             \\ \hline
$10^2$                                                                        & 202                                                                              & 100                                                                             & 301                                                                          & 5.75                                                                            \\ \hline
$10^3$                                                                        & 2002                                                                             & 1000                                                                            & 3001                                                                         & 5.90                                                                             \\ \hline
$10^4$                                                                        & 20001                                                                            & 10000                                                                           & 30001                                                                        & 5.96                                                                          \\ \hline
\end{tabular}
\caption{Geometric information of constrained Voronoi diagram }
\label{table:resultsVoronoi}
\end{table}

In Table \ref{table:results}, we can observe that over $10^4$ input points, the number of initial triangles per polygon is $6.5$ and the number of vertices per polygon is $8.5$, both values  on average. The number of barrier-edge tips is less than $3\%$ of the number of points, so the reparation phase just adds less than $10\%$ of polygons to the mesh. If we compare Polylla meshes with the meshes generated from the Voronoi diagram, the constrained  Voronoi meshes contain 3 times more polygons than our meshes. Each Voronoi region based polygon is delimited  in average by  6 edges and  Polylla polygons by 8 edges. 
Moreover, the Polylla meshes use only the points given as input; in contrast, the Voronoi based meshes use new points, the Voronoi points, one per each triangle of the Delaunay mesh. Since the number of triangles is greater than the number of input points, the size of the Voronoi mesh is greater than the size of the Polylla  mesh not only in terms of polygons, but also in terms of mesh points.

It is worth mentioning that a Polylla mesh does not need to insert extra-points at the boundary to fit the domain geometry; in contrast the constrained  Voronoi mesh includes new points at the boundary/interfaces inserted while cutting the Voronoi regions that go outside the domain.

 We have also evaluated the time performance of Polylla. The results  shown in Fig. \ref{fig:timecomp} were made on  Patagón, a computer with two Intel Xeon Platinum 8260 of 2.4Ghz, located at the  Universidad Austral de Chile \cite{patagon-uach}. The results are the average time of  five program executions with same data-set in a different core.  
 The time of each phase  is included together with time to generate the initial triangulation. 
 Of the three phases described in section \ref{sec:the_algorithm}, the  phase that takes less time is the non-simple polygon reparation phase. The main reason is that the number of terminal-edge regions with barrier edges  is around 1\% of the total  number  terminal-edges regions formed from random point sets  as  can be seen in Fig. \ref{fig:timecomp}.
 The time  of the Label phase is higher than the time of the traversal phase. This can be explained due to the use of floating-point arithmetic  to calculate the length of each edge  and so to assign the proper label. 


In order to compare the CPU time required to generate constrained Voronoi and Polylla meshes,  we looked for free and open source tools. The CGAL library provides  {\em  2D Voronoi Diagram Adaptor} package~\cite{cgal:k-vda2-21b} to generate Voronoi diagrams, but this package does not provide a method to cut Voronoi regions against the domain boundary. So we  implemented  a function to do this process. In contrast, Detri2~\cite{Detri2} offers a robust method to generate constrained Voronoi meshes.   Table~\ref{table:compvoropoylla} shows the cpu-times in seconds needed to generate polygonal meshes using Polylla, Detri2D and CGAL from 100000, 500000 and 1000000 points over a L-shape domain. Experiments were run in a CPU Intel(R) Core(TM) i5-9600K of 3.70GHz. Detri2d computes the CDT first and then the CVD from the CDT. That is why we included the time to generate the CDT separated from the generation of the CVD. 
The Polylla time includes only the time spent in  the three phases that processes the input triangulation to generate the polygon mesh. The CGAl algorithm to generate the VD includes generation of the Delaunay triangulation first and from this triangulation computes the Voronoi Diagram. The CVD cost in CGAL is the sum of VD and CV. This preliminary comparison shows that the time to generate a CDT using Detri2d or CGAL plus the time Polylla takes to generate the polygonal mesh is much less than the time needed for generating a CVD either using Detri2 or the CGAl library. The main reason is that constrained Voronoi based meshing algorithms require to compute and insert new points (Voronoi points), to cut Voronoi regions  and insert new boundary points. In contrast, the Polylla algorithm  just needs  to process the vertices, edges and triangles of the input triangulation.  The domain boundary is already represented by some triangle edges.

\begin{table}[]
\centering
\begin{tabular}{lr|rr|rllr|}
\cmidrule{3-8} \addlinespace[-.6em]
                               & \multicolumn{1}{c|}{}        & \multicolumn{2}{c|}{Detri2}                           & \multicolumn{4}{c|}{CGAL}                                                                                          \\ \hline
\multicolumn{1}{|l|}{Vertices} & \multicolumn{1}{c|}{Polylla} & \multicolumn{1}{c|}{CDT}   & \multicolumn{1}{c|}{CVD} & \multicolumn{1}{c|}{CDT}    & \multicolumn{1}{l|}{VD}     & \multicolumn{1}{l|}{CV}     & \multicolumn{1}{c|}{CVD} \\ \hline
\multicolumn{1}{|l|}{100000}   & 0.193                        & \multicolumn{1}{r|}{0.326} & 229.489                  & \multicolumn{1}{r|}{1.712}  & \multicolumn{1}{l|}{28.39}  & \multicolumn{1}{l|}{9.68}   & 38.07                    \\ \hline
\multicolumn{1}{|l|}{500000}   & 0.7684                       & \multicolumn{1}{r|}{1.703} & 5736.473                 & \multicolumn{1}{r|}{9.006}  & \multicolumn{1}{l|}{335.26} & \multicolumn{1}{l|}{56.16}  & 391.42                   \\ \hline
\multicolumn{1}{|l|}{1000000}  & 1.3294                       & \multicolumn{1}{r|}{3.346} & 21341.734                & \multicolumn{1}{r|}{16.082} & \multicolumn{1}{l|}{870.52} & \multicolumn{1}{l|}{120.52} & 991.04                   \\ \hline
\end{tabular}
\caption{Time comparison, in seconds, of Polylla vs Constrained Voronoi Diagram (CDV) using Detri2 and CGAL. CDT is the time to generate a Constrained Delaunay Triangulation, VD is the time of generate Voronoi Diagram, CV is the time need to cut all regions of Voronoi Diagram and CVD is the total time of generate the Constrained Voronoi Diagram. }
\label{table:compvoropoylla}
\end{table}

\begin{figure}
\centering     

\includegraphics[width=0.5\textwidth]{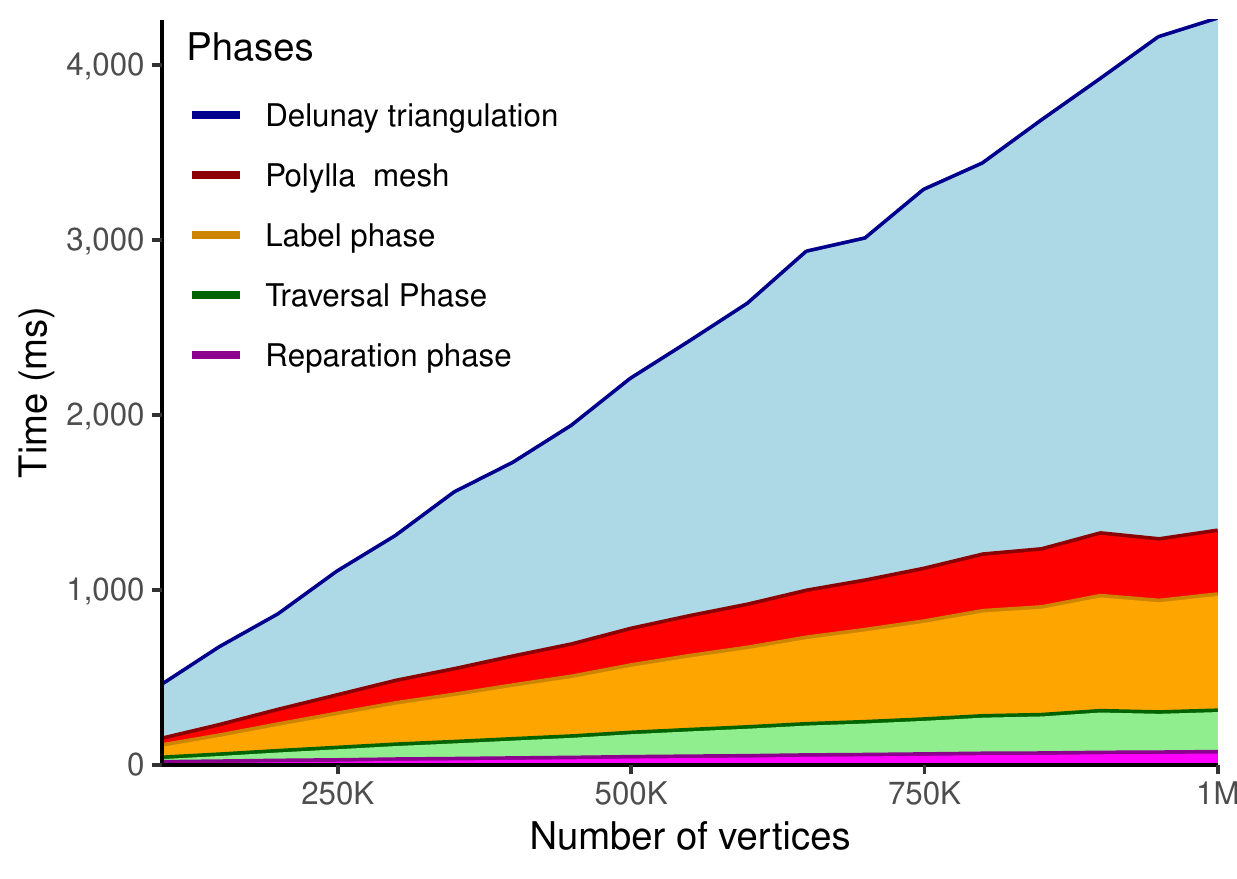}
\caption{Time performance of the  different algorithm phases. The label Polylla mesh indicates  the time of the three phases.
}
\label{fig:timecomp} 
\end{figure}


\begin{figure}[]
\centering     
\subfigure[
]{\label{fig:polymesh}\includegraphics[width=0.4\textwidth]{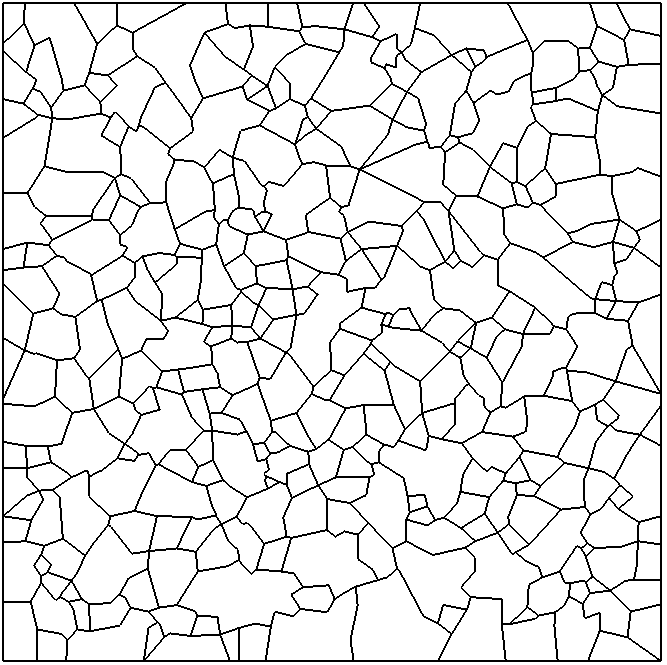}} 
\subfigure[
]{\label{fig:voromesh10000}\includegraphics[width=0.4\textwidth]{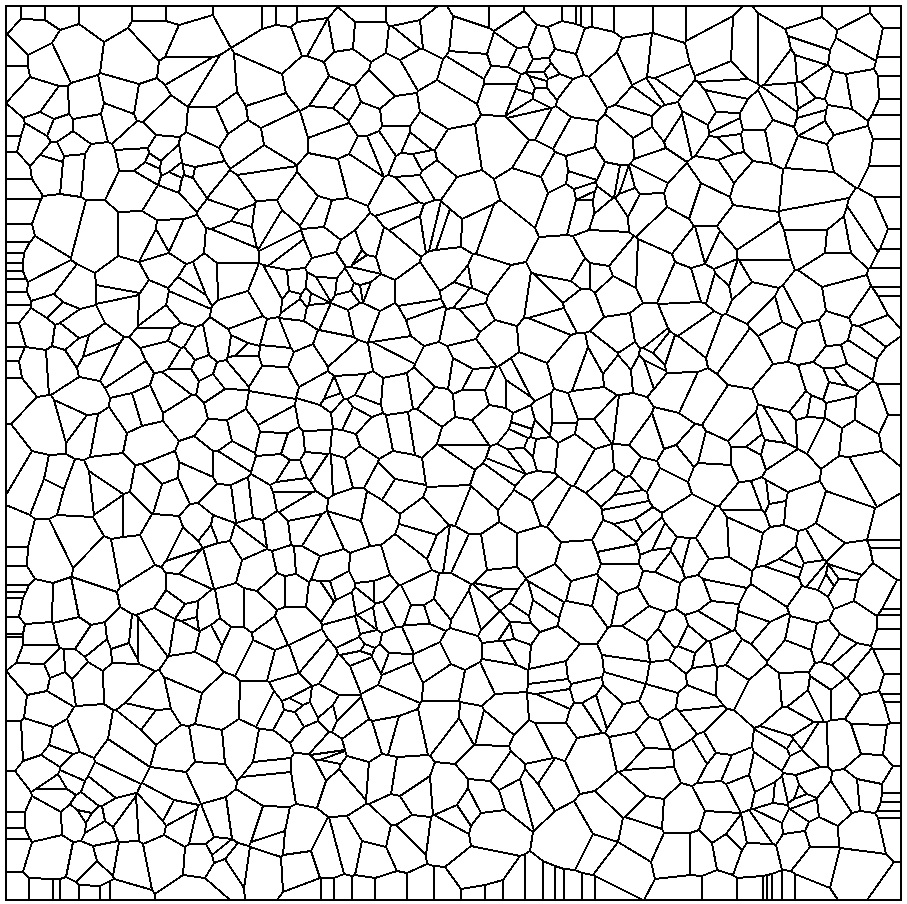}}
\caption{Meshes generated from 1000 random points. (a) Polylla mesh.\textbf{(b)} Constrained Voronoi mesh generated with Detri2qt \cite{Detri2}. }
\label{figs:voro_comp} 
\end{figure}

\begin{figure}[]
\centering     
\subfigure[
]{\label{fig:PSLGUnicornTriangulation}\includegraphics[width=0.25\textwidth]{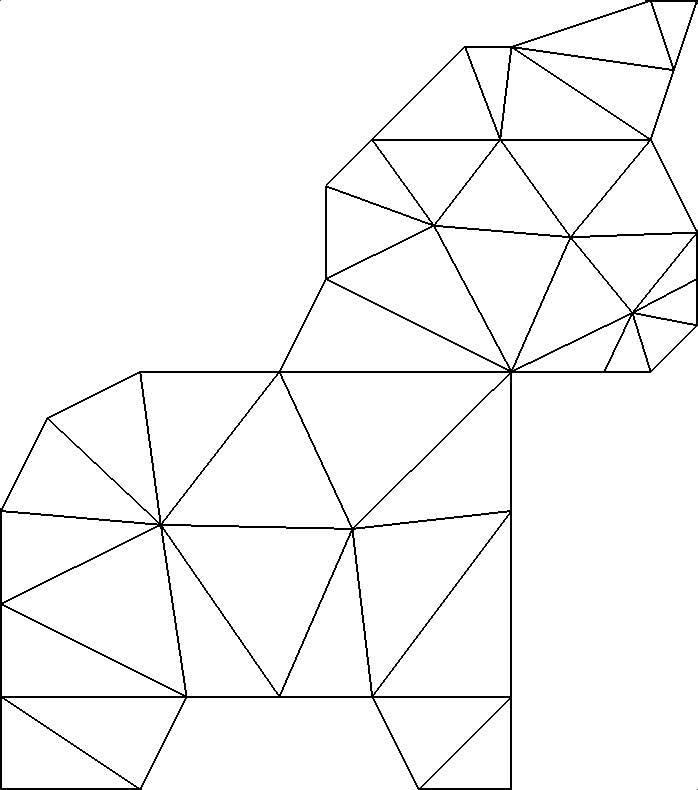}}%
\subfigure[
]{\label{fig:PSLGUnicorn}\includegraphics[width=0.25\textwidth]{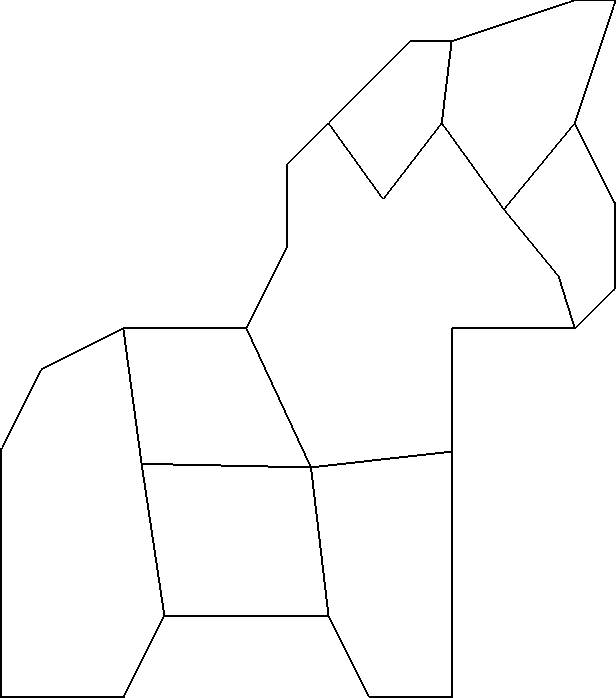}}%
\subfigure[
]{\label{fig:PSLGUnicornVoronoi}\includegraphics[width=0.25\textwidth]{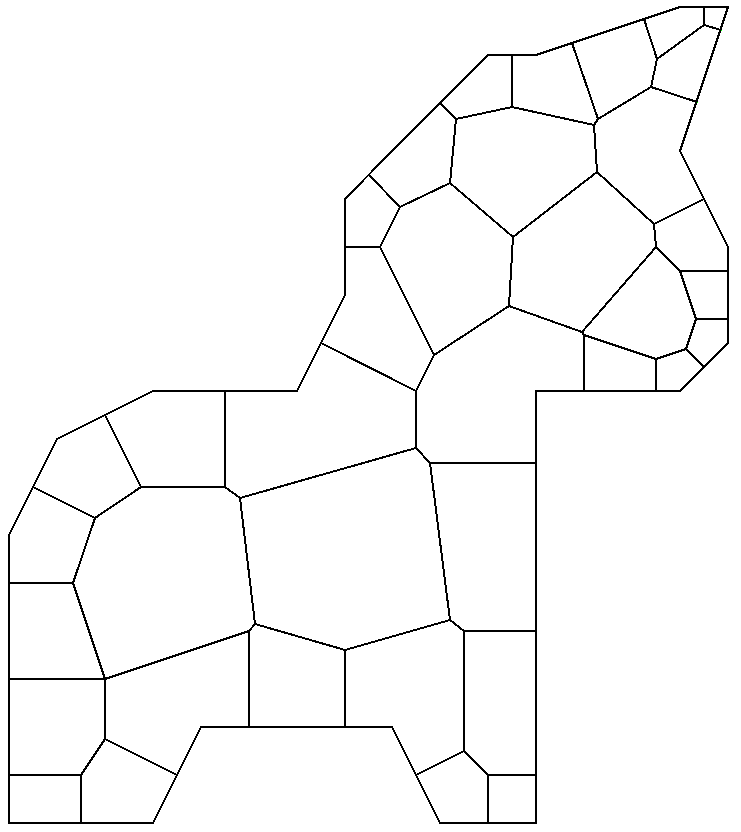}}%
\subfigure[
]{\label{fig:PSLGUnicorn100}\includegraphics[width=0.25\textwidth]{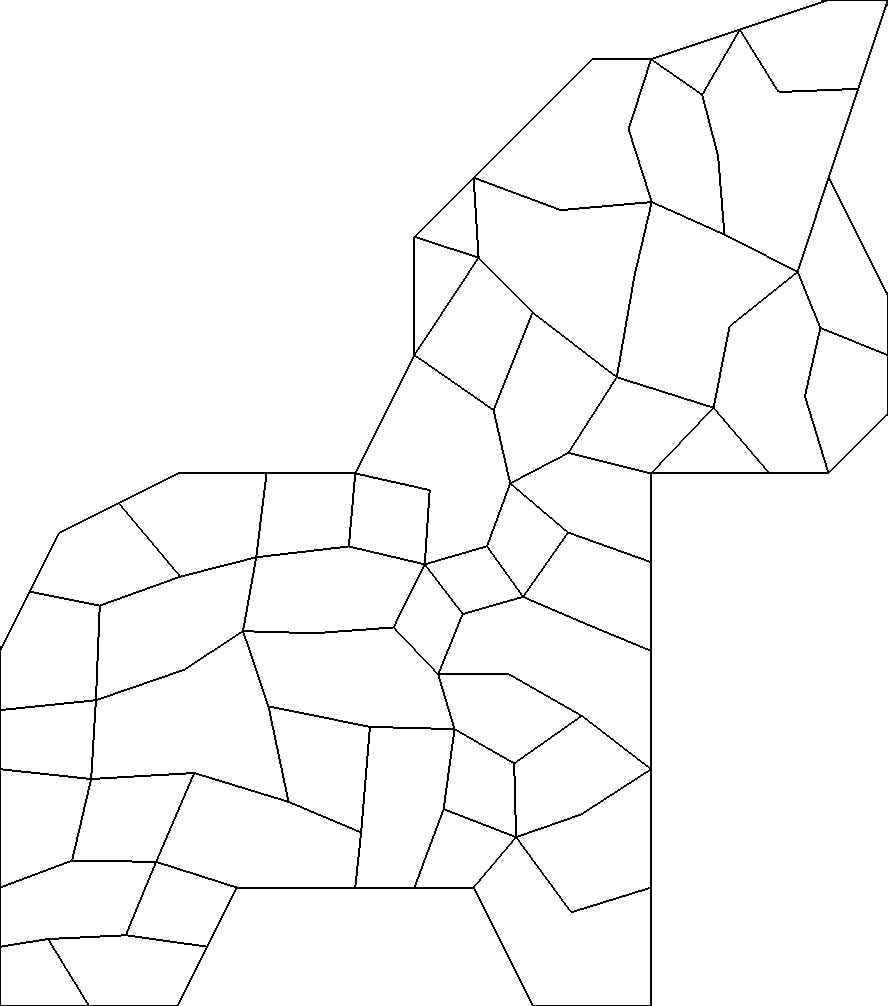}}

\caption{ Qualitative comparison using an Unicorn PLSG as input \cite{AlejandroNUA2019} (a) Triangulation of unicorn PLSG with 36 vertices. (b) Polylla mesh includes exactly the 36 input vertices. (c) The constrained Voronoi mesh requires 100 vertices to model the same input. (d) Polylla mesh from a refined  Delaunay triangulation with 100 vertices.}
\label{figs:univornPSLG} 
\end{figure}

In Fig.~\ref{figs:voro_comp} a qualitative comparison between a Polylla mesh and a constrained Voronoi  mesh can be observed. Both meshes were generated from the same initial random point set. Fig.~\ref{fig:polymesh} shows the polygonal mesh generated by Polylla and Fig.~\ref{fig:voromesh10000} the constrained Voronoi mesh generated by Detri2qt. Some other examples of polygonal meshes generated for non convex domains are shown in Fig. \ref{figs:facePSLG} and Fig. \ref{figs:univornPSLG}.

 Fig. \ref{figs:univornPSLG} shows polygonal meshes for an Unicorn example generated by both  the Polylla  (Fig. \ref{fig:PSLGUnicorn}) and  Detri2qt (Fig. \ref{figs:univornPSLG}) tools. The initial triangulation and interior points are shown  in Fig. \ref{fig:PSLGUnicornTriangulation}. As expected, the  constrained Voronoi  mesh contains more points (100 vertices) and polygons than the Polylla mesh as shown in Fig. \ref{figs:univornPSLG}. Just for a qualitative comparison, we used Polylla to generate a polygonal mesh from 100 vertices and this mesh is shown in Fig. \ref{fig:PSLGUnicorn100}. 




\section{Preliminary Simulation Results}
\label{sec:simulation_results}
In this section, we assess the Polylla meshes using the virtual element method (VEM)~\cite{Basisprinciples}. To this end, an L-shaped domain is considered. Fig.~\ref{figs:LshapedPolylla} shows this domain meshed with a random and a semiuniform Polylla sample mesh. For comparison purposes, we also consider random and semiuniform Voronoi meshes (sample meshes are depicted in Fig.~\ref{figs:LshapedVoronoi}). The chosen problem is governed by the Laplace equation and its exact solution is given by~\cite{MITCHELL2013350}
\begin{equation*}
u(x_1,x_2)=r^{2/3} \sin(2/3\,\theta), \quad r=\sqrt{x_1^2+x_2^2}, \quad \theta(x_1,x_2)=\arctan(x_2/x_1).
\end{equation*}

The boundary conditions are of Dirichlet type with the exact solution imposed on the entire domain boundary. The re-entrant corner of the L-shaped domain introduces a singularity in the solution that manifests  itself as unbounded derivatives of $u$ at the origin. The numerical solution (denoted by $u_h$) is assessed  through its convergence with mesh refinements. Figs.~\ref{figs:NormsLshapedRandom} and 
\ref{figs:NormsLshapedSemiuniform} present the $L^2$ norm and the $H^1$ seminorm of the error, where it is shown that the VEM on Polylla (random and semiuniform) and Voronoi (random and semiuniform) meshes delivers accurate solutions with optimal convergence rates of 2 (for the $L^2$ norm) and 1 (for the $H^1$ seminorm).

The performance of VEM using Polylla (random and semiuniform) and Voronoi (random and semiuniform) meshes are compared in Fig.~\ref{figs:PerformanceLshaped}, where the $H^1$ seminorm of the error and the normalized CPU time are each plotted as a function of the number of degrees of freedom (DOF). The normalized CPU time is defined as the ratio of the CPU time of a particular simulation to the maximum CPU time found for any of the simulations that were run. Each of the four set of meshes (random Polylla, semiuniform Polylla, random Voronoi and semiuniform Voronoi) consists of four meshes of increasing number of degrees of freedom. Therefore, in total, there are sixteen meshes in the study. 

The CPU time is measured from the reading of the mesh until the solution of the system of equations is ended. Each mesh is run ten times and the CPU time recorded is the average CPU time. From Fig.~\ref{figs:PerformanceLshaped} it is observed that for equal number of degrees of freedom similar accuracy and computational cost are obtained for the four set of meshes.

Finally, for completeness of the presented numerical results, contour plots of the VEM solution on Polylla meshes are shown in Figs.~\ref{figs:ContourPlots} and~\ref{figs:ContourPlotsGrad} for the $u_h$ and $\bm{\nabla}u_h$ fields, respectively.

\begin{figure}[!bth]
\centering     
\subfigure[]{\label{fig:LshapedPolyllaRandom}\includegraphics[width=0.4\textwidth]{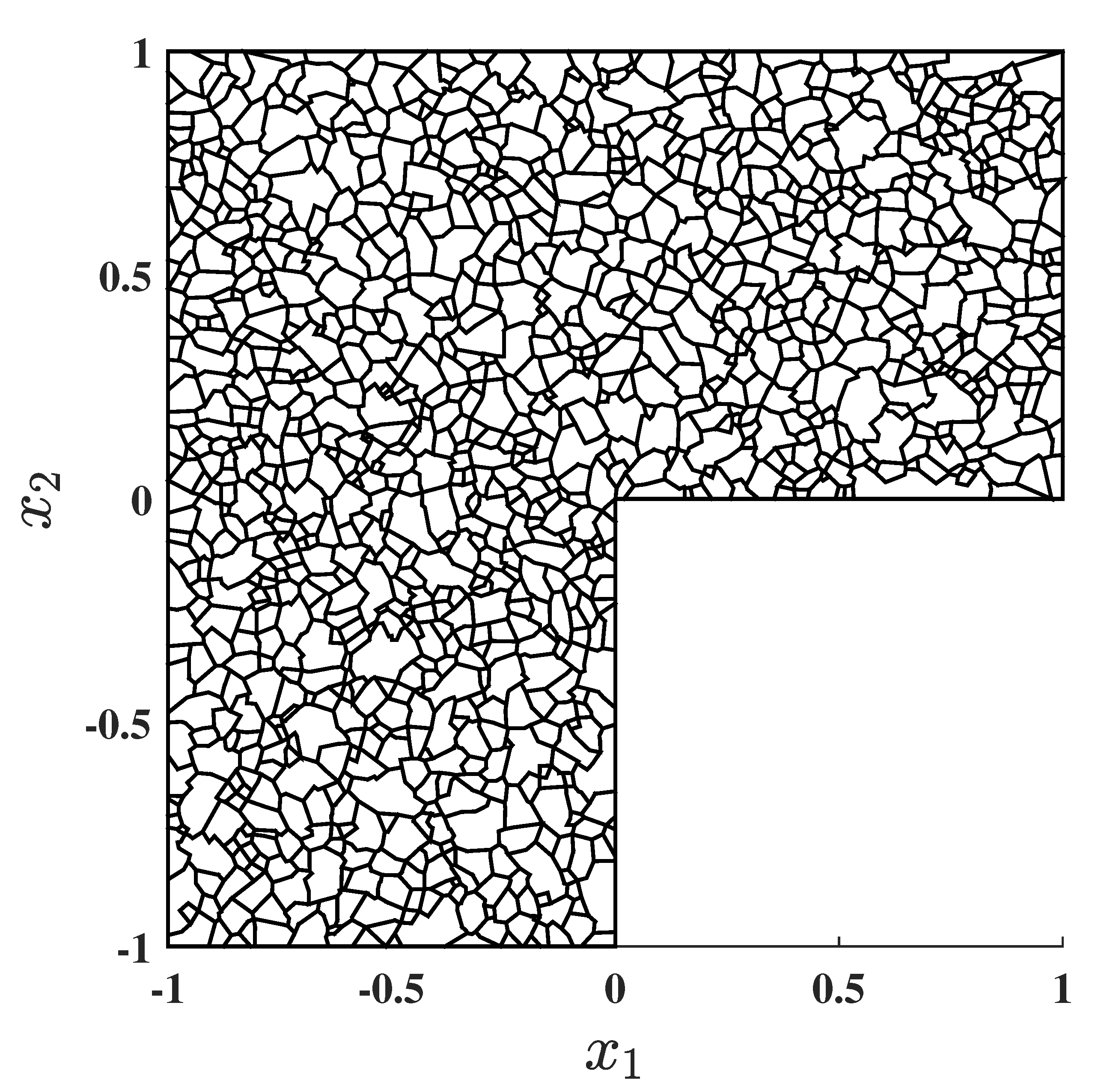}} \hspace{0.1cm}
\subfigure[]{\label{fig:LshapedPolyllaSemiuniform}\includegraphics[width=0.4\textwidth]{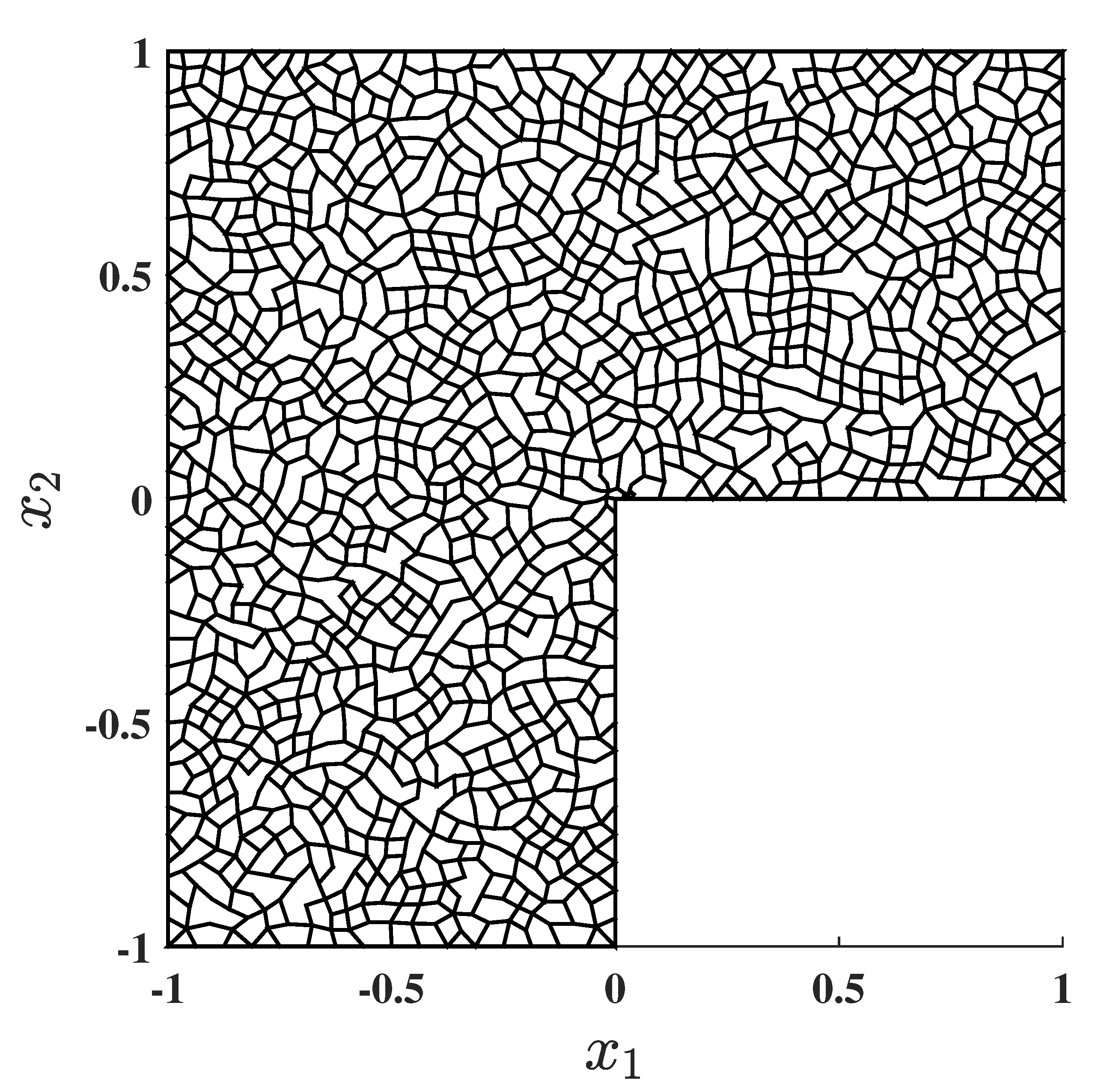}}
\caption{L-shaped domain meshed with \textbf{(a)} a random and \textbf{(b)} a semiuniform Polylla mesh.}
\label{figs:LshapedPolylla} 
\end{figure}

\begin{figure}[!bth]
\centering     
\subfigure[]{\label{fig:LshapedVoronoiRandom}\includegraphics[width=0.4\textwidth]{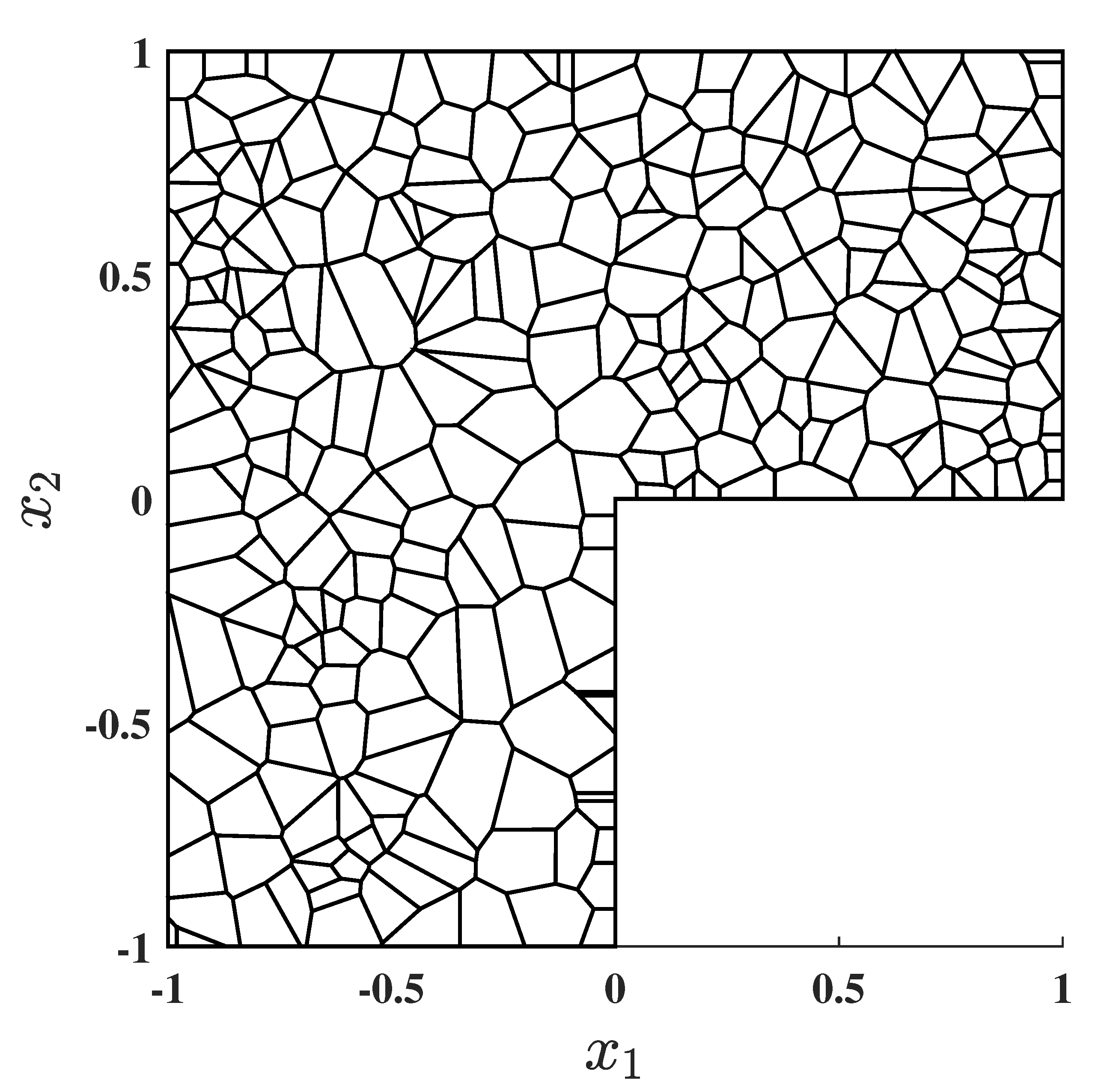}} \hspace{0.1cm}
\subfigure[]{\label{fig:LshapedVoronoiSemiuniform}\includegraphics[width=0.4\textwidth]{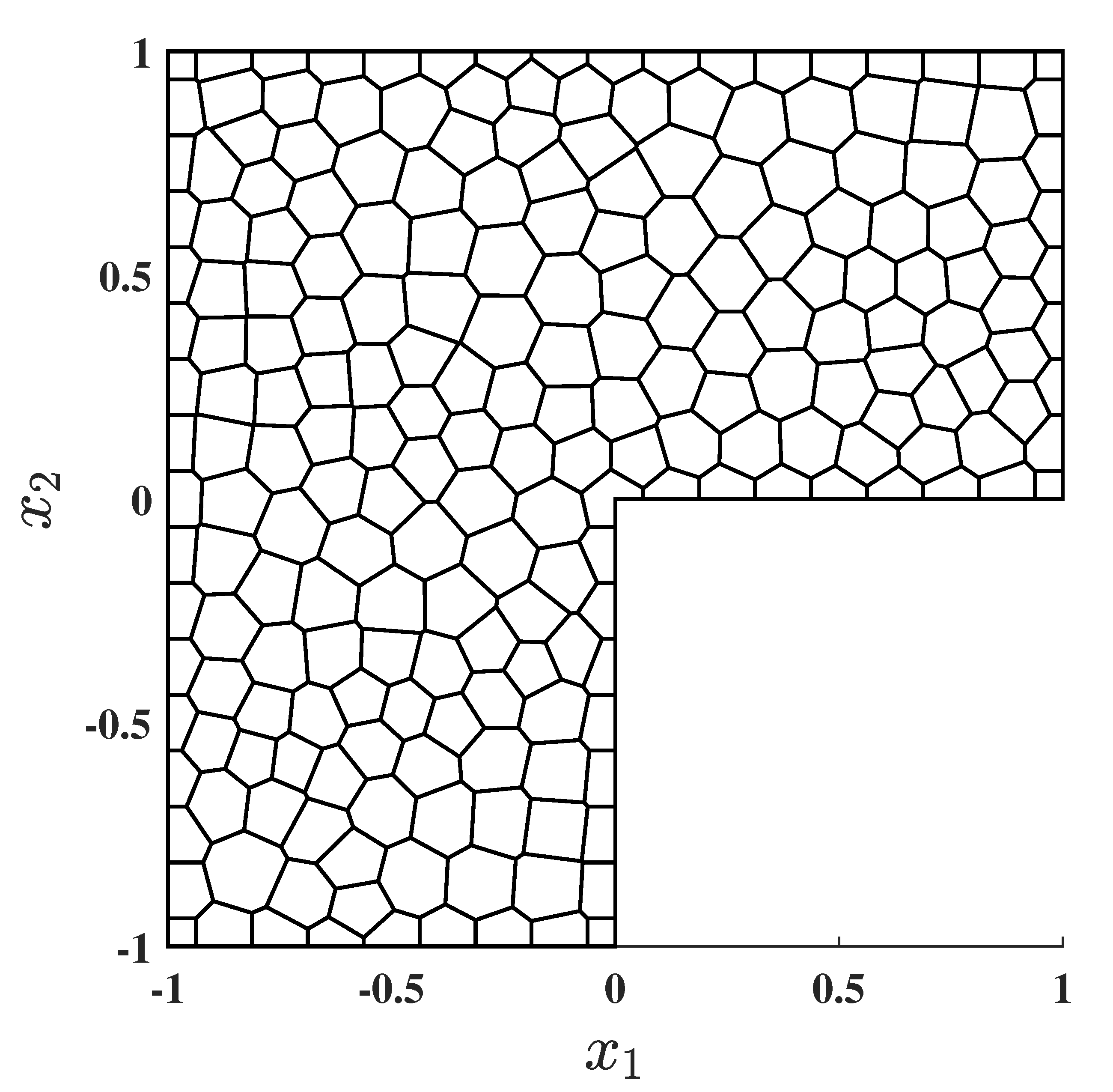}}
\caption{L-shaped domain meshed with \textbf{(a)} a random and \textbf{(b)} a semiuniform Voronoi mesh.}
\label{figs:LshapedVoronoi} 
\end{figure}

Note that random Polylla meshes (Fig~\ref{fig:LshapedPolyllaRandom}) were generated from constrained Delaunay triangulation built over  a L-shaped domain with points randomly generated in its interior. Semiuniform Polylla (Fig~\ref{fig:LshapedPolyllaSemiuniform}) meshes were generated  from a conforming Delaunay triangulation using pygalmesh~\cite{Schlomer_pygalmesh_Python_interface}. To generate the same number of triangles as the random mesh, an arbitrary maximum edge size constraint was given as input.  The random (Fig~\ref{fig:LshapedVoronoiRandom}) and semiuniform (Fig~\ref{fig:LshapedVoronoiSemiuniform}) Voronoi meshes were generated from  the points (sites) of both the  constrained Delaunay triangulation and the conforming Delaunay triangulation, respectively. Next, each infinite Voronoi region was intersected with the L-shaped domain to generate the constrained Voronoi mesh.

\begin{figure}[!bth]
\centering     
\mbox{
\subfigure[]{\label{fig:L2LshapedPolyllaRandom}\includegraphics[width=0.4\textwidth]{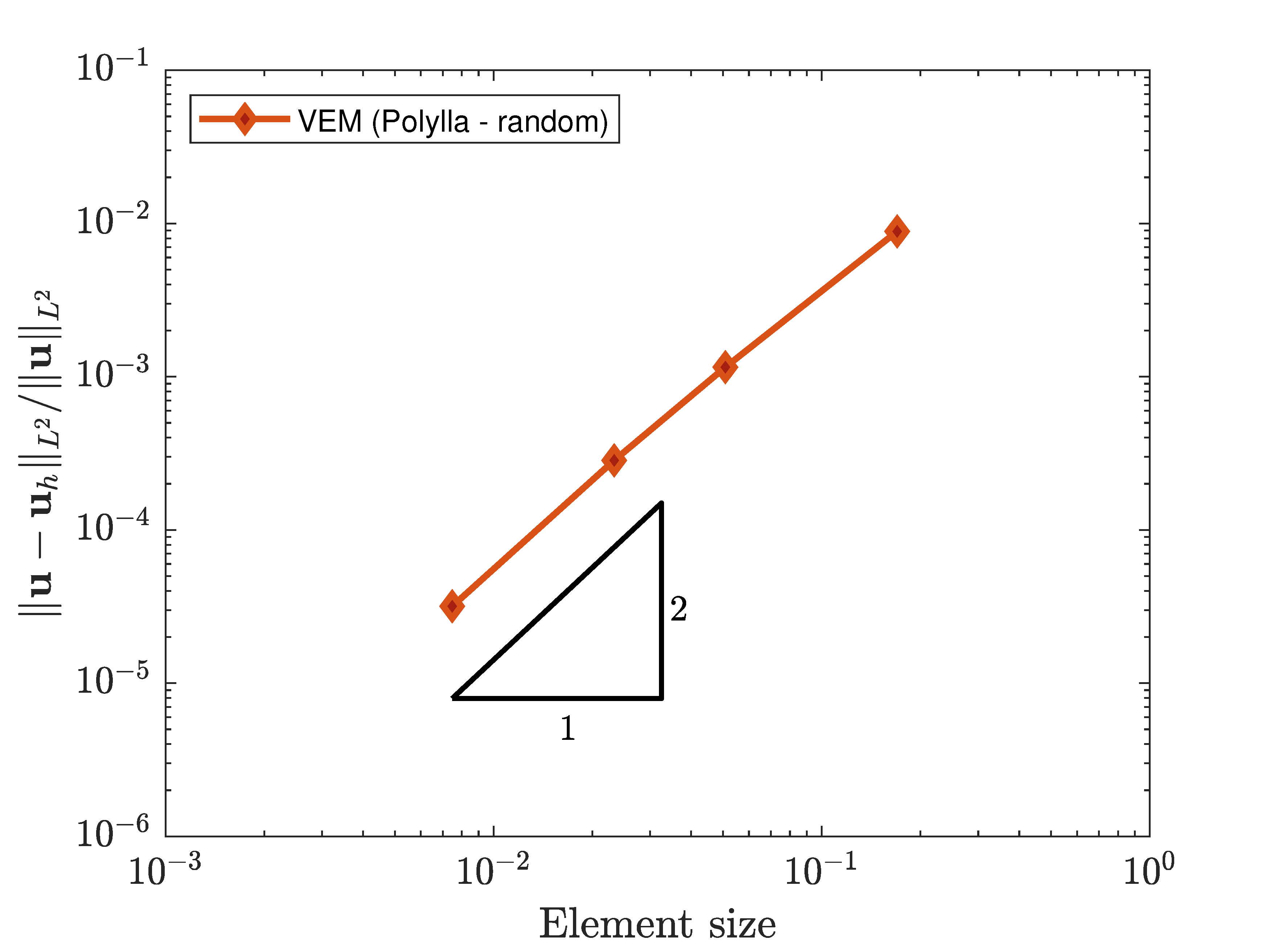}} \hspace{0.1cm}
\subfigure[]{\label{fig:H1LshapedPolyllaRandom}\includegraphics[width=0.4\textwidth]{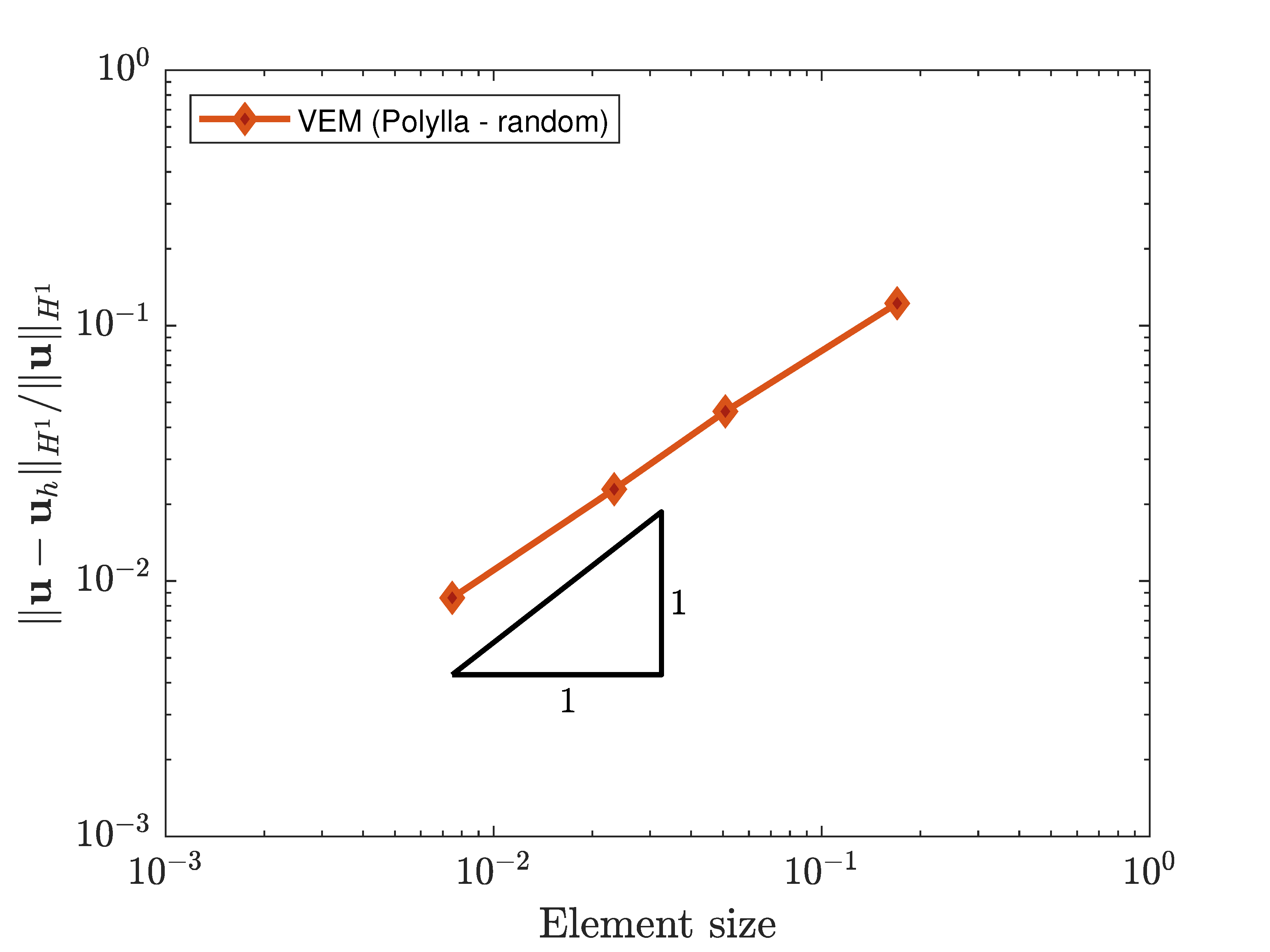}}
}
\mbox{
\subfigure[]{\label{fig:L2LshapedVoronoiRandom}\includegraphics[width=0.4\textwidth]{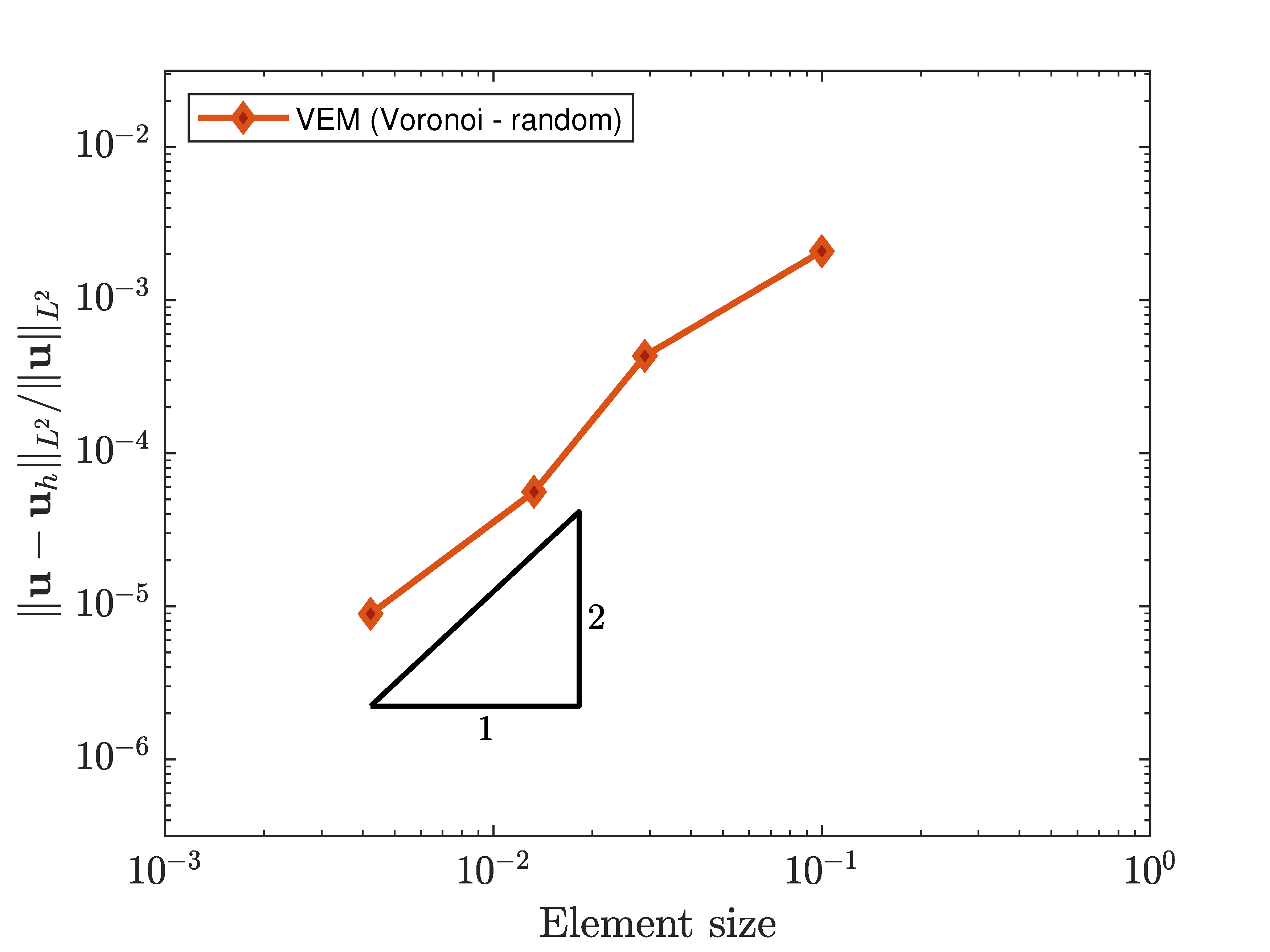}} \hspace{0.1cm}
\subfigure[]{\label{fig:H1LshapedVoronoiRandom}\includegraphics[width=0.4\textwidth]{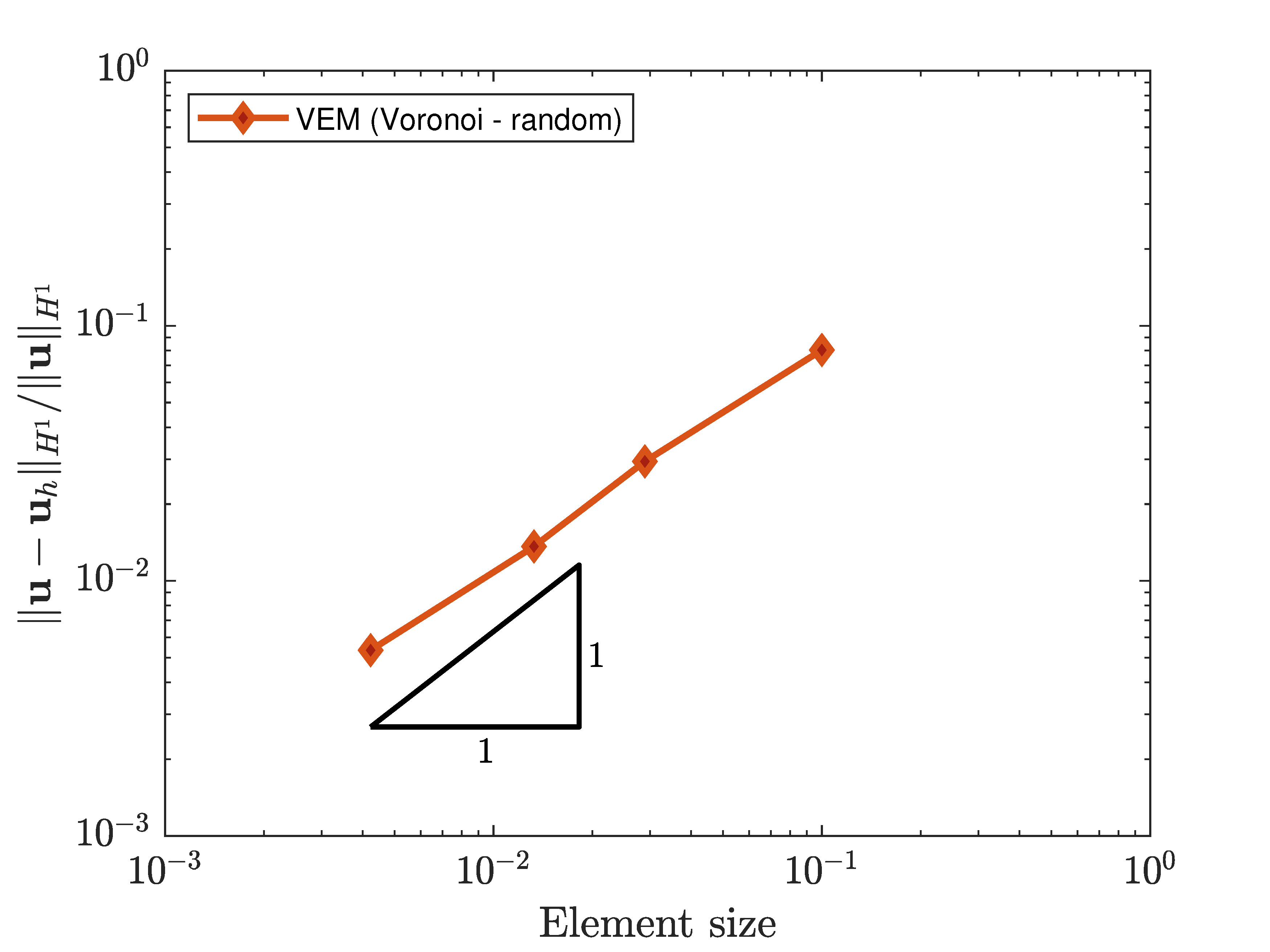}}
}
\caption{$L^2$ norm and $H^1$ seminorm of the error using the VEM on random Polylla meshes (\textbf{(a)} and \textbf{(b)}, 
respectively), and on random Voronoi meshes (\textbf{(c)} and \textbf{(d)}, respectively).}
\label{figs:NormsLshapedRandom} 
\end{figure}

\begin{figure}[!bth]
\centering     
\mbox{
\subfigure[]{\label{fig:L2LshapedPolyllaSemiuniform}\includegraphics[width=0.4\textwidth]{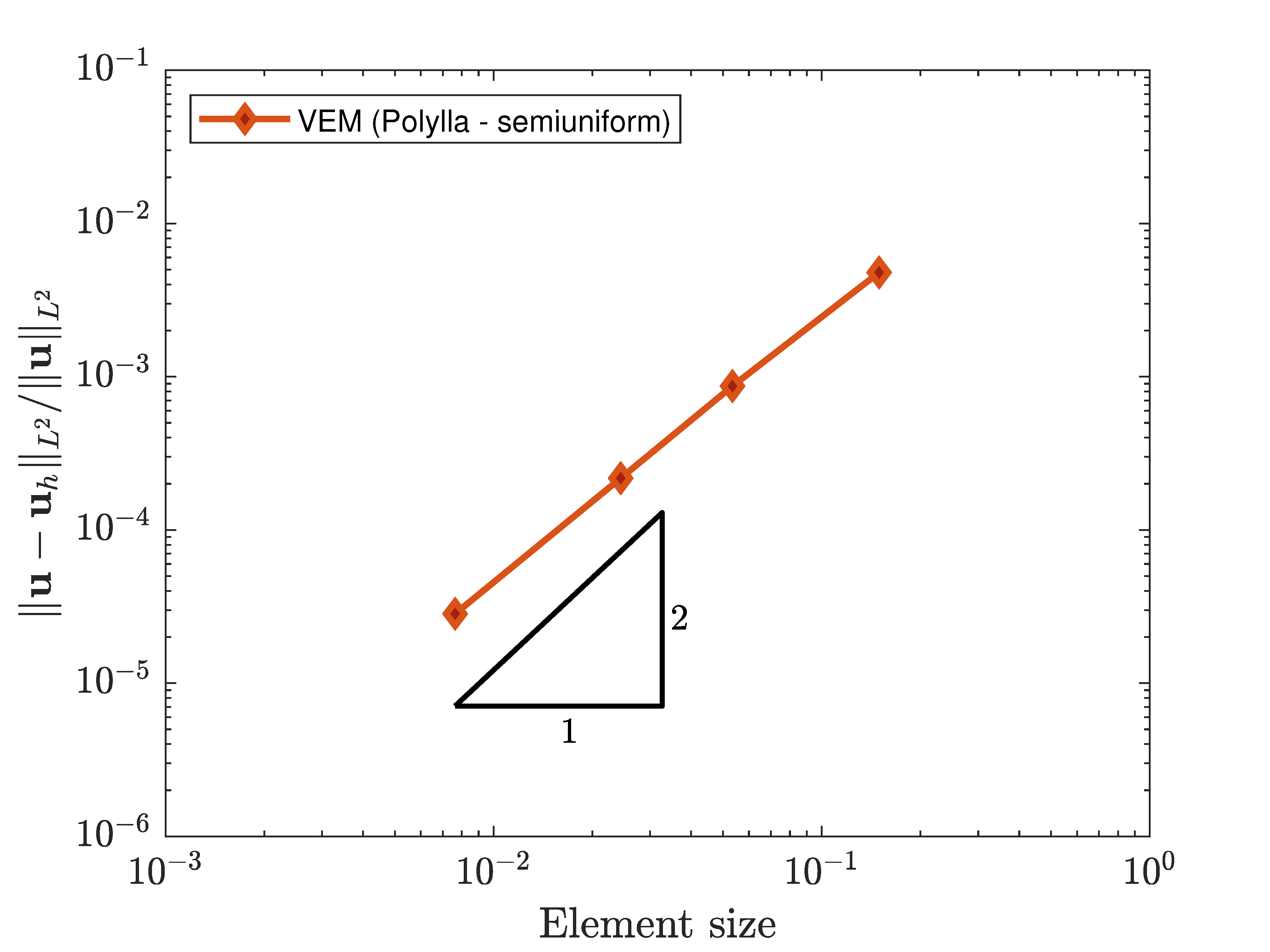}} \hspace{0.1cm}
\subfigure[]{\label{fig:H1LshapedPolyllaSemiuniform}\includegraphics[width=0.4\textwidth]{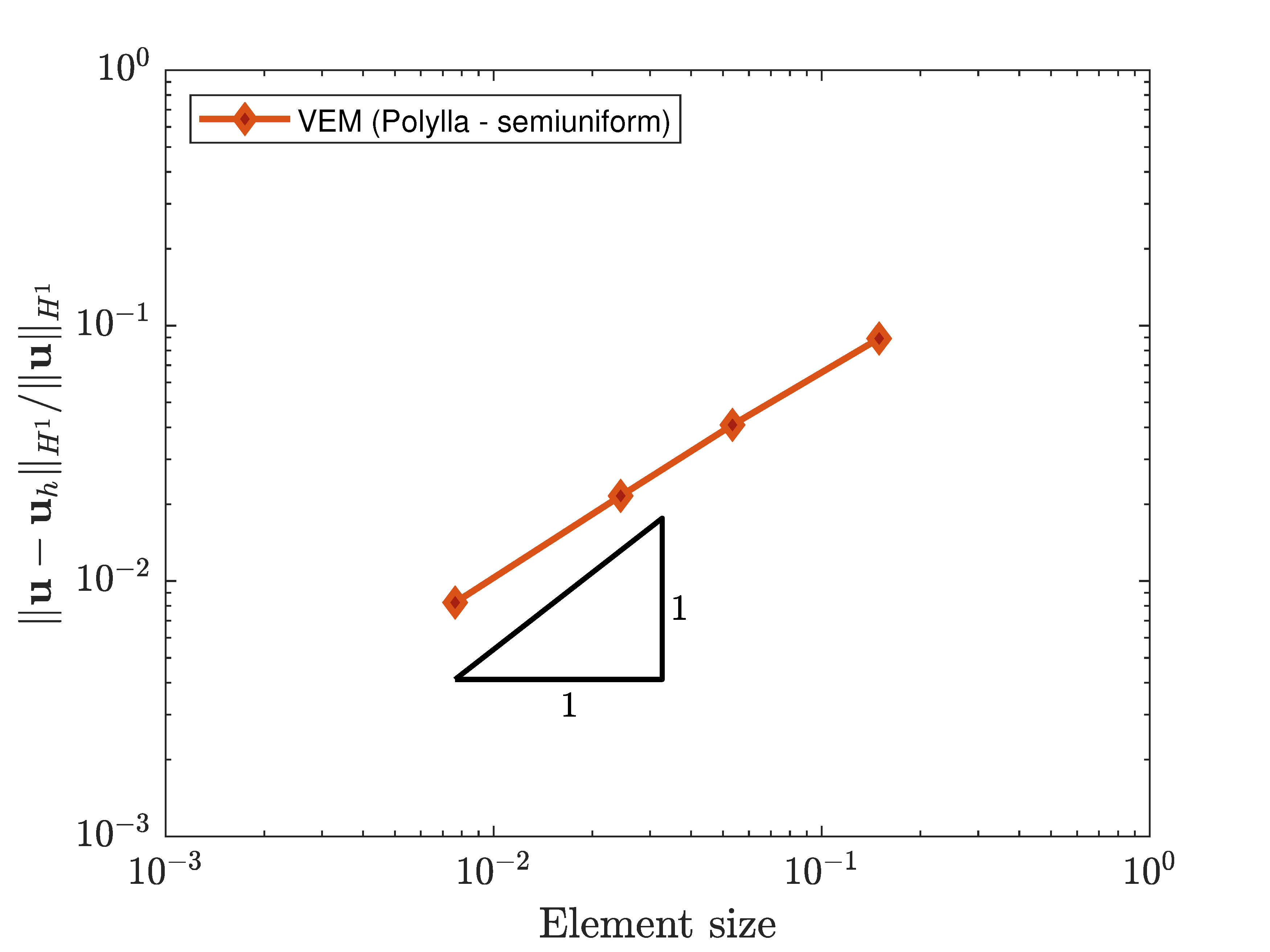}}
}
\mbox{
\subfigure[]{\label{fig:L2LshapedVoronoiSemiuniform}\includegraphics[width=0.4\textwidth]{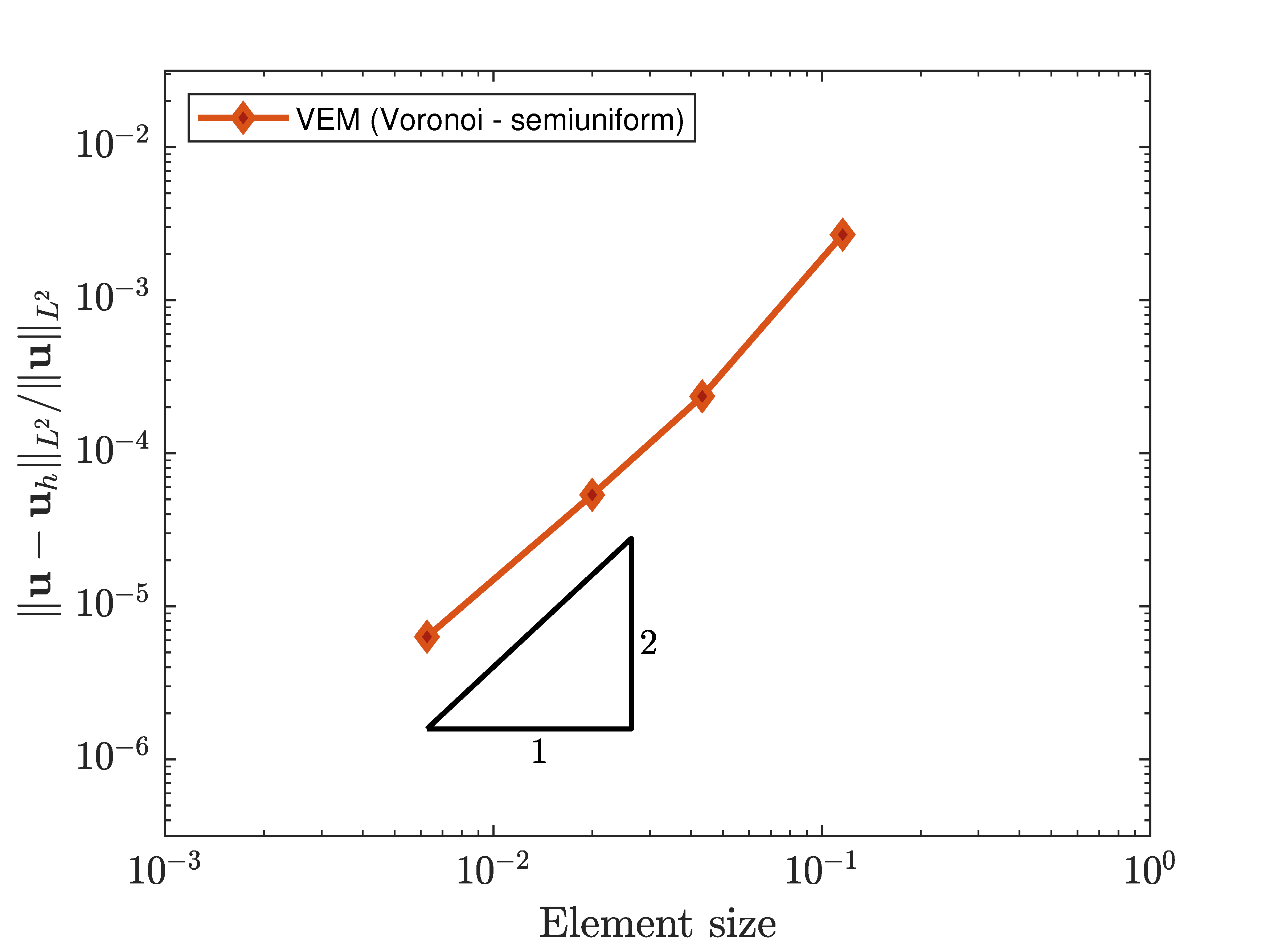}} \hspace{0.1cm}
\subfigure[]{\label{fig:H1LshapedVoronoiSemiuniform}\includegraphics[width=0.4\textwidth]{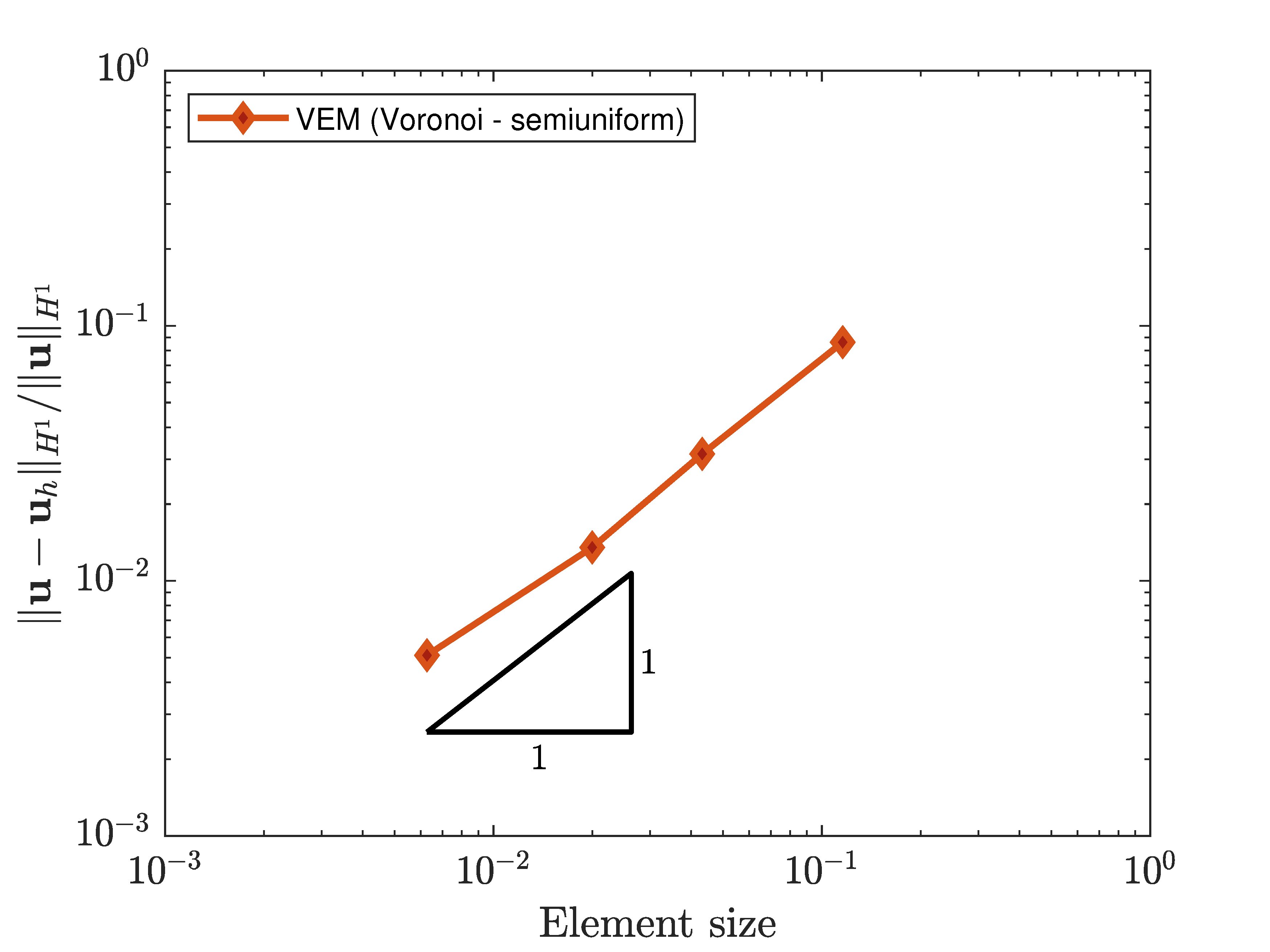}}
}
\caption{$L^2$ norm and $H^1$ seminorm of the error using the VEM on semiuniform Polylla meshes (\textbf{(a)} 
and \textbf{(b)}, respectively), and on semiuniform Voronoi meshes (\textbf{(c)} and \textbf{(d)}, respectively).}
\label{figs:NormsLshapedSemiuniform} 
\end{figure}

\begin{figure}[!bth]
\centering     
\mbox{
\subfigure[]{\label{fig:PerformanceLshapedH1}\includegraphics[width=0.4\textwidth]{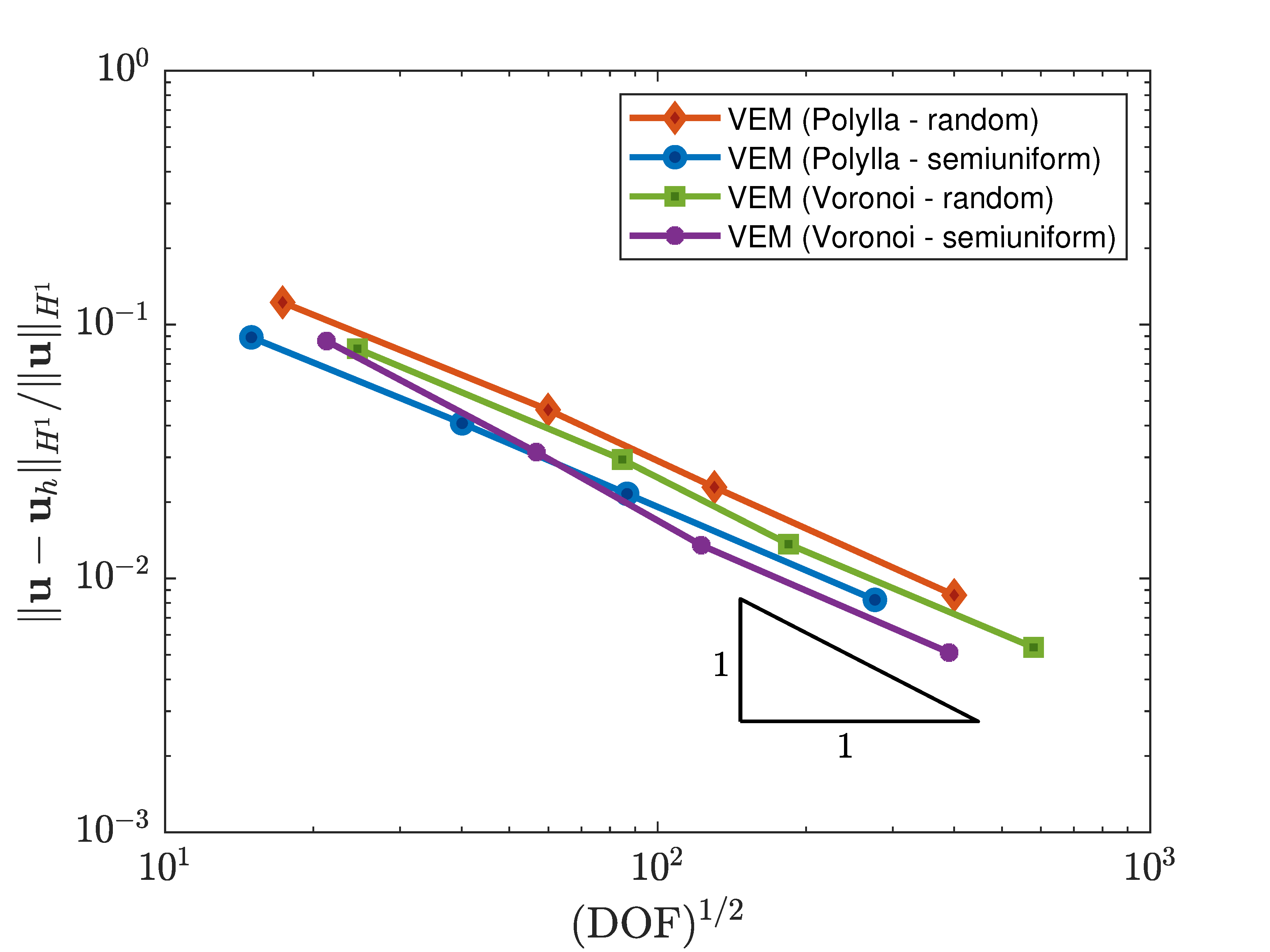}} \hspace{0.1cm}
\subfigure[]{\label{fig:PerformanceLshapedCPU}\includegraphics[width=0.4\textwidth]{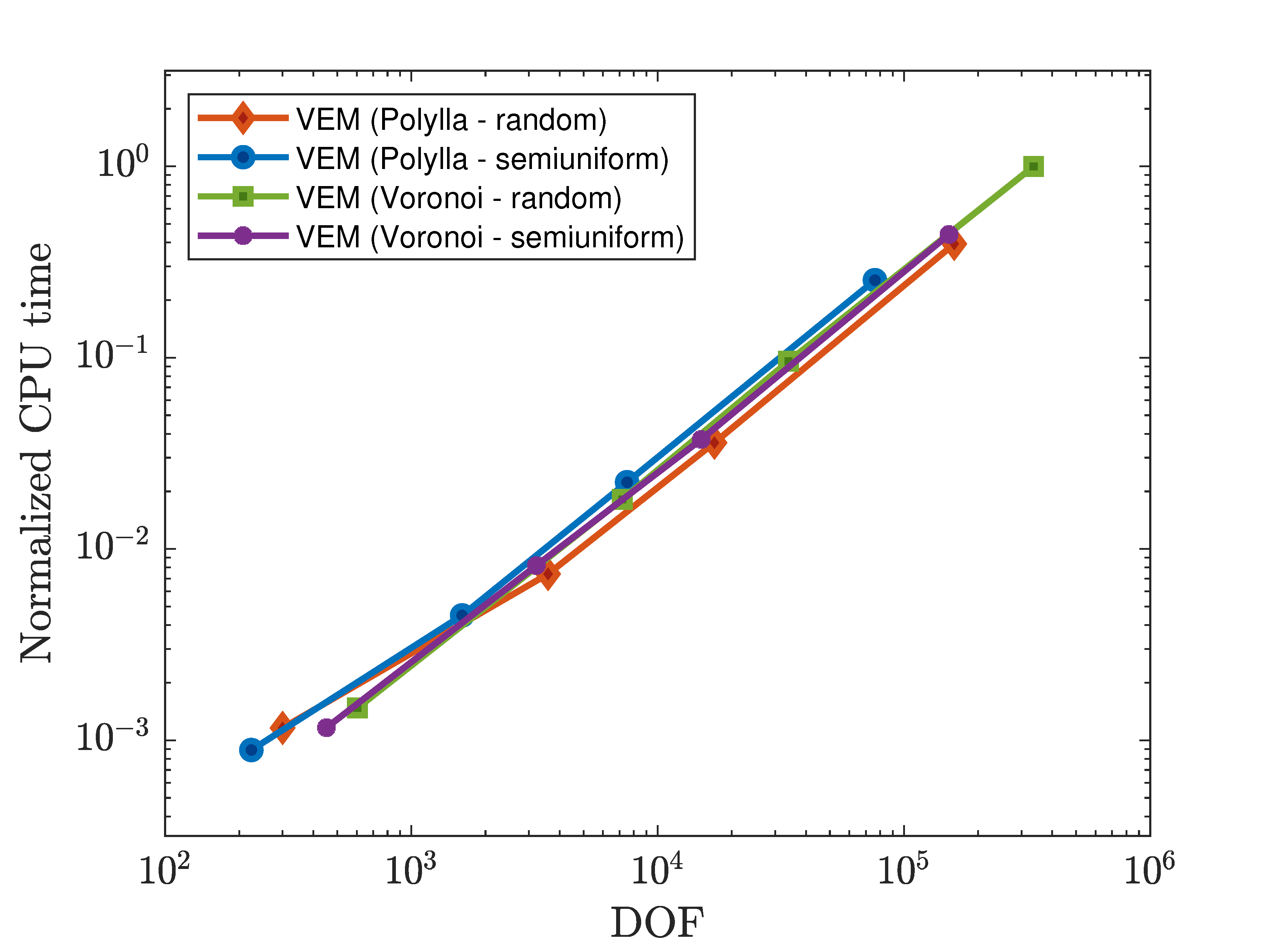}}
}
\caption{Performance of the VEM using Polylla and Voronoi meshes. \textbf{(a)} $H^1$ seminorm of the error and 
\textbf{(b)} CPU time as a function of the degrees of freedom (DOF).}
\label{figs:PerformanceLshaped} 
\end{figure}

\begin{figure}[!bth]
\centering     
\subfigure[]{\label{fig:ContourRandom}\includegraphics[width=0.4\textwidth]{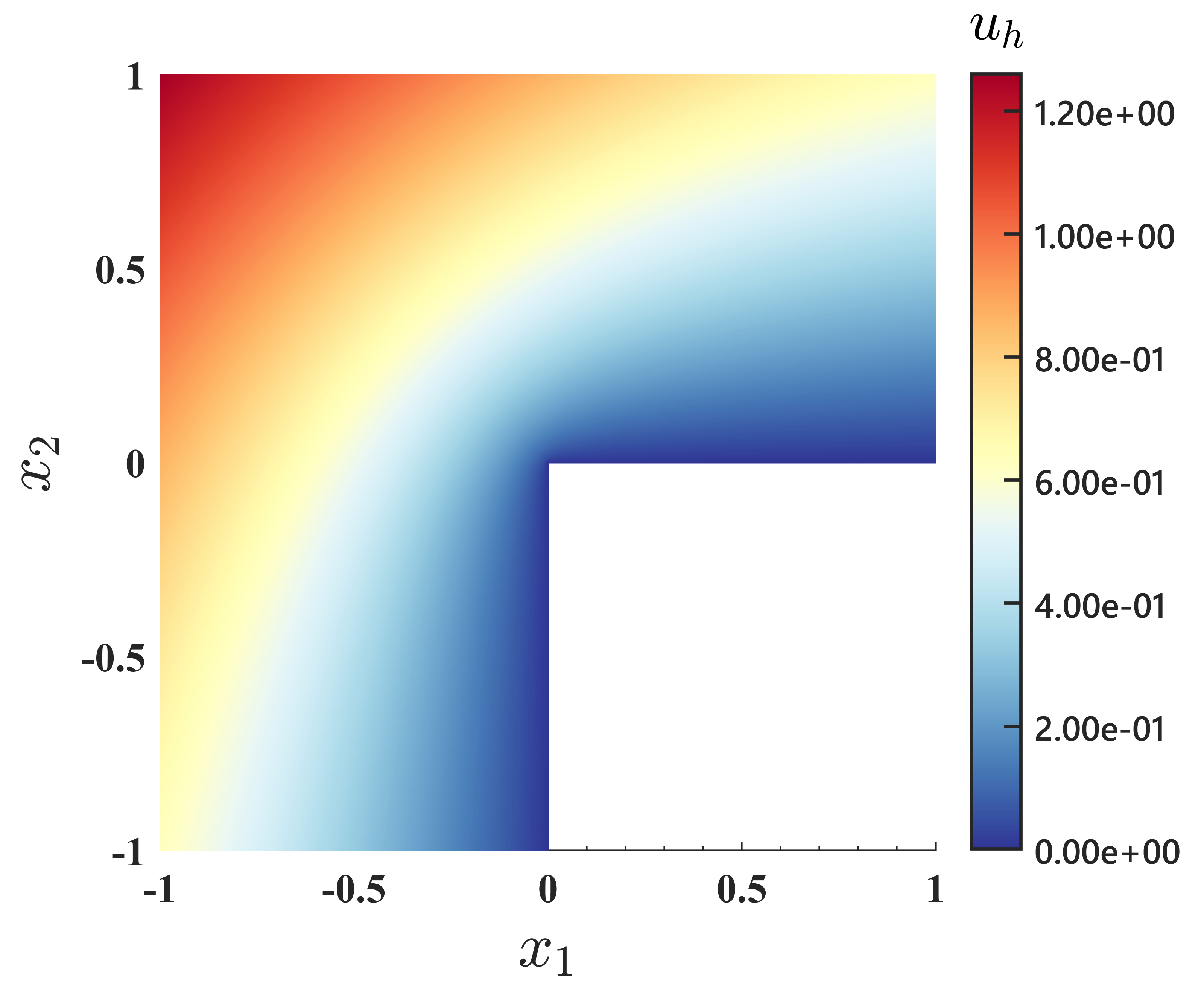}} \hspace{0.5cm}
\subfigure[]{\label{fig:ContourSemiuniform}\includegraphics[width=0.4\textwidth]{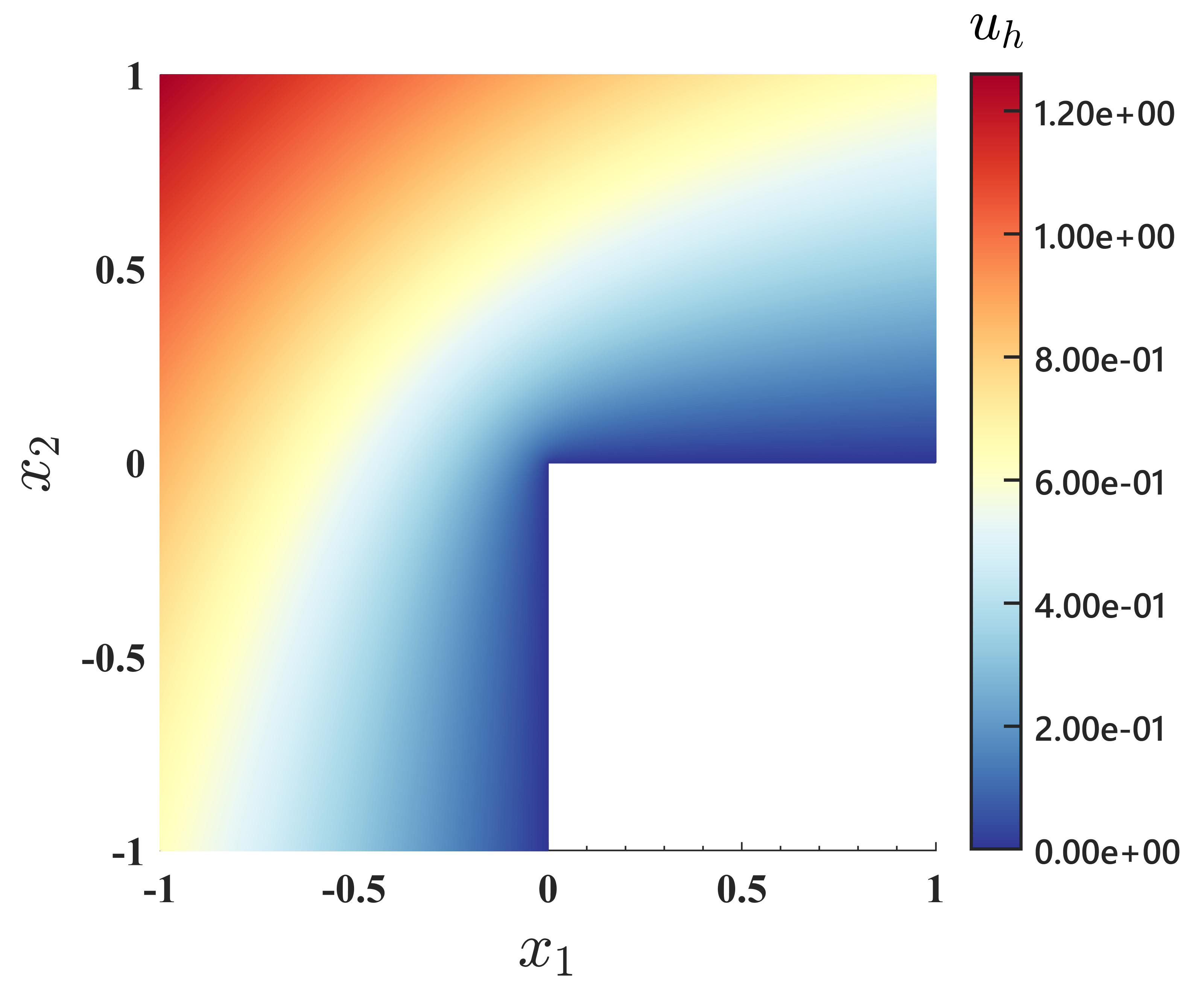}}
\caption{Contour plots of the VEM solution for the $u$ field. \textbf{(a)} Random and \textbf{(b)} semiuniform Polylla meshes.}
\label{figs:ContourPlots} 
\end{figure}

\begin{figure}[!bth]
\centering     
\subfigure[]{\label{fig:ContourRandomGrad}\includegraphics[width=0.4\textwidth]{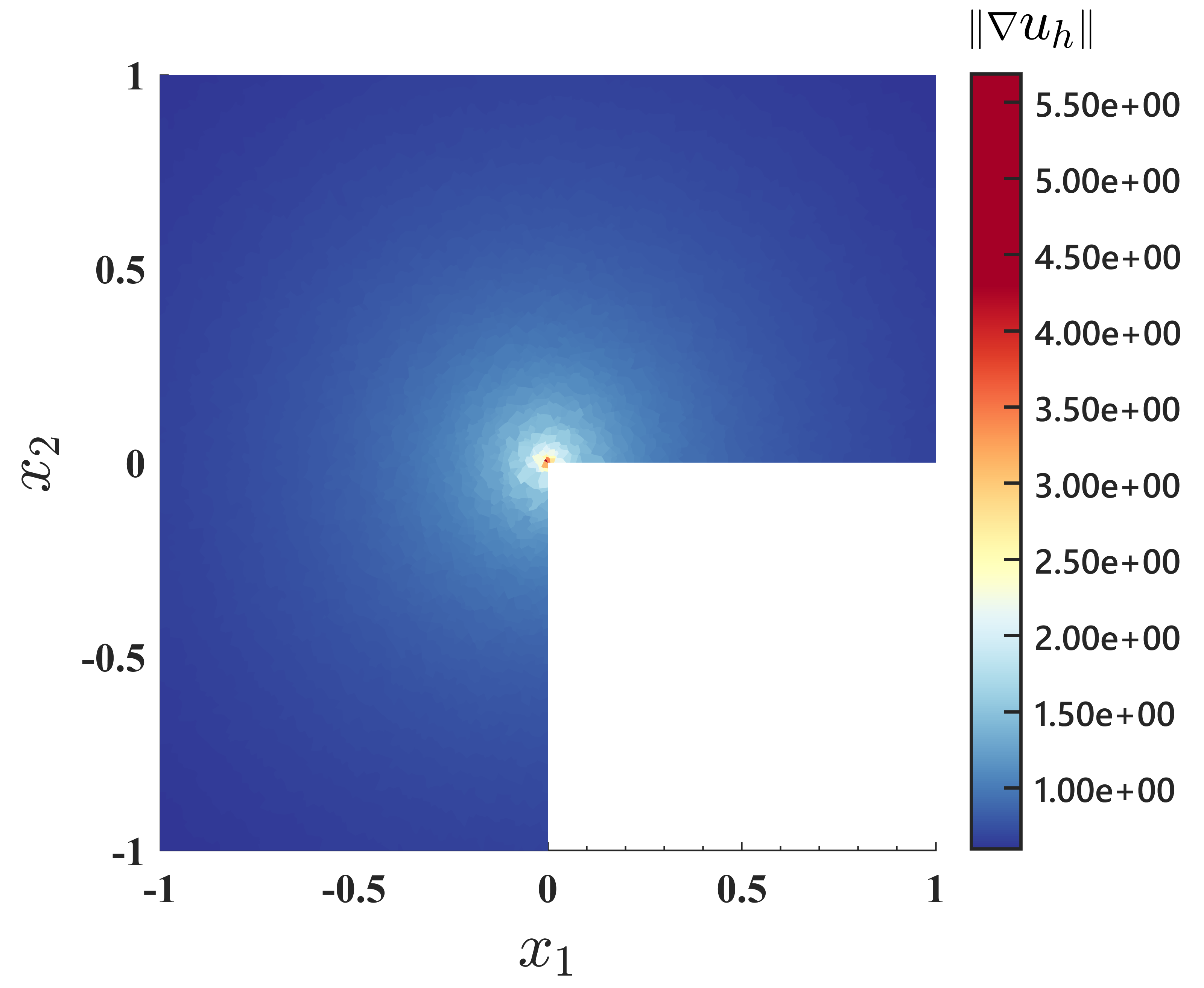}} \hspace{0.5cm}
\subfigure[]{\label{fig:ContourSemiuniformGrad}\includegraphics[width=0.4\textwidth]{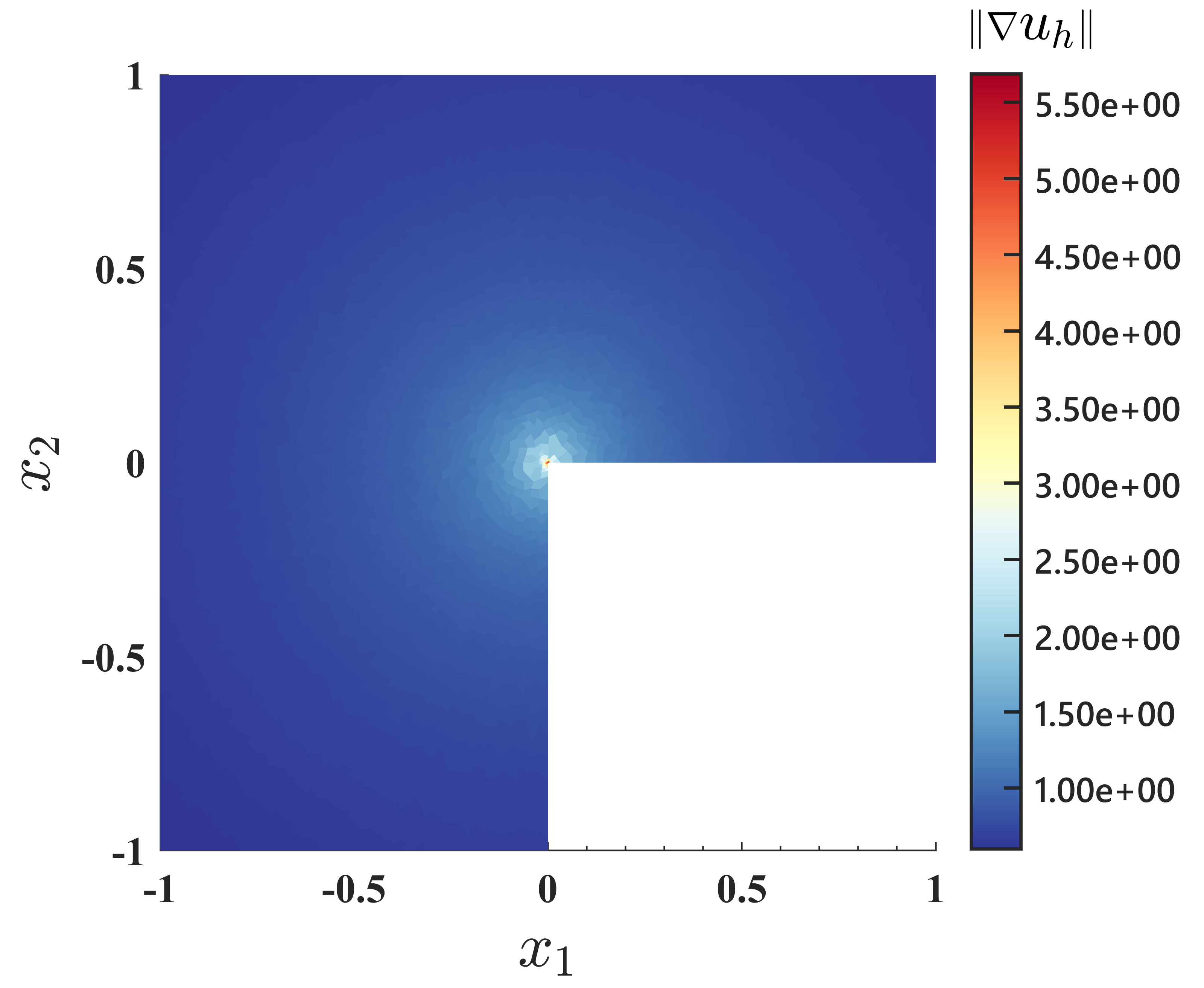}}
\caption{Contour plots of the VEM solution for the $\bm{\nabla}u$ field. \textbf{(a)} Random and \textbf{(b)} semiuniform Polylla meshes.}
\label{figs:ContourPlotsGrad} 
\end{figure}

%
%
\clearpage

\section{Conclusions and ongoing work}
\label{sec:conclusions}

We have presented Polylla, an algorithm to generate a new kind of polygonal mesh using terminal-edge regions. The proposed algorithm takes as input a triangulation with the required point density and generates a polygonal mesh without inserting new points. This generated mesh contains three times less polygons and half the number of points than the standard polygonal meshes based on the Voronoi diagram generated from the same input.

Some preliminary simulations were presented and the numerical results showed that the accuracy and computational cost of the VEM numerical solution using Polylla and Voronoi meshes are similar. We  think that the advantage of Polylla meshes over Voronoi meshes lies in the mesh generation process, where, once the initial triangulation is available, Polylla meshes are simpler and faster to generate since these meshes do not require any addition or removal of vertices.


The ongoing work is devoted to adding more capabilities to Polylla. In particular, since the algorithm retains the underlying triangulation during the polygon construction, we plan to allow further refinements inside the polygonal mesh in a forthcoming version of Polylla. We also are in the process of developing an extension of the Polylla algorithm to polyhedral meshes using  terminal edge-regions in 3D~\cite{HerviasHCF13}.


\section{Acknowledgements}
This research was supported by the Patagón supercomputer of Universidad Austral de Chile (FONDEQUIP EQM180042). The second author thanks to Fondecyt project No 1211484 and the first author to Anid doctoral scholarship 21202379.
\section*{Declarations}

Some journals require declarations to be submitted in a standardised format. Please check the Instructions for Authors of the journal to which you are submitting to see if you need to complete this section. If yes, your manuscript must contain the following sections under the heading `Declarations':

\begin{itemize}
\item Funding: This project was funded by Fondecyt Grant No 1211484 and Anid doctoral scholarship 21202379.
\item Conflict of interest/Competing interests (check journal-specific guidelines for which heading to use): The authors report no conflict of interest.
\item Ethics approval : Not applicable
\item Consent to participate : Not applicable
\item Consent for publication : Not applicable
\item Availability of data and materials : \url{https://github.com/ssalinasfe/Polylla-Mesh}
\item Code availability: Not applicable \url{https://github.com/ssalinasfe/Polylla-Mesh}
\item Authors' contributions : Not applicable
\end{itemize}

\noindent
If any of the sections are not relevant to your manuscript, please include the heading and write `Not applicable' for that section. 

\bigskip
\begin{flushleft}%
    Editorial Policies for:

\bigskip\noindent
Springer journals and proceedings: \url{https://www.springer.com/gp/editorial-policies}

\bigskip\noindent
Nature Portfolio journals: \url{https://www.nature.com/nature-research/editorial-policies}

\bigskip\noindent
\textit{Scientific Reports}: \url{https://www.nature.com/srep/journal-policies/editorial-policies}

\bigskip\noindent
BMC journals: \url{https://www.biomedcentral.com/getpublished/editorial-policies}
\end{flushleft}

\begin{appendices}

\section{Complex geometries}\label{Appendix:complexgeometries}

This appendix presents some examples of Polylla meshes generated using cartography maps obtained from the Library of National Congress of Chile~\footnote{\url{https://www.bcn.cl/siit/mapas_vectoriales}}, and the library PyShp~\footnote{\url{https://pypi.org/project/pyshp/}}. This library was used to read the \texttt{.shp} files, generate the PSLGs and store the information in \texttt{.poly} files. Fig~\ref{fig:loslagos} is a Polylla mesh generated from the PSLG of the {\em Los Lagos} Region, Fig~\ref{fig:magallanes} is a mesh of the {\em Magallanes} Region and Fig~\ref{fig:budi} is a mesh of the {\em Budi} lake.

\begin{figure}[]
    \centering
    \includegraphics[width=0.87\textwidth]{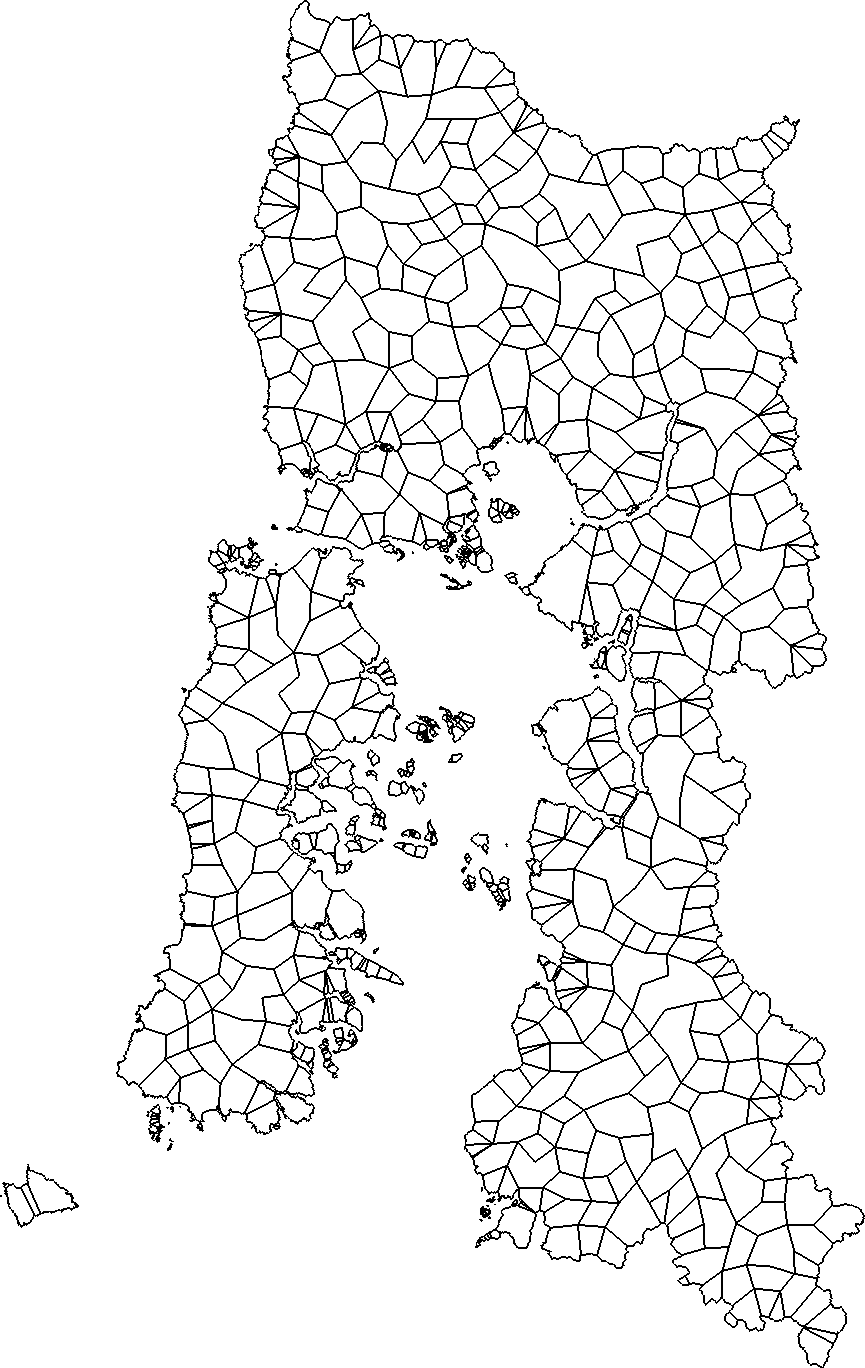}
    \caption{{\em Los Lagos} Region, Chile. The  .poly file contains $294540$ vertices and $294242$ segments. The initial triangulation was generated using Triangle~\cite{triangle2d} with maximum area size of $100000000$ as option.  The resulting triangulation was a conforming Delaunay triangulation with  $309264$ vertices, $309305$ triangles and $618272$ edges. The final Polylla mesh contains $309264$ vertices, $12578$ polygons and $321534$ edges}
    \label{fig:loslagos}
\end{figure}

\begin{figure}[]
    \centering
    \includegraphics[width=\textwidth]{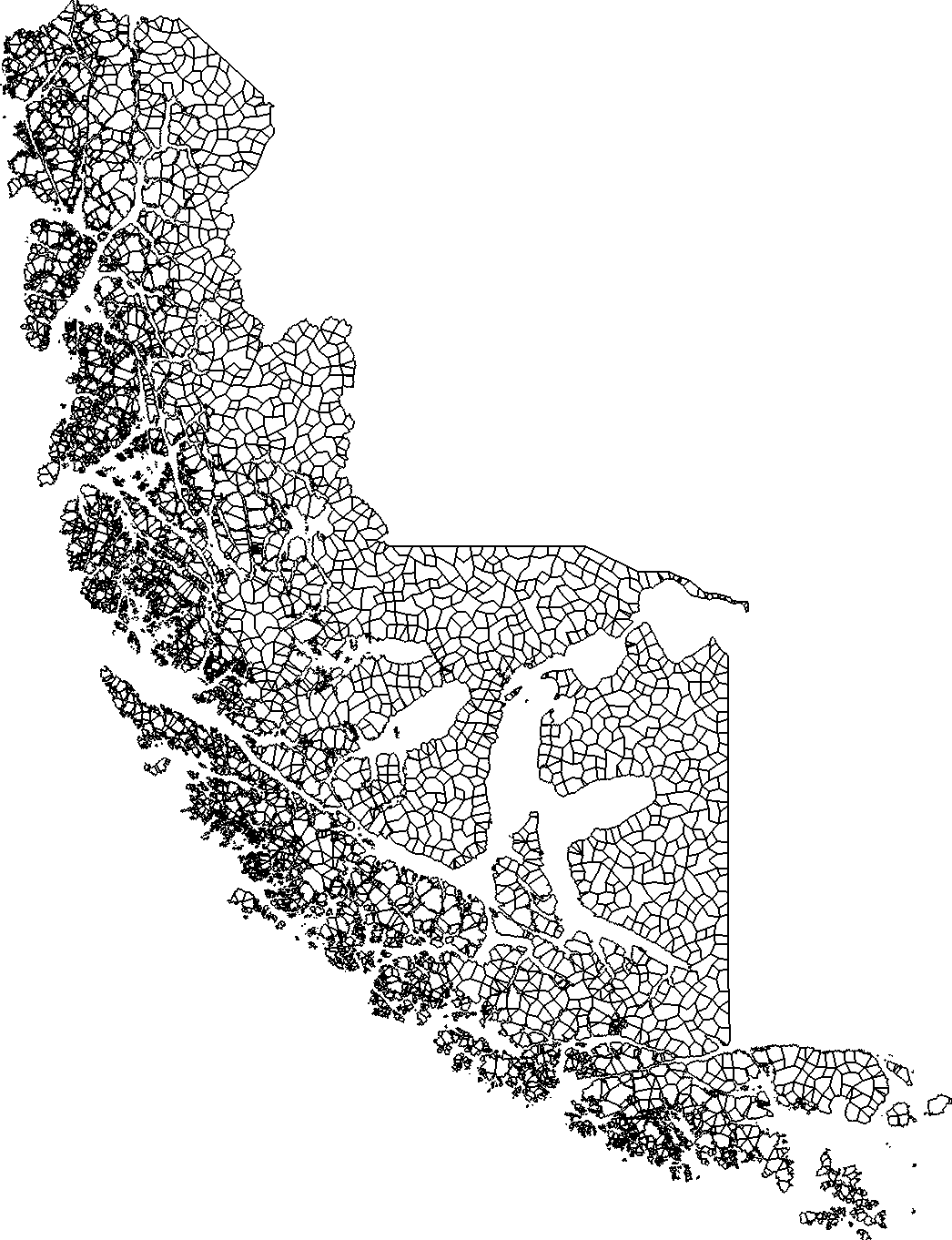}
    \caption{{\em Magallanes} Region, Chile. The .poly file contains $1130733$ vertices and $1130752$ segments. The initial triangulation was generated using Triangle~\cite{triangle2d} with maximum area size of $100000000$ as option. The resulting conforming Delaunay triangulation contains $1148594$ vertices, $1182133$ triangles and $2324968$ edges. The final Polylla mesh contains $1148594$ vertices,  $45300$ polygons and $1186451$ edges.}
    \label{fig:magallanes}
\end{figure}

\begin{figure}[]
    \centering
    \includegraphics[width=\textwidth]{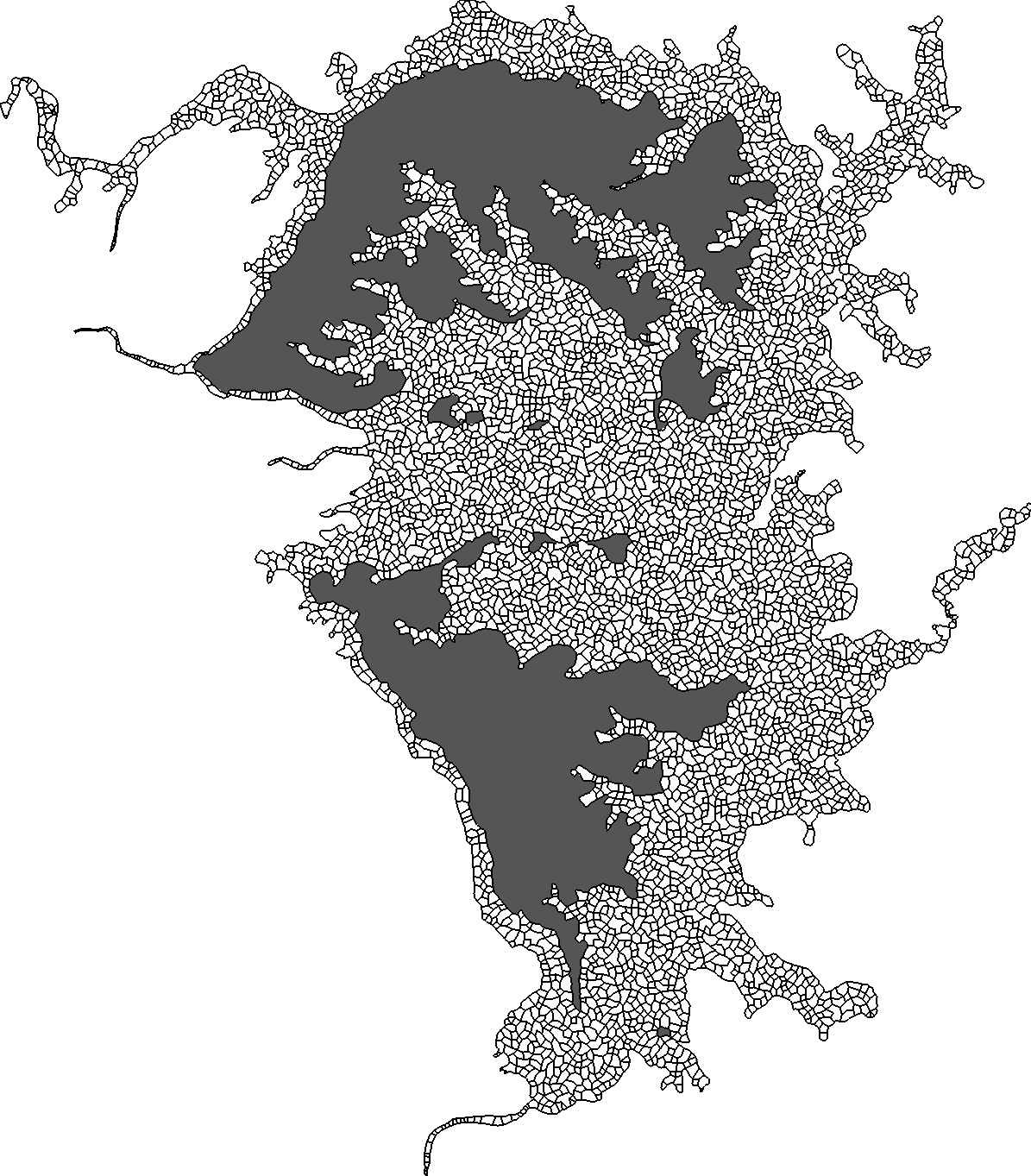}
    \caption{{\em Budi} Lake, Araucanía Region, Chile. The  .poly file contains $5794$ vertices, $5794$ constrained edges and $10$ holes. The initial triangulation was generated using Triangle~\cite{triangle2d} with maximum area size of $10000$ as option. The resulting conforming Delaunay triangulation contains $12608$ vertices, $19406$ triangles and $32023$ edges. The final Polylla mesh contains $12608$ vertices, $5169$ polygons, $17768$ edges and $10$ holes (Islands of the lake). Grey polygons are holes.}
    \label{fig:budi}
\end{figure}

\end{appendices}

\clearpage


\end{document}